\crefname{remark}{Remark}{Remarks}
\crefname{hypothesis}{Hypothesis}{Hypotheses}
\pgfplotsset{compat=1.18}
\definecolor{my-dark-red}{RGB}{183, 28, 28}
\definecolor{my-red}{RGB}{244,67,54}
\definecolor{my-pink}{RGB}{233,30,99}
\definecolor{my-purple}{RGB}{156,39,176}
\definecolor{my-deep-purple}{RGB}{103,58,183}
\definecolor{my-indigo}{RGB}{63,81,181}
\definecolor{my-blue}{RGB}{33,150,243}
\definecolor{my-light-blue}{RGB}{3,169,244}
\definecolor{my-cyan}{RGB}{0,188,212}
\definecolor{my-teal}{RGB}{0,150,136}
\definecolor{my-green}{RGB}{76,175,80}
\definecolor{my-light-green}{RGB}{139,195,74}
\definecolor{my-lime}{RGB}{205,220,57}
\definecolor{my-yellow}{RGB}{255,235,59}
\definecolor{my-amber}{RGB}{255,193,7}
\definecolor{my-orange}{RGB}{255,152,0}
\definecolor{my-deep-orange}{RGB}{255,87,34}
\definecolor{my-brown}{RGB}{121,85,72}
\definecolor{my-grey}{RGB}{158,158,158}
\definecolor{my-blue-grey}{RGB}{96,125,139}
\definecolor{my-lipics-grey}{rgb}{0.6,0.6,0.61}
\newlength{\xdist}
\newlength{\ydist}
\pgfplotsset{
  cycle list/Dark2-6,
  cycle multiindex* list={
    mark=*\nextlist
    Dark2-6\nextlist
    every mark/.append style={
      draw=.!60!black, fill=.!60!black, opacity=0.3, mark size=1pt
    }\\
    \nextlist
  },
}
\newcommand{\mwisp}[0]{\textsc{Maximum Weight Independent Set}}
\newcommand{\MWIS}[0]{MWIS}
\newcommand{\MIS}[0]{{MIS}}
\newcommand{\maxis}[0]{maximal independent set}
\newcommand{\MWISP}[0]{\textsc{MWISP}}
\newcommand{\kamis}[0]{\textsf{KaMIS\_wB\&R}}
\newcommand{\dKaMinPar}[0]{\texttt{dKaMinPar}}
\newcommand{\kamping}[0]{\texttt{KaMPIng}}
\newcommand{\htwis}[0]{\textsf{HtWIS}}
\newcommand{\KaDisReduS}[0]{\textsf{DisReduS}}
\newcommand{\KaDisReduA}[0]{\textsf{DisReduA}}
\newcommand{\KaDisReduSA}[0]{\textsf{DisReduS/A}}
\newcommand{\sG}[0]{\textsf{GS}}
\newcommand{\aG}[0]{\textsf{GA}}
\newcommand{\sRG}[0]{\textsf{RGS}}
\newcommand{\aRG}[0]{\textsf{RGA}}
\newcommand{\sRnP}[0]{\textsf{RnPS}}
\newcommand{\aRnP}[0]{\textsf{RnPA}}
\newcommand{\msgqueue}[0]{\textsf{message-queue}}
\newcommand{\myRHG}[0]{\texttt{RHG}}
\newcommand{\myRHGBig}[0]{\texttt{RHG*}}
\newcommand{\myRGG}[0]{\texttt{RGG}}
\newcommand{\myGNM}[0]{\texttt{GNM}}
\newcommand{\ie}[0]{i.e.,\xspace}
\newcommand{\etal}[0]{et al.\xspace}
\newcommand{\reductionRow}[3][l]{\textit{#2} & \makecell[#1]{#3}}
\newcommand{\reducedGraphRow}[2][l]{\reductionRow[#1]{Reduced Graph}{#2}}
\newcommand{\offsetRow}[2][l]{\reductionRow[#1]{Offset}{#2}}
\newcommand{\reconstructionRow}[2][l]{\reductionRow[#1]{Reconstruction}{#2}}
\newcommand{\mc}[3]{\multicolumn{#1}{#2}{#3}}
\Crefname{subsection}{Section}{Sections}
\DeclareMathOperator{\w}{\omega}
\DeclareMathOperator{\rank}{\ensuremath{rank}}
\DeclareMathOperator{\I}{\ensuremath{\mathcal{I}}}
\DeclareMathOperator{\G}{\ensuremath{\mathcal{G}}}
\DeclareMathOperator{\addedI}{\ensuremath{\tilde{\mathcal{I}}}}
\newcommand{\ghosts}[1]{\ensuremath{V^g_{#1}}}
\newcommand{\rghosts}[1]{\ensuremath{V^{g\prime}_{#1}}}
\newcommand{\cV}[0]{\ensuremath{\overline{V}}}
\newcommand{\N}[0]{\ensuremath{N}}
\newcommand{\set}[1]{\ensuremath{\{#1\}}}
\newcommand{\solw}[1]{\ensuremath{\alpha(#1)}}
\newcommand{\card}[1]{\ensuremath{|#1|}}
\newcommand{\recv}[0]{\ensuremath{\mathcal{R}}}
\newcommand{\pe}[1]{\ensuremath{#1}}
\newtheorem{@reduction}{Reduction}[section]
\newtheorem{@distributedreduction}{Reduction}[section]
\newenvironment{distributedreduction}[1][\unskip]{\begin{@distributedreduction}[#1]\noindent\newline\noindent\hspace*{-.18cm}}{\end{@distributedreduction}}
\newif\ifshowFull
\begin{document}

\newcommand\relatedversion{}
\ifshowFull
\else
\renewcommand\relatedversion{
  \thanks{The full version of the paper can be accessed at \protect\url{https://arxiv.org/abs/0000.00000}}
} %
\fi

\title{\Large Distributed Reductions for the Maximum Weight Independent Set Problem\relatedversion}
    \author{Jannick Borowitz\thanks{Karlsruhe Institute of Technology (\email{jannick.borowitz@kit.edu}, \email{schimek@kit.edu}, \url{https://kit.edu}).}
    \and Ernestine Gro\ss mann\thanks{University of Heidelberg (\email{e.grossmann@informatik.uni-heidelberg}, \url{https://uni-heidelberg.de}).}
    \and Matthias Schimek\footnotemark[2]}

\date{}

\maketitle

\fancyfoot[R]{\scriptsize{Copyright \textcopyright\ 2026 by SIAM\\
Unauthorized reproduction of this article is prohibited}}

\begin{abstract} 
 Finding maximum-weight independent sets in graphs is an important NP-hard optimization problem.
 Given a vertex-weighted graph $G$, the task is to find a subset of pairwise non-adjacent vertices of $G$ with maximum weight.
 Most recently published practical exact algorithms and heuristics for this problem use a variety of data-reduction rules to compute (near-)optimal solutions.
 Applying these rules results in an equivalent instance of reduced size. An optimal solution to the reduced instance can be easily used to construct an optimal solution for the original input.

 In this work, we present the first distributed-memory parallel reduction algorithms for this problem, targeting graphs beyond the scale of previous sequential approaches.
 Furthermore, we propose the first distributed reduce-and-greedy and reduce-and-peel algorithms for finding a maximum weight \hbox{independent set heuristically.}

 In our practical evaluation, our experiments on up to $1024$ processors demonstrate good scalability of our distributed reduce algorithms
 while maintaining good reduction impact.
 Our asynchronous reduce-and-peel approach achieves an average speedup of $33\times$ over a sequential state-of-the-art reduce-and-peel approach  on 36 real-world graphs
 with a solution quality close to the sequential algorithm.
 Our reduce-and-greedy algorithms even achieve average speedups of up to $50\times$ at the cost of a lower solution quality.
 Moreover, our distributed approach allows us to consider graphs with more than one billion vertices and 17 billion edges.

  The source code is available on Github\footnote{The latest version of the source-code is available on \protect\url{https://github.com/jabo17/kadisredu}.} and as a permanent record on Zenodo\footnote{Permanent records of the source code and the reproducibility repository are available at \protect\url{https://doi.org/10.5281/zenodo.17296045} and \protect\url{https://doi.org/10.5281/zenodo.17310080}, respectively.}.
\end{abstract}

\section{Introduction.}
\label{sec:introduction}
A {\mwisp} (\MWIS) is a subset of vertices of maximum weight, where no two vertices are adjacent.
The problem applies to many fields of interest, such as map-labeling~\cite{TemporalMapLaBarth2016, EvaluationOfLGemsa2014} or vehicle routing~\cite{NewInstancesFDong2021}.
As an NP-hard problem~\cite{SomeSimplifiedGarey1974}, finding an {\MWIS} can be challenging.
Depending on the application, a \emph{maximal} independent set of large weight,~\ie where no vertex can be added anymore, can be sufficient.
Most heuristic and exact solvers use \emph{reduction} algorithms as preprocessing or as a subroutine in between.
Reduction approaches test and apply data reduction rules to reduce the problem size while maintaining optimality.
To be more specific, the {\MWIS} can be reconstructed given an {\MWIS} for \hbox{the reduced graph.}

As instance sizes have grown significantly in recent years, we soon encounter the problem of instances not fitting into the memory of a single machine.
This further complicates the task of finding high-quality independent sets.
To overcome this problem, the graph may be represented in a distributed-memory machine where each process stores only a subgraph of the input graph.
It is also conceivable that the graph is already stored in distributed memory when computing an \MWIS{} as a building block of a more complex (distributed) application.
To find (near-)optimal solutions in this case, we develop distributed reduction techniques to apply data reduction rules in a distributed fashion.
Many sequential data reduction rules operate very locally, making them prime candidates \hbox{for distributed-memory applications.}

\paragraph{Contribution.}
\label{sec:contribution}
We present the first data reduction algorithms for the \MWIS{} problem in the distributed-memory setting.
Furthermore, we use these reduction algorithms to develop the first distributed reduce-and-greedy and reduce-and-peel solvers for this problem.
For our different algorithms, we present synchronous and asynchronous approaches, as well as a more powerful configuration with graph partitioning, which further reduces the size of graphs at the cost of \hbox{additional running time.}

In our practical evaluation on up to 1\,024 processors, we show that our algorithms yield substantial speedups over the sequential state-of-the-art algorithm while reaching a solution close to it.
Furthermore, we demonstrate that our approach is able to reduce massive graphs beyond the scale of the current \hbox{state-of-the-art (sequential) algorithm.}

\paragraph{Paper Outline.}%
After discussing related work (\Cref{sec:relatedWork}), we introduce basic notation and concepts in \Cref{sec:preliminaries}.
In \Cref{sec:distributedReductions}, we first introduce a distributed reduction model (\Cref{sec:distributedReductionModel}) which is then used to define specific distributed reduction rules (\Cref{sec:distributedReductionRules}).
We then describe two distributed reduction algorithms, using the previously defined rules to reduce the input graph in \Cref{sec:distReductionAlgos}.
\Cref{sec:distributedSolvers} briefly discusses how the reduction algorithms can be extended to complete \MWIS{} solvers.
In \Cref{sec:experiments}, we present the results of our experimental evaluation, followed by a brief conclusion in \Cref{sec:conclusion}.

\section{Related Work}
\label{sec:relatedWork}

Data reduction rules have become a very important tool in solving the \textsc{Maximum (Weight) Independent Set} problem. There is a long list of sequential exact algorithms that make use of data reduction rules~\cite{BranchAndReduAkiba2016,ThereAndBackFigiel2022,BoostingDataRGellne2021,TargetedBranchHespe2021,ExactlySolvingLamm2019,TargetedBranchLanged2024, ApplicationOfLiuJ2023,EfficientReducXiao2021}.
The most widely used exact solver for the {\MWISP} is probably the branch-and-reduce solver {\kamis} by Lamm~\etal~\cite{ExactlySolvingLamm2019}.
We use {\kamis} to find exact solutions for subproblems in our algorithms.
Additionally, many heuristics benefit from data reductions, and these techniques are incorporated in many recent works~\cite {AcceleratingLoDahlum2016,FindingNearOpGrossma2023,TowardsComputiGuJi2021,FindingNearOpLamm2017}. Next to these heuristics that directly apply data reduction rules, various local search techniques have been introduced for the problem~\cite{AMetaheuristicDong2022,AcceleratingReGrossma2024,AHybridIteratNoguei2018}. These methods themselves do not provide direct data reductions. However, in~\cite{AcceleratingReGrossma2024}, the authors show that for many instances, these approaches can also benefit from applying data reductions initially before starting the local search.
For a detailed overview of solvers and data reduction rules for the \textsc{Maximum Weight Independent Set} problem, we refer to a survey by Gro\ss mann~\etal\cite{AComprehensiveGrossma2024}.
For more general information on advances in data reductions, we refer the reader to the survey by Abu-Khzam~\etal~\cite{RecentAdvancesAbuKh2022}.

Although there is considerable work on data reduction rules for sequential approaches, there is a notable lack of research that applies these promising techniques to distributed algorithms for the \textsc{Maximum Weight Independent Set} problem.
This work represents a first step in this direction, where we utilize existing data reduction rules to apply them in a distributed setting.

\paragraph{Parallel and Distributed Approaches.}\label{sec:parallel-approaches}
A simple parallel randomized algorithm for finding a {\maxis} is due to Luby~\cite{ASimpleParallLuby1985}, which improves upon previous results~\cite{DBLP:conf/stoc/KarpW84}.
In each iteration of the algorithm, an independent subset of vertices $I$ is computed in parallel and added to the solution while $I$ with all its neighbors is removed from the graph.
Conflicts arising during the computation of $I$ are resolved by using the vertex degree and the vertex ID for breaking ties.
The algorithm requires $\mathcal{O}(\log{n})$ iterations in expectation, until the graph becomes empty.
Using $\mathcal{O}(m)$ processors, this results in an expected (poly-)logarithmic running time in the PRAM model.

Luby's algorithm is also the starting point for multiple distributed-memory parallel algorithms~\cite{8MaximalInde2000, DBLP:journals/dc/MetivierRSZ11, DistributedNeaWang2023} solving the {\maxis} problem.

For the weighted case, a distributed algorithm was proposed by Joo~\etal\cite{DistributedGreJooC2016}, which is specifically designed for wireless scheduling applications.

Next to these purely heuristic algorithms, Gro\ss mann~\etal\cite{AcceleratingReGrossma2024} contributed a shared-memory iterated local-search solver \textsf{CHILS} that consists of two phases.
In the first phase, $k$ different solutions are determined or improved with a baseline local-search algorithm on the input instance.
In the second phase, $k$ new solutions are computed on the \emph{difference-core}. It is an induced subgraph containing vertices $v\in V$ where there is a solution on the input instance that contains $v$, and there is also a solution that does not contain $v$.
Afterward, the solutions on the difference-core can be embedded back into the solutions on the input instance.
The $k$ solutions on the input instance and the difference-core can be computed in parallel, where all threads read from the graph, but do not modify it.
In the end, the best solution found is returned.

\subparagraph*{Parallel and Distributed Reductions.}
Although we are not aware of any distributed- or shared-memory parallel reduction-based algorithms for the {\MWIS} problem, two approaches exist for the unweighted problem.

Hespe~\etal\cite{ScalableKernelHespe2019} present a shared-memory algorithm.
Their key idea is to partition the input graph so that each processor can apply reductions at its vertex-disjoint block independently.
This idea is based on the observation that the data reduction rules by Akiba and Iwata~\cite{BranchAndReduAkiba2016} and Butenko~\etal\cite{FindingMaximumButenk2002} act very locally.
Thus, data reductions can be applied blockwise,~\ie each processor applies reductions within its block, given a shared graph representation.
Close to the border, reductions can only be applied with restrictions, as they cannot modify the neighborhood or vertices at other blocks without risking race conditions.
In addition to the local reductions, the authors also implement a parallel global reduction using parallel matching algorithms.
In their bachelor's thesis, George~\cite{DistributedKerGeorge2018} proposes a distributed-memory reduction algorithm.
Again, the input graph is partitioned into $p$ vertex-disjoint blocks where each of the $p$ processes owns exactly one block.
The key idea of their reduction algorithm is to reduce all degree one and two vertices exhaustively using the reduction rules by Chang et al.~\cite{ComputingANeaChang2017}.

While their approach works well in terms of running time for street networks, it does not scale for many other graph families.
Furthermore, experiments were only conducted on up to $64$ processors.

\section{Preliminaries.}
\label{sec:preliminaries}

We consider undirected, vertex-weighted graphs $G=(V,E,\w)$ with vertex set $V=\set{1,\,\ldots,\, n}$, edges $E\subseteq \binom{V}{2}$, and weight function $\w:V\to \mathbb{N}_{\geq 0}$. We extend $\w$ to sets by $\w(U)=\sum_{u\in U}{\w(u)}$ for $U\subseteq V$.
Let $\N(v)=\set{u\in V: \set{v,u}\in E}$ be the \emph{neighborhood} of a vertex $v\in V$ and $\N[v]=\N(v)\cup \set{v}$ its \emph{closed neighborhood}.
For a subset of vertices $U\subseteq V$, the definitions generalize to $\N(U)=\bigcup_{u\in U}{\N(u)}\setminus U$ and $\N[U]=\N(U)\cup U$ of $U$, respectively.
The degree of a vertex $v\in V$ is \hbox{defined as $\deg(v)= \card{\N(v)}$} and ${\Delta(G)= \max \set{\deg(v):v\in V}}$.
Often we consider subgraphs that are \emph{induced} on a subset of vertices $U\subseteq V$.
Given $G$ and $U$ the \emph{induced subgraph} is defined as ${G[U]= (U, E\cap \binom{U}{2})}$.
For simplicity, we use the common notation ${G-v = G[V\setminus \set {v}]}$ \hbox{and $G-U = G[V\setminus U]$.}
A subset $C\subseteq V$ is a \emph{clique} in $G$ if all vertices of $C$ are pairwise adjacent,~\ie $\set{u,v}\in E$ for all $u,v\in C$.

An \emph{independent set} $\I\subseteq V$ is a set that contains no adjacent vertices.
Further, $\I$ is maximal if no further vertex of $V\setminus \I$ can be added to $\I$.
It is a \emph{maximum independent set} (\MIS) if there is no independent set of $G$ that contains more vertices.
An independent set $\I$ is called a \emph{maximum weight independent set} (\MWIS) if there exists no independent set of larger weight.
We denote the weight of a maximum weight independent set as~$\solw{G}$.

\paragraph{Machine Model and Input Graph.}
\label{sec:machineModelAndInput}
We assume a \emph{distributed-memory} machine model consisting of $p$ processing elements (PEs) that allow single-ported, point-to-point communication.

The input graph is represented as a vertex-weighted, directed graph in the adjacency array format together with a 1D-vertex partition $(V_1,\ldots,V_p) = V$.
More precisely, we represent each undirected edge $\set{u,v}\in E$ by two directed edges, $(u,v)$ and $(v,u)$.
Usually, the partition is chosen such that the number of vertices or edges per PE are balanced.

Each PE $i$ stores a subgraph where PE $i$ \emph{owns} the vertices of the partition $V_i$.
Every vertex $u\in V_i$ has $\rank(u)=i$ and is called a \emph{local vertex} of PE $i$.
The subgraph at PE $i$ consists of all the edges $\set{u,v}\in E$ with vertex $u \in V_i$.
An edge $e = \set{u,v}$ is local, if $u$ and $v$ are local vertices.
Otherwise, $e$ is a \emph{cut}-edge.
A local vertex $u$ which is incident to a cut-edge $\set{u,v}$, is an \emph{interface vertex}.
It is replicated as \emph{ghost vertex} (or simply \emph{ghost}) at PE $\rank(v)$.
The set $\ghosts{i}\subseteq V$ are the ghosts of PE $i$.
We refer to the conjunction of ghosts and interface vertices of PE $i$ as border vertices of PE~$i$.

We often need to know weights of ghosts.
Therefore, we replicate the weights with the interface vertices where $\w_i(u)$ denotes the replicated weight of a ghost $u\in \ghosts{i}$.
Furthermore, we replicate local parts of the neighborhoods of a ghost $u\in\ghosts{i}$,~\ie $\N(u)\cap V_i$ is stored as the neighborhood of $u$.
Overall, this yields at PE $i$ a subgraph $G_i=(\cV_i, E_i, \w_i)$ with vertices $\cV_i= V_i\cup \ghosts{i}$ \hbox{and edges ${E_i=\set{\set{u,v}\in E: u\in \cV_i, v\in V_i}}$.}
Furthermore, for $u\in V_i$, we define the set of \emph{adjacent PEs of} $u$ as $\recv(u)=\set{\rank(v):v\in \N(u)\cap \ghosts{i}}$.

\section{Distributed Reductions.}\label{sec:distributedReductions}
Many sequential data reduction rules act very locally~\cite{ScalableKernelHespe2019}.
They can often decide whether a rule applies given only the neighborhood of a vertex.
Moreover, if such a rule applies, the modification to the graph is also usually restricted to that neighborhood.

Regarding parallelization, the property of locality is helpful as each PE $i$ can locally reduce (parts of) its subgraph $G_i$ of the overall input graph $G$.
In the following, we propose distributed reductions for $G_i$ at PE $i$ which are based on sequential counterparts.

After briefly discussing the handling of non-border vertices (\Cref{sec:reduction-non-border-vertices}), we introduce the overall reduction model (\Cref{def:distRedModel}) defining a set of allowed graph modifications which is not depending on any problem specific reductions and will also allow the modification of border vertices.
Then, in \Cref{sec:distributedReductionRules}, we describe the different distributed reductions for the MWIS problem.

\subsection{Reducing Non-Border Vertices.}
\label{sec:reduction-non-border-vertices}

Since the sequential data reduction rules we consider here all act locally, we can apply these rules on each $G_i$ independently, as long as they do not change the border. This leads us to the first distributed reduction, a meta-rule that allows PE $i$ to apply any sequential data reduction rule to $G_i$ if the rule acts sufficiently locally.
In general, the meta rule can be used to transfer (new) reduction rules to the distributed-memory model, which are not covered by the work out-of-the-box.
\begin{distributedreduction}[Meta Rule]
	\label{dred:MetaRule}
	Consider an application of a sequential reduction rule in $G_i$.
	Assume ghosts and interface vertices were not read to decide whether the reduction rule is applicable.
	Then, this reduction application is correct in $\G$.
\end{distributedreduction}
\begin{proof}
	If only local non-interface vertices are read to decide whether the reduction rule is applicable in $G_i$,  the subgraph considered is local and therefore not modified by any other PE.
	Thus, we are allowed to apply the sequential reduction rule in $\G$, yielding a reduced graph $\G'$.
  To represent $\G'$ in the distributed memory model, we only need to modify $G_i$ because the \hbox{reduction is local.}
\end{proof}

However, we also want to apply reductions to border vertices.
Reducing vertices closer to or at the border requires considering what other PEs might have reduced at the border.
Reducing border vertices is more challenging because information about the border is limited.
Furthermore, concurrent modifications at the border can lead to incorrect assumptions that affect the reduction decisions.
To that end, we propose a distributed reduction model in \Cref{sec:distributedReductionModel}, which specifies the extent to which distributed reductions can modify the border.
In \Cref{sec:distributedReductionRules}, we present distributed reduction rules that reduce the graph close to and at the border.

\subsection{Distributed Reduction Model.}
\label{sec:distributedReductionModel}
In our distributed reduction model, a distributed reduction rule is a rule that operates on the local graph $G_i$ at PE $i$.
For border vertices, we restrict the allowed modifications of such reduction applications.
For example, we do not allow the insertion of new cut edges or an increase in vertex weight.
These restrictions allow us to make assumptions about the modifications at the border by other PEs.
As a result, our distributed reductions can often exclude interface vertices or even propose to include an interface vertex.
Moreover, this can be done locally,~\ie we do not need to communicate between the reduction steps to approve the reduction decisions.
Communication is only required to agree on the proposed vertices for inclusion and to synchronize reduction progress, allowing potentially more reductions to be applied again.

\textit{Proposal to Include.}
An important reduction operation is the proposition to include an interface vertex into an MWIS.
We assume that an interface vertex $v$ can only be \emph{proposed to be included} if $\w_i(v)\geq \solw{G_i[N_i(v)]}$.
A proposal to include $v$ only succeeds if none of its ghost neighbors are proposed to be included or if $v$ wins a tie-breaking between the conflicting proposals.
More formally, if $v$ is proposed to be included, $v$ joins the set of \emph{solution proposals}~$\addedI_i$ and $N_i[v]$ is reduced from $G_i$,~\ie ${G_i'=G_i-N_i[v]}$.
Once the adjacent PEs finished reducing $N_i[v]$, PE $i$ can inquiry whether a ghost neighbor $u\in N_i(v)\cap \ghosts{i}$ was proposed to be included at PE $j=\rank(u)$,~\ie $u\in \addedI_j$.
If such a neighbor $u$ exists, we tie-break between these two vertices and choose $v$ if and only if $i<j$.
A PE can only reduce ghosts if a neighbor (interface vertex) is proposed to be included.

For sequential reductions, degree-one vertices are always reducible. However, in the distributed model, this is not the case directly.
To still be able to reduce these vertices, we introduce the move operation.

The following definition specifies the allowed modifications in $G_i$ by PE $i$.

\begin{definition}[Distributed Reduction Model]\label{def:distRedModel}
	PE $i$ can modify the subgraph $G_i$ arbitrarily, except that modifications for an interface vertex $v\in V_i$ are restricted to the following ones:
	\begin{itemize}
		\item $v$ can be reduced, if it is excluded or proposed to be included,
		\item the weight of $v$ can be decreased,
		\item and $N_i(v)$ can be changed if no ghost is added to the neighborhood.
	\end{itemize}
  Moreover, ghosts are only reduced if a neighbor is proposed for inclusion.
\end{definition}

After applying these changes, the (modified) global graph $\G'=(V',E',\w')$ is defined as the union of the modified local graphs $G'_1,\,\ldots,\,G'_p$.
More precisely, we have $V'=\bigcup_i V_i'\setminus (\bigcup_{j\neq i} \ghosts{j}\setminus \rghosts{j})$ %
with edges $E'=\bigcup_i E_i' \cap \binom{V'}{2}$ and $\w'(v)=\w_{\rank(v)}'(v)$ for $v\in V'$.
Note that the removal of the vertices $\left(\bigcup_{j\neq i} \ghosts{j}\setminus \rghosts{j}\right)$ from the union of all local vertices accounts for interface vertices $u$ being removed on a PE $j \neq \rank(u)$.
In the remainder of this work, we will refer to the current state of the modified graph as $\G$.%

In this section, we introduce different lemmata that are needed for the reductions. They describe key properties that follow from the distributed reduction model. We give an intuition for these here and the formal proofs 
\ifshowFull
in Appendix~\ref{appendix:sec:proofs}.
\else
in the full-paper version.
\fi

An important part of our reductions involves weight comparisons. Within the local graphs $G_i$, we know the exact weights of all vertices. Initially, each PE also knows the exact weights for ghost vertices, however, the weight of ghost vertices can be reduced. This means that each PE only knows an upper bound for the weights of ghost vertices.

\begin{lemma}[Upper Bound for Ghost Weights]\label{lem:weights}
	Let $v\in \ghosts{i}$.
	It follows that $\w_i(v)\geq \w_{\rank(v)}(v)$ or $v$ is globally reduced,~\ie $v\not\in V$.
\end{lemma}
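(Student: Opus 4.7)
The plan is a short invariant argument tracking the two quantities $\w_i(v)$ and $\w_{\rank(v)}(v)$ from the moment the ghost $v$ is instantiated at PE $i$, and showing that the inequality in the statement cannot be violated while $v$ remains a vertex of the global graph.

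At the moment of ghost replication the two weights agree, because (as described in \Cref{sec:machineModelAndInput}) PE $i$ receives the weight of $v$ directly from its owner PE $\rank(v)$. After that, two kinds of events can move the relevant quantities. First, PE $\rank(v)$ may modify $\w_{\rank(v)}(v)$; by \Cref{def:distRedModel} the only permitted weight modification of an interface vertex is a \emph{decrease}, so $\w_{\rank(v)}(v)$ is monotonically non-increasing while $v \in V$. Second, PE $i$ may touch the ghost $v$; the model forbids any change to $\w_i(v)$ apart from a reduction of the ghost itself, and such a reduction is permitted only when a local neighbor of $v$ is proposed to be included in the solution. In that situation the proposal also removes $v$ at its owner as part of the induced reduction of the closed neighborhood of the proposed vertex, so $v \notin V$ afterwards, which is exactly the second disjunct of the conclusion.

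Combining these observations, as long as $v$ remains a vertex of $V$ the owner weight $\w_{\rank(v)}(v)$ can only have moved downward from its initial value while $\w_i(v)$ has not changed, so $\w_i(v) \ge \w_{\rank(v)}(v)$; otherwise $v$ has been globally reduced. The main obstacle I anticipate is being precise about the fact that ghost weights are not refreshed upward by any rule, but this is immediate from the asymmetry built into \Cref{def:distRedModel}: even if one additionally allowed occasional downward synchronization of $\w_i(v)$ toward the owner's current value, the invariant would still be preserved, because such synchronization only tightens the bound.
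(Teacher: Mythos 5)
Your proof is correct and takes essentially the same route as the paper's: both arguments rest on the two facts that, under \Cref{def:distRedModel}, only the owner PE $\rank(v)$ may modify the weight of $v$ and only by decreasing it, while the replicated weight $\w_i(v)$ is never increased (your closing remark about downward synchronization correctly covers the weight-decrease update messages), so the initial equality at replication time degrades into the claimed inequality $\w_i(v)\geq \w_{\rank(v)}(v)$ unless $v$ is reduced. One small imprecision worth fixing: when the ghost $v$ is reduced at PE $i$ through an inclusion proposal, $v$ is \emph{not} ``removed at its owner'' at that moment --- the owner's local graph changes only once the status update is communicated --- but your conclusion $v\notin V$ still holds immediately, because the global reduced graph $\G$ is defined so that its vertex set excludes any vertex whose ghost copy has been removed at a non-owner PE.
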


Initially, the neighborhood of a local vertex in $G_i$ is identical to the neighborhood in $G$.
During the reduction applications, border vertices might be reduced by other PEs.
These reduced vertices may remain in $G_i$ until the reduction applications are communicated by the adjacent PEs.
Consequently, we always know all the remaining neighbors in $\G$ for a local vertex $v\in V_i$.
At worst $\N_i(v)$ contains additional globally reduced vertices.
\Cref{lem:neighborhood} summarizes this observation.

\begin{lemma}[Neighborhood]\label{lem:neighborhood}
	Let $v\in V$.
	If $v\in V_i$, it holds $N(v)\subseteq N_i(v)$ where the vertices of $N_i(v)\setminus N(v)$ are already reduced \hbox{by other PEs.}
	If $v\in \ghosts{i}$, it holds $N(v)\cap V_i \subseteq N_i(v)$.
\end{lemma}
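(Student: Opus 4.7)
My proof will proceed by induction on the sequence of reduction operations applied across all PEs, with both inclusions treated as joint invariants.

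For the base case, in the initial representation of $G$ we have $N_i(v)=N(v)$ for every $v\in V_i$ and $N_i(v)=N(v)\cap V_i$ for every $v\in \ghosts{i}$ by the input format of Section~\ref{sec:machineModelAndInput}, so both claims hold with equality. For the inductive step, I would walk through each type of modification allowed by \Cref{def:distRedModel} and verify that neither inclusion can be broken. The argument rests on two structural facts that follow directly from the model: (i) the stored neighborhood $N_i(v)$ of a local vertex $v\in V_i$ is only ever modified by PE $i$, since an interface vertex may not have ghosts added to its neighborhood and a ghost is only reduced when a local neighbor is proposed for inclusion at its remote PE, so no PE can silently inject a new neighbor into the view of another PE; and (ii) whenever PE $k$ reduces a vertex $u\in V_k$, $u$ simultaneously disappears from the global vertex set $V$ and hence from every $N(\cdot)$ in which it appeared.

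For the first claim, fix $v\in V_i$. By (i), only PE $i$ shrinks $N_i(v)$, and by (ii) each such removal accompanies the removal of the corresponding neighbor from $V$, so $N(v)\subseteq N_i(v)$ is preserved. Conversely, a neighbor can drop out of $N(v)$ without being deleted from $N_i(v)$ only when a PE $j\neq i$ reduces some $u\in V_j\cap N_i(v)$ before the change has propagated to PE $i$, which characterizes the residue $N_i(v)\setminus N(v)$ exactly as the lemma states. For the second claim, fix $v\in \ghosts{i}$: by (i) the set $N_i(v)$ is only shrunk when PE $i$ reduces a local neighbor $w\in V_i$, and by (ii) that same $w$ leaves $V$ and hence $N(v)\cap V_i$; so $N(v)\cap V_i\subseteq N_i(v)$ is preserved.

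The main obstacle is pinning down which operations of the model can decrement $N(v)$ without decrementing $N_i(v)$, and showing that every such case is of the form \emph{a remote PE has already reduced a vertex that $v$'s owner has not yet been notified about}, which is exactly the slack the lemma allows. A secondary subtlety is ensuring that edge-level modifications permitted for interface vertices (changing $N_i(v)$ as long as no ghost is added) never create a global neighbor invisible to PE $i$; this is handled by (i), because any such change either involves only local vertices of PE $i$ or mirrors a removal of an edge that is already gone from $N(v)$.
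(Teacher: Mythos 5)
Your proof is correct and takes essentially the same route as the paper's: both arguments come down to the initial replication of cut edges, the model's prohibition on inserting new cut edges (no ghost may ever be added to a neighborhood), and the observation that modifications by PE $i$ are immediately reflected in the global graph $\G$ while remote reductions merely leave stale, already-reduced entries in $N_i(v)$ --- your induction over the operation sequence just makes this invariant preservation explicit where the paper argues directly by edge type. One small imprecision worth noting: your fact (ii) covers only owner-side reductions, whereas a ghost $u$ removed by PE $i$ during an inclusion proposal leaves $V$ via the subtraction term $\ghosts{i}\setminus\rghosts{i}$ in the definition of $\G'$ rather than via its owner, and removing a purely local edge need not remove the neighbor from $V$ at all (the edge simply vanishes from $E$ because it is stored only in $E_i$) --- both cases still preserve the stated inclusions, as your final paragraph implicitly acknowledges.
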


\begin{figure*}[t!]
	\input{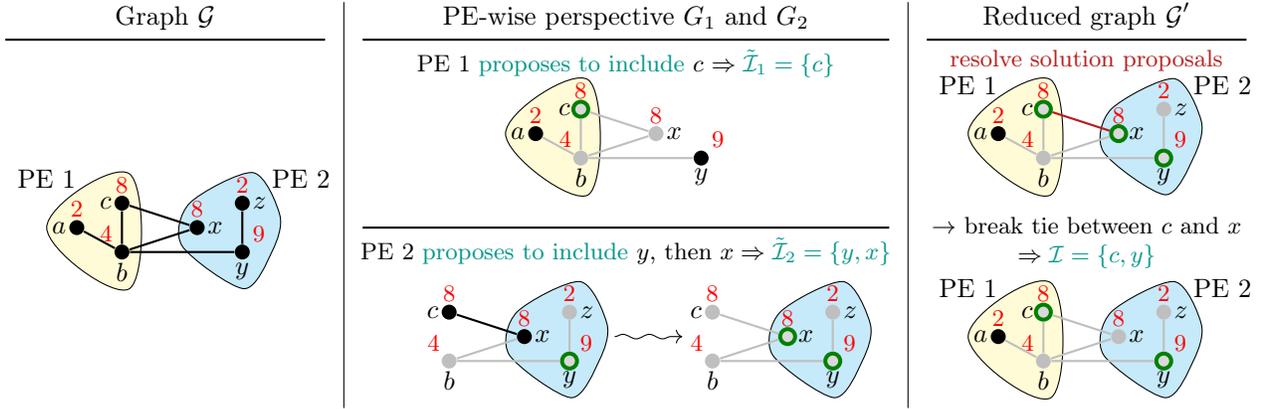}
	\caption{Graph $\G$ is distributed over PE $1$ and $2$ (left). PE $1$ proposes to include vertex $c$; PE $2$ proposes to include $y$  and then $x$ (middle).
		Note that at the time when $c$ and $x$ are proposed to be included, the only neighbor $b$ of $c$ in $G_i$ was already reduced by PE $2$.
		Thus, proposing to include $c$ and $x$ reduces globally an isolated cut-edge as stated in \Cref{lem:IncInterfaceVertices}.
    Since both have equal weight, we can simply use the PE ranks to break ties between $c$ and $x$ for an MWIS $\I$ of $\G$. This is described in Remark~\ref{rem:includeVertex} in more detail.}
	\label{fig:solutionConflict}
\end{figure*}

\Cref{lem:IncInterfaceVertices} addresses potential conflicts regarding an MWIS for $\G$ if the incident vertices of a cut-edge $\set{u,v}\in E$ are both proposed to be included.
Intuitively, such a conflict can only happen if $N_i[v]$ forms a clique in $G_i$ where $v\in V_i$ and $u\in N_i(v)$ have equal weight.
Moreover, $u$ must be the only remaining neighbor of $v$ in $\G$.
Thus, both proposed to be included vertices, $u$ and $v$ reduce only an isolated cut-edge $\set{u,v}$ in $\G$.
Since both have equal weight we can reconstruct the solution with either of them.
An example is \hbox{shown in \Cref{fig:solutionConflict}.}

\begin{lemma}[Include Alternatives]\label{lem:IncInterfaceVertices}
	Let $\G=(V,E,\w)$ be the global reduced graph.
	Let $G_i=(\cV_i, E_i, \w_i)$ the local graph at PE $i$ and $G_j=(\cV_j, E_j, \w_j)$ the local graph at PE $j\neq i$.
	Assume, it exists $v\in V_i$ and $u\in V_j$ which are now proposed to be included at PE $i$ and PE $j$, respectively.
	Further, assume $v\in N_j(u)$ at PE $j$ so that $v$ is reduced when $u$ is proposed to be included.
	Then, it holds $u\in N_i(v)$, $\w_i(v)=\w_j(u)$, $N_i(v)\cap V \subseteq \set{u}$, and $\N_j(u)\cap V \subseteq \set{v}$.
	If $u$ exists for the given $v$, then there is no local graph $G_k=(\cV_k, E_k, \w_k)$ at some PE $k\not\in\set{i,j}$ with a vertex $u'\in V_k\setminus \set{u}$ that reduces $v$ as well.
\end{lemma}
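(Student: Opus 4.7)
My plan is to exploit the proposal conditions at both PEs together with the ghost-weight upper bound of Lemma~\ref{lem:weights} to collapse a chain of inequalities into a single weight equality, from which the structural claims on the neighborhoods will follow. I would first dispatch the claim $u\in N_i(v)$ by invoking the symmetry of the distributed edge representation: since $v\in N_j(u)$ is stored as a cut-edge at PE~$j$ and neither endpoint has yet been reduced, the same edge is still present at PE~$i$, so $u\in N_i(v)$.

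For $\w_i(v)=\w_j(u)$, I would chain $\w_i(v)\geq \solw{G_i[N_i(v)]}\geq \w_i(u)\geq \w_j(u)\geq \solw{G_j[N_j(u)]}\geq \w_j(v)\geq \w_i(v)$, where the first and fourth inequalities are the proposal conditions, the second and fifth use that $\{u\}\subseteq N_i(v)$ and $\{v\}\subseteq N_j(u)$ are singleton independent sets, and the third and sixth are instances of Lemma~\ref{lem:weights}. All inequalities collapse, giving the claimed equality together with the side effects that $\{u\}$ is an MWIS of $G_i[N_i(v)]$ and $\{v\}$ is an MWIS of $G_j[N_j(u)]$.

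The claim $N_i(v)\cap V\subseteq\{u\}$ I would prove by contradiction: assume some $w\in (N_i(v)\cap V)\setminus\{u\}$. If $w$ is non-adjacent to $u$ in $G_i$, then $\{u,w\}$ is independent in $G_i[N_i(v)]$, and the MWIS-weight equality forces $\w_i(w)=0$. If $w$ is adjacent to $u$ in $G_i$, then $w$ must be local at PE~$i$ because the representation omits edges between two ghosts; by symmetry $w$ then appears as a ghost in $N_j(u)$, and since $v$ and $w$ are both ghosts at PE~$j$ they are mutually non-adjacent in $G_j$. Hence $\{v,w\}$ is independent in $G_j[N_j(u)]$, forcing $\w_j(w)=0$, which via Lemma~\ref{lem:weights} pushes $\w_i(w)$ down to $0$ as well. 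Under the implicit assumption that $V$ carries no zero-weight vertices, both subcases contradict $w\in V$. The symmetric claim $N_j(u)\cap V\subseteq\{v\}$ follows by swapping roles. For the final uniqueness statement, the same argument applied to any alleged $u'\in V_k\setminus\{u\}$ reducing $v$ places $u'\in N_i(v)\cap V\subseteq\{u\}$, forcing $u'=u$; but this is impossible since $\rank(u')=k\neq j=\rank(u)$.

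The main obstacle I anticipate is the adjacent subcase above: from PE~$i$'s view alone the weight bound does not exclude $\w_i(w)>0$. The key move will be to shift perspective to PE~$j$ and exploit the specific structural property that ghost–ghost edges are not stored locally, which is the non-obvious point where the distributed reduction model's design choices are really used. Everything else is a careful bookkeeping of weights via Lemma~\ref{lem:weights}.
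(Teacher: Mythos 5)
The central gap is your very first step, the dispatch of $u\in N_i(v)$. You argue from the symmetry of the edge representation that the cut-edge $\set{u,v}$ ``is still present at PE~$i$'', treating the local graphs as static replicas of $G$. But in the \nameref{def:distRedModel} the local graph $G_i$ evolves: a ghost is removed from $G_i$ precisely when one of its local neighbors is proposed for inclusion. So at the moment $v$ is proposed, the ghost $u$ may already have been deleted from $G_i$ by an \emph{earlier} include-proposal $v'\in V_i$ with $u\in N_i(v')$ --- and your premise ``neither endpoint has yet been reduced'' is not part of the hypothesis; on the contrary, $v$ is reduced at PE~$j$ the moment $u$ is proposed, so the lemma addresses exactly the regime where local and global states diverge. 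Ruling out this prior-removal scenario is the crux of the lemma, and the paper handles it by induction on the number of include-proposals at PE~$i$: if an earlier $v'$ had removed the ghost $u$ from $G_i$, the induction hypothesis forces $\w_i(v')=\w_j(u)$-type equalities, and then $v'$ and $v$, being two pairwise non-adjacent ghosts in $N_j(u)$, form an independent set in $G_j[N_j(u)]$ of weight strictly exceeding $\w_j(u)$, contradicting the proposal condition for $u$ (with a second sub-case, handled via the hypothesis once more, when $v'$ is itself already reduced at PE~$j$ by some $u'$). Your proof has no counterpart of this induction and fails on precisely those interleavings of proposals.

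Once $u\in N_i(v)$ is secured, your collapsing chain $\w_i(v)\geq \solw{G_i[N_i(v)]}\geq \w_i(u)\geq \w_j(u)\geq \solw{G_j[N_j(u)]}\geq \w_j(v)\geq \w_i(v)$ is sound and is essentially the paper's computation, and your use of the fact that $E_i$ stores no ghost--ghost edges is the same structural device the paper employs for the neighborhood claims. Two smaller issues remain, though. First, your uniqueness argument concludes $u'\in N_i(v)\cap V\subseteq\set{u}$, which presupposes $u'\in V$; that is not given for a conflicting proposal (a proposed vertex may already be reduced elsewhere, exactly as $v$ itself is). The paper instead shows $u'\in N_i(v)$ with $\w_i(u')=\w_i(v)$ analogously to $u$ and derives the contradiction from the independent set $\set{u,u'}$ of two non-adjacent ghosts inside $N_i(v)$, which needs no membership in $V$. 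Second, your flagged reliance on strictly positive weights is a fair caveat but not a defect specific to your argument: the paper's own proof asserts $\w_j(x)>0$ even though the preliminaries allow $\w:V\to\mathbb{N}_{\geq 0}$, so that implicit positivity assumption is shared.
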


So far we introduced the modification to propose to include a vertex.
Now, we present our procedure to include vertices into an MWIS of $\G$.
This is used by all our distributed data reductions which aim \hbox{to include vertices.}

\begin{lemma}[Include Operation]\label{lem:includeOperation}%
	Let ${\G=(V,E,\w)}$ be the global reduced graph and let ${G_i=(V_i,E_i,\w_i)}$ be the local graph at PE $i$.
	Furthermore, let $v\in V_i$ be a vertex that was not yet reduced by any PE, \ie $v \in V$.
	If $v$ is an interface vertex, assume $\w_i(v)\geq \solw{G_i[\N_i(v)]}$ in addition.
	Assume, $v$ can be included into an MWIS of $\G$ so that $\G$ is reduced to $\G'=\G-N(v)$ with ${\solw{\G} = \solw{\G'}+\w_i(v)}$.

	Then, only PE $i$ needs to modify its local graph so that $\G'$ is obtained globally.
	The local reduced graph is given as ${G_i'=G_i-N_i[v]}$.
	Furthermore, add $v$ to the set ${\addedI_i'=\addedI_i\cup \set{v}}$ at PE $i$ if $v$ is an interface vertex.
	An MWIS of $\G$ can be reconstructed after exchanging the solution proposals of $N_i[v]$ with the adjacent PEs of $v$.
	Reconstruct the solution as follows:
	\begin{itemize}
		\item If $u$ exists with $j < i$, set ${\I_i = \I_i'}$ at PE $i$ and~${\I_j = \I_j' \cup \set{u}}$ at PE~$j$.
		\item Else if $u$ exists with $j > i$, set ${\I_i = \I_i'\cup \set{v}}$ at PE~$i$ and ${\I_j = \I_j'}$ at PE~$j$.
		\item Else, add $v$ to the local solution,~\ie $\I_i = \I_i'\cup \set{v}$.
	\end{itemize}
\end{lemma}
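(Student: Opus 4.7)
The plan is to split the argument into two parts: first, verifying that the purely local modification $G_i' = G_i - N_i[v]$ realizes the global reduction $\G \to \G - N(v)$ in the distributed representation; and second, verifying that the three reconstruction cases produce an MWIS of $\G$ of weight $\solw{\G}$.

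For the first part, I would invoke \Cref{lem:neighborhood}: since $v \in V_i$, every remaining global neighbor of $v$ lies in $N_i(v)$, and any vertex of $N_i(v) \setminus N(v)$ is already globally reduced. Therefore $N_i[v]$ contains exactly the vertices of $N[v]$ still present in $\G$ plus already-removed ghosts, so deleting $N_i[v]$ from $G_i$ removes precisely $N[v] \cap V$ from the distributed representation. By the definition of $\G$ as the union of the local graphs with ghosts subtracted, no other PE has to act in order to realize $\G'$: stale ghost copies of vertices of $N(v)$ held elsewhere are discarded by the union formula (and will be synchronized away during the proposal-exchange phase). The precondition $\w_i(v)\geq \solw{G_i[\N_i(v)]}$ when $v$ is an interface vertex is exactly the condition required by the distributed reduction model to move $v$ to $\addedI_i$, so this operation is legal.

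For the second part, the key case is an interface $v$ whose ghost neighbor $u \in N_i(v)\cap \ghosts{i}$ with $j = \rank(u)$ was also proposed for inclusion at PE $j$. I would apply \Cref{lem:IncInterfaceVertices}: this forces $u \in N_i(v)$, $\w_i(v) = \w_j(u)$, $N_i(v)\cap V \subseteq \{u\}$, and $N_j(u)\cap V \subseteq \{v\}$, and moreover that no third PE independently reduced $v$. Consequently, in $\G$ the cut-edge $\{u,v\}$ is isolated, and either $u$ or $v$ (but not both) can enter an MWIS, with identical weight contribution. The rank-based tie-breaking in the three cases then keeps exactly one of them: if $j < i$, PE $j$ already committed $u$ and PE $i$ discards $v$; if $j > i$, symmetrically $v$ is kept and $u$ discarded; if no such $u$ exists (the ``else'' branch), $v$ is simply added to $\I_i$. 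Thus the combined solution is well-defined and contains at most one of $v$ and the conflicting $u$.

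To finish, I would verify independence and weight. Independence: since $\G'$ was obtained by removing $N(v)$, the locally reconstructed solutions $\I_i'$ and $\I_j'$ on $\G'$ contain no neighbor of $v$, so inserting $v$ (resp.\ $u$) still yields an independent set; the tie-break ensures that the one edge $\{u,v\}$ is not violated. Weight: by the hypothesis $\solw{\G} = \solw{\G'} + \w_i(v)$ together with $\w_i(v) = \w_j(u)$ in the conflict case, both branches contribute exactly $\w_i(v)$ on top of an MWIS of $\G'$, so the resulting set attains $\solw{\G}$.

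The main obstacle will be the conflict case: showing that although two PEs independently declare a proposal, the combined reconstruction is still an MWIS. \Cref{lem:IncInterfaceVertices} is precisely the technical lever that reduces this to an isolated-edge situation, so the proof plan essentially amounts to carefully invoking it and then observing that the three cases are exhaustive with respect to the existence and rank of such a conflicting $u$.
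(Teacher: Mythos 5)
Your proposal is correct and follows essentially the same route as the paper's proof: both use \Cref{lem:neighborhood} to justify that deleting $N_i[v]$ locally realizes $\G'=\G-N(v)$ in the distributed representation, then split into the no-conflict case and the conflict case, invoking \Cref{lem:IncInterfaceVertices} to reduce the latter to an isolated equal-weight cut-edge resolved by rank-based tie-breaking. Your explicit verification of independence and of the weight offset is slightly more detailed than the paper's, but it is the same argument in substance.
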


\begin{remark}[Include Operation]\label{rem:includeVertex}
	\Cref{lem:includeOperation} provides the include operation for a vertex $v\in V_i$ at PE $i$ if it is not yet reduced by any PE,~\ie $v\in V$.
	Instead of testing for an interface vertex $v$ whether $v\in V$, we propose to include $v$ without a test.
	This way we do not need extra communication for testing $v\in V$.
	Moreover, we do not violate optimality by proposing $v$ because $\Cref{lem:IncInterfaceVertices}$ applies.
	There is exactly one vertex that was proposed to be included and reduced $v$.
	It is a neighbor of $v$ in $G_i$, both have equal weights and other neighbors in the local graph are already reduced.
	Thus, $v$ is an alternative for this neighbor in an MWIS.
	With the tie-breaking, described in \Cref{lem:includeOperation}, the exactly one proposal is accepted for the MWIS.

\end{remark}

Next, we consider an interface vertex $u\in V_i$ and suppose it has a large weight.
If $v$ has a local neighbor $v\in N_i(u)$ that has not been reduced yet in $\G$, then $u$ has not been reduced by any PE either.
Our distributed reduction rules make use of this property to reduce a vertex $v$ which requires that $u$ is not reduced yet.
On the other hand, if $v$ was already reduced by another PE in $\G$, such a data reduction would only remove an already reduced vertex from $G_i$.
\begin{lemma}[Remaining Vertex]\label{lem:remainingVertex}
	Let $u,v\in V_i$ be adjacent with $\max\set{\w_i(x): x\in (N_i(u)\cap \ghosts{i})\setminus N_i(v)} + \w_i(v) \leq \w_i(u)$.
	Then it holds that $u\in V$ if $v\in V$.
\end{lemma}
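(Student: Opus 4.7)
The plan is to argue by contrapositive: assuming $u\notin V$, derive $v\notin V$. Reading $u,v\in V_i$ as ``$u$ and $v$ are currently local vertices at PE $i$'' (so PE $i$ has not locally reduced either of them), the only way $u$ can fail to lie in $\G$ is through the reduction of a ghost copy of $u$ at some PE $j\neq i$. By \Cref{def:distRedModel}, a ghost is reduced only when some neighbor is proposed for inclusion, so there exists $w\in V_j\cap N(u)$ that was added to $\addedI_j$, with $u$ still present in $G_j$ at that moment.

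The heart of the argument is a chain of weight bounds. The proposal criterion gives $\w_j(w)\geq \solw{G_j[N_j(w)]}\geq \w_j(u)$, since $u\in N_j(w)$ when $w$ is proposed. Applying \Cref{lem:weights} to $u$ as a ghost of PE $j$ yields $\w_j(u)\geq \w_i(u)$; and since $w\in V_j$ is adjacent to $u\in V_i$, it is a ghost of PE $i$ with $w\in N_i(u)\cap\ghosts{i}$, so \Cref{lem:weights} again gives $\w_i(w)\geq \w_j(w)$. Combining these, $\w_i(w)\geq \w_i(u)$.

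Next, I split on whether $w$ is adjacent to $v$ in $G_i$. If $w\notin N_i(v)$, then $w$ contributes to the $\max$ in the hypothesis, yielding $\w_i(w)\leq \w_i(u)-\w_i(v)$, which contradicts $\w_i(w)\geq \w_i(u)$ whenever $\w_i(v)>0$ (weight-zero vertices can be set aside as trivially removable). If instead $w\in N_i(v)$, then $v$ and $w$ are genuine neighbors in the input, so $v$ is a ghost of PE $j$ and by \Cref{lem:neighborhood} satisfies $v\in N_j(w)$. When $w$ was proposed for inclusion, the whole closed neighborhood $N_j[w]$ was removed from $G_j$, reducing the ghost of $v$ at PE $j$ and hence forcing $v\in\ghosts{j}\setminus\rghosts{j}$, i.e., $v\notin V$, a contradiction.

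The main obstacle is the temporal bookkeeping: I need that at the precise moment PE $j$ proposed $w$, the vertex $u$ was still a ghost at PE $j$ and that $v$ was still a neighbor of $w$ in $G_j$. Both facts follow from the monotonicity of distributed reductions -- vertices and edges are only removed, never added -- combined with \Cref{lem:neighborhood}, but the argument has to be stated with care. A minor secondary issue is the edge case $\w_i(v)=0$, which either needs an extra positivity assumption or a short separate argument noting that a weight-zero vertex can be discarded without changing any MWIS.
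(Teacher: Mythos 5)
Your proposal follows the same route as the paper's proof---assume $u\notin V$, identify the proposal of an interface vertex $w$ at some PE $j$ that reduced $u$'s ghost copy, split on whether $w$ is adjacent to $v$, and drive a weight chain into a contradiction with the strict slack $\w_i(v)$ in the hypothesis---but it has a genuine gap at the pivotal step. In your first case you let $w$ ``contribute to the $\max$ in the hypothesis'', which requires $w\in (N_i(u)\cap\ghosts{i})\setminus N_i(v)$ in the \emph{current} local graph $G_i$. Input-graph adjacency does not give you this: ghosts are deleted from $G_i$ whenever PE $i$ proposes one of their local neighbors for inclusion, so $w$ may already be absent from $N_i(u)$ even though $\set{u,w}$ is an input edge, and in that situation the hypothesis yields no bound on $\w_i(w)$ at all, so your intended inequality $\w_i(w)\leq \w_i(u)-\w_i(v)$ is unavailable. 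This is precisely what the central paragraph of the paper's proof establishes: if the ghost $w$ (the paper's $x$) had been reduced at PE $i$ by an earlier proposal of some $y\in V_i$, then \Cref{lem:IncInterfaceVertices} gives $y\in N_j^*(w)$ with $\w_i^*(y)=\w_j^*(w)$, so $\set{y,u}$---two ghosts at PE $j$, hence non-adjacent there---forms an independent set in $G_j^*[N_j^*(w)]$ of weight strictly exceeding $\w_j^*(w)$, contradicting $w$'s proposal criterion. Without this argument (or a substitute for it), your case analysis does not go through; it is the heart of the lemma, not bookkeeping.

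Relatedly, your two invocations of \Cref{lem:weights} require the vertex in question not to be globally reduced at the snapshot where the bound is taken, and by \Cref{rem:includeVertex} vertices are proposed \emph{without} testing membership in $V$, so $w$ itself may already have been reduced by a third PE when PE $j$ proposes it. The ``temporal bookkeeping'' you defer to monotonicity is exactly what the induction behind \Cref{lem:IncInterfaceVertices} discharges, and the paper's proof deliberately routes through that lemma rather than through raw monotonicity of deletions. Your flagged edge case $\w_i(v)=0$, by contrast, is not a deficit relative to the paper: its final step $\w_i(u)\geq \w_i(x)+\w_i(v)>\w_i(x)$ relies on $\w_i(v)>0$ in the same way.
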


\subsection{Distributed Reduction Rules.}\label{sec:distributedReductionRules}

In this section, we introduce the reduction rules that can reduce border vertices. 
\ifshowFull
To make this section more readable the proofs can all be found in the Appendix~\ref{appendix:sec:reductionProofs}.
\else
The proofs can all be found in the full-paper version.
\fi
The first rule in that regard is the \emph{Heavy Vertex} rule. It was first introduced by Lamm~\etal~\cite{ExactlySolvingLamm2019} for the sequential model.
The rule involves finding an MWIS for the induced neighborhood graph $G_i[N_i(v)]$.
\Cref{fig:exampleHeavyVertex} shows an example for \nameref{dred:HeavyVertex}.
We now show how it can also apply to interface vertices.

\begin{distributedreduction}[Distributed Heavy Vertex]\label{dred:HeavyVertex}
	Let $v\in V_i$ with $\w_i(v)\geq \solw{G_i[N_i(v)]}$.
	Then (propose to) include $v$ following Remark~\ref{rem:includeVertex}.
\end{distributedreduction}

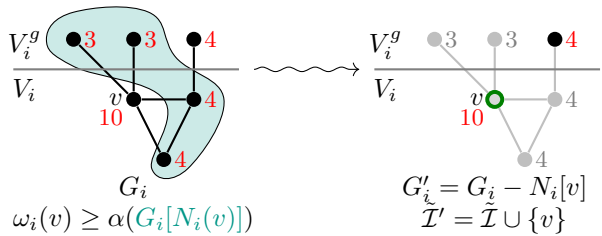
\begin{figure}[h]
	\centering
	\begin{tikzpicture}[
		minpoint/.style={inner sep=0pt, outer sep=0pt, draw=none},
		point/.style={circle, fill=black, inner sep=2pt},
		reduced/.style={circle, fill=gray!50, inner sep=2pt},
		included/.style={circle, draw=green!50!black, fill=gray!30, inner sep=2pt, line width=1.5pt},
		edge/.style={thick},
		reducededge/.style={edge, draw=gray!50},
		nodeweight/.style={font=\small, text=red, label distance=0.0mm},
		reducednodeweight/.style={font=\small, text=gray, label distance=0.0mm},
		pe2/.style={cyan!20,draw=black},
		pe1/.style={yellow!20,draw=black},
	]
	\pgfdeclarelayer{background}
	\pgfsetlayers{background,main}

	\begin{scope}
		\coordinate (origin) at (0,0) {};
		\node[point]  (a) at ($(origin) + (1.0\xdist, 0)$) {};
		\node[minpoint,label={[nodeweight]right:$3$}]   at (a) {};
		\node[point]  (b) at ($(origin) + (3.0\xdist, 0)$) {};
		\node[minpoint, label={[nodeweight]right:$3$}]   at (b) {};
		\node[point]  (c) at ($(origin) + (5.0\xdist, 0)$) {};
		\node[minpoint, label={[nodeweight]right:$4$}]   at (c) {};
		\node[point]  (v) at ($(origin) + (3.0\xdist, -2\xdist)$) {};
		\node[minpoint, label={left:$v$}, label={[nodeweight]200:$10$}]   at (v) {};
		\node[point]  (d) at ($(origin) + (5.0\xdist, -2\xdist)$) {};
		\node[minpoint, label={[nodeweight]right:$4$}]   at (d) {};
		\node[point]  (e) at ($(origin) + (4.0\xdist, -4\xdist)$) {};
		\node[minpoint, label={[nodeweight]right:$4$}]   at (e) {};

		\draw[edge] (a) -- (v);
		\draw[edge] (b) -- (v);
		\draw[edge] (c) -- (d);
		\draw[edge] (v) -- (d);
		\draw[edge] (v) -- (e);
		\draw[edge] (d) -- (e);

		\coordinate (l1) at ($(origin) +(-1\xdist, -\xdist)$);
		\coordinate (l2) at ($(origin) +(+6.5\xdist, -\xdist)$);
		\coordinate (l2') at ($(origin) +(+7\xdist, -\xdist)$);
		\draw[edge, gray] (l1) -- node[pos=0.05, above, color=black] {$V_i^g$} node[pos=0.05, below, color=black] {$V_i$}  (l2);
		\node[align=center] at ($(e) + (-\xdist,-1.5\xdist)$) {$G_i$\\ $\omega_i(v) \ge \alpha(\textcolor{my-teal}{G_i[N_i(v)]})$};
	\end{scope}
	\begin{scope}[on background layer, opacity=0.1]
		\coordinate (C0-0) at ($(b)+(+0.5\xdist,+0.9\xdist)$);
		\coordinate (C0-1) at ($(a)+(-0.0\xdist,+0.9\xdist)$);
		\coordinate (C0)   at ($(a)+(-0.9\xdist,0\xdist)$);
		\coordinate (C0-2) at ($(a)+(-0.0\xdist,-0.9\xdist)$);
		\coordinate (C0-5) at ($(a)!0.5!(v)$);
		\coordinate (C1)   at ($(b)!0.5!(v)$);
		\coordinate (C2)   at ($(v)!0.5!(d)$);
		\coordinate (C3)   at ($(e)+(-0.4\xdist,-0.3\xdist)$);
		\coordinate (C3-5) at ($(e)+(+0.8\xdist,-0.3\xdist)$);
		\coordinate (C4)   at ($(d)+(\xdist,0)$);
		\coordinate (C5)   at ($(d)!0.5!(c)$);
		\coordinate (C6)   at ($(v)!0.7!(c)$);
		\fill[my-teal!20, draw=black] plot[smooth cycle, tension=.7] coordinates {(C0-0) (C0-1) (C0) (C0-2)  (C2) (C3) (C3-5) (C4) (C6) };

	\end{scope}
	\begin{scope}[xshift=12\xdist]
		\coordinate (origin) at (0,0) {};
		\node[reduced]  (a) at ($(origin) + (1.0\xdist, 0)$) {};
		\node[minpoint,label={[reducednodeweight]right:$3$}]   at (a) {};
		\node[reduced]  (b) at ($(origin) + (3.0\xdist, 0)$) {};
		\node[minpoint, label={[reducednodeweight]right:$3$}]   at (b) {};
		\node[point]  (c) at ($(origin) + (5.0\xdist, 0)$) {};
		\node[minpoint, label={[nodeweight]right:$4$}]   at (c) {};
		\node[included]  (v) at ($(origin) + (3.0\xdist, -2\xdist)$) {};
		\node[minpoint, label={left:$v$}, label={[nodeweight]200:$10$}]   at (v) {};
		\node[reduced]  (d) at ($(origin) + (5.0\xdist, -2\xdist)$) {};
		\node[minpoint, label={[reducednodeweight]right:$4$}]   at (d) {};
		\node[reduced]  (e) at ($(origin) + (4.0\xdist, -4\xdist)$) {};
		\node[minpoint, label={[reducednodeweight]right:$4$}]   at (e) {};

		\draw[reducededge] (a) -- (v);
		\draw[reducededge] (b) -- (v);
		\draw[reducededge] (c) -- (d);
		\draw[reducededge] (v) -- (d);
		\draw[reducededge] (v) -- (e);
		\draw[reducededge] (d) -- (e);

		\coordinate (l1') at ($(origin) +(-1.5\xdist, -\xdist)$);
		\draw[->,decorate, decoration={snake,amplitude=0.3mm}] (l2') -- (l1');

		\coordinate (l1) at ($(origin) +(-1\xdist, -\xdist)$);
		\coordinate (l2) at ($(origin) +(+6.5\xdist, -\xdist)$);
		\draw[edge, gray] (l1) -- node[pos=0.05, above, color=black] {$V_i^g$} node[pos=0.05, below, color=black] {$V_i$}  (l2);
		\node[align=center] at ($(e) + (-\xdist,-1.5\xdist)$) {$G_i' = G_i - N_i[v]$\\ $\tilde{\mathcal{I}}' = \tilde{\mathcal{I}} \cup \{v\}$};

	\end{scope}

\end{tikzpicture}
	\caption{Example of \nameref{dred:HeavyVertex} in the perspective of PE $i$. The reduction {\color{my-teal}proposes to include} the interface vertex $v\in V_i$ in $G_i$ because ${\color{red}10}=\w_i(v)\geq\solw{G_i[N_i(v)]}={\color{red}10}$. The closed neighborhood $N_i[v]$ is reduced from $G_i$ which \hbox{results in $G_i'=G_i-N_i[v]$.}}
	\label{fig:exampleHeavyVertex}
\end{figure}
Finding an MWIS for the neighborhood graph is often inefficient and computational more efficient bounds are sufficient to apply \cref{dred:HeavyVertex}.
Therefore, we transfer \emph{Neighborhood Removal}~\cite{ExactlySolvingLamm2019}.

\begin{distributedreduction}[Dist. Neighborhood Removal]\label{dred:NeighborhoodRemoval}
	Let $v\in V_i$ with $\w_i(v)\geq\w_i(\N_i(v))$.
	Then (propose to) include $v$ following Remark~\ref{rem:includeVertex}.
\end{distributedreduction}

Next, we introduce \nameref{dred:SimplicialVertex} to reduce cliques.
A vertex $v$ is simplicial if its neighborhood forms a clique.
A clique in $G_i$ contains at most one ghost, since $E_i$ does not contain edges between ghosts.
Moreover, we cannot conclude for a ghost that it is a simplicial vertex in $\G$ because PE $i$ does not know its complete neighborhood in $\G$.

\begin{distributedreduction}[Distributed Simplicial Vertex]\label{dred:SimplicialVertex}
	Let $v\in V_i$ be a simplicial vertex with maximum weight in its neighborhood in $G_i$,~\ie it holds $\w_i(v)\geq \max \set{\w_i(u): u\in \N_i(v)}$.
	Then (propose to) include $v$ following Remark~\ref{rem:includeVertex}.
\end{distributedreduction}
\begin{proof}
	To prove correctness, we show that this reduction is a special case of \nameref{dred:HeavyVertex}.
	Let $v\in V_i$ be a simplicial vertex with $\w_i(v)\geq \max \set{\w_i(u): u\in \N_i(v)}$.
	Then, $\N_i(v)$ forms a clique in $G_i$.
	Thus, an {\MWIS} of the induced neighborhood graph of $G_i[\N_i(v)]$ can consist of at most one vertex.
	It follows ${\w_i(v) \geq \max \set{\w_i(u): u\in \N_i(v)} \geq \solw{G_i[\N_i(v)]}}$.
	Thus, we can apply \nameref{dred:HeavyVertex}.
  This yields the described reduced graph, offset and \hbox{proposed solution reconstruction.}
\end{proof}

The following distributed reduction adapts \emph{Simplicial Weight Transfer}~\cite{ExactlySolvingLamm2019}.
and applies to cliques in $G_i$ that do not contain a ghost vertex and at \hbox{least one non-interface vertex.}

\begin{distributedreduction}[Dist. Simp. Weight Transfer]\label{dred:SimplicialWeightTransfer}
	Let $v\in V_i\setminus N_i(\ghosts{i})$ be a simplicial vertex, let $S(v)\subseteq N_i(v)$ be the set of all simplicial vertices.
	Further, let $w_i(v)\geq w_i(u)$ for all $u\in S(v)$.
	If $\w_i(v) < \max \set{w_i(u): u \in \N_i(v)}$, fold $v$ into $N_i(v)$.
	\begin{tabular}{ll}
		\reducedGraphRow{$G_i'=G_i-X$ with                                                    \\ $X=\set{u\in \N_i[v]: \w_i(u)\leq \w_i(v)}$ \\ and set $w_i'(u)= w_i(u)-w_i(v)$ \\ for all $x\in N_i(v)\setminus X$.} \\
		\offsetRow{$\solw{\G} = \solw{\G'}+\w(v)$.}                                           \\
		\reconstructionRow{If $\I_i'\cap\N_i(v) = \emptyset$, then $\I_i =\I_i'\cup \set{v}$, \\ else $\I_i=\I_i'$.}
	\end{tabular}
\end{distributedreduction}

In the following, we transfer data reduction rules to the \nameref{def:distRedModel} to exclude local vertices.
These rules are capable of excluding vertices because in these cases they are not part of some {\MWIS}.
In the case of an interface vertex, we need to show that we can exclude it independently of whether potential ghost neighbors in $G_i$ have already been reduced in $\G$.

\begin{distributedreduction}[Distributed Basic Single-Edge]\label{dred:BasicSingleEdge}
	Let $u,v\in V_i$ be adjacent vertices with $\w_i(\N_i(u)\setminus \N_i(v)) \leq \w_i(u)$, then exclude $v$.

	\begin{tabular}{ll}
		\textit{Reduced Graph}  & $G_i'=G_i-v$             \\
		\textit{Offset}         & $\solw{\G} = \solw{\G'}$ \\
		\textit{Reconstruction} & $\I_i = \I_i'$
	\end{tabular}
\end{distributedreduction}

\begin{figure}
	\centering
	\begin{tikzpicture}[
		minpoint/.style={inner sep=0pt, outer sep=0pt, draw=none},
		point/.style={circle, fill=black, inner sep=2pt},
		reduced/.style={circle, fill=gray!50, inner sep=2pt},
		included/.style={circle, draw=green!50!black, fill=gray!30, inner sep=2pt, line width=1.5pt},
		edge/.style={thick},
		reducededge/.style={edge, draw=gray!50},
		nodeweight/.style={font=\small, text=red, label distance=0.0mm},
		reducednodeweight/.style={font=\small, text=gray, label distance=0.0mm},
		pe2/.style={cyan!20,draw=black},
		pe1/.style={yellow!20,draw=black},
	]
	\pgfdeclarelayer{background}
	\pgfsetlayers{background,main}

	\begin{scope}
		\coordinate (origin) at (0,0) {};
		\node[point]  (a) at ($(origin) + (2.0\xdist, 0)$) {};
		\node[minpoint,label={[nodeweight]right:$7$}]   at (a) {};
		\node[point]  (b) at ($(origin) + (4.0\xdist, 0)$) {};
		\node[minpoint, label={[nodeweight]right:$2$}]   at (b) {};
		\node[point]  (v) at ($(origin) + (1.0\xdist, -2\xdist)$) {};
		\node[minpoint, label={left:$v$}, label={[nodeweight]200:$7$}]   at (v) {};
		\node[point]  (u) at ($(origin) + (3.0\xdist, -2\xdist)$) {};
		\node[minpoint, label={right:$u$}, label={[nodeweight]300:$10$}]   at (u) {};
		\node[point]  (d) at ($(origin) + (2.0\xdist, -4\xdist)$) {};
		\node[minpoint, label={[nodeweight]right:$4$}]   at (d) {};

		\draw[edge] (a) -- (v);
		\draw[edge] (a) -- (u);
		\draw[edge] (b) -- (u);
		\draw[edge] (v) -- (u);
		\draw[edge] (v) -- (d);
		\draw[edge] (u) -- (d);

		\coordinate (l1) at ($(origin) +(-1\xdist, -\xdist)$);
		\coordinate (l2) at ($(origin) +(+5.5\xdist, -\xdist)$);
		\coordinate (l2') at ($(origin) +(+6\xdist, -\xdist)$);
		\draw[edge, gray] (l1) -- node[pos=0.05, above, color=black] {$V_i^g$} node[pos=0.05, below, color=black] {$V_i$}  (l2);
		\node[align=center] at ($(d) + (0.75\xdist,-1.5\xdist)$) {$G_i$\\ $\omega_i(u) \ge \omega_i(\textcolor{my-teal}{N_i(u) \setminus N_i(v)})$};
	\end{scope}
	\begin{scope}[on background layer, opacity=0.1]
		\coordinate (C0)    at ($(v)+(-0.1\xdist,+0.9\xdist)$);
		\coordinate (C1)    at ($(v)+(-1.2\xdist,0\xdist)$);
		\coordinate (C2)    at ($(v)+(-0.2\xdist,-1.0\xdist)$);
		\coordinate (C4)    at ($(u)!0.4!(a)$);
		\coordinate (C5)    at ($(b)+(0.25\xdist,-0.9\xdist)$);
		\coordinate (C6)    at ($(b)+(0.9\xdist,0\xdist)$);
		\coordinate (C7)    at ($(b)+(0\xdist,0.9\xdist)$);
		\coordinate (C8)    at ($(a)!0.3!(u)$);
		\fill[my-teal!20, draw=black] plot[smooth cycle, tension=.7] coordinates {(C0) (C1) (C2)  (C4) (C5) (C6) (C7) (C8)};

	\end{scope}
	\begin{scope}[xshift=12\xdist]
		\coordinate (origin) at (0,0) {};
		\node[point]  (a) at ($(origin) + (2.0\xdist, 0)$) {};
		\node[minpoint,label={[nodeweight]right:$7$}]   at (a) {};
		\node[point]  (b) at ($(origin) + (4.0\xdist, 0)$) {};
		\node[minpoint, label={[nodeweight]right:$2$}]   at (b) {};
		\node[reduced]  (v) at ($(origin) + (1.0\xdist, -2\xdist)$) {};
		\node[minpoint, label={left:$v$}, label={[reducednodeweight]200:$7$}]   at (v) {};
		\node[point]  (u) at ($(origin) + (3.0\xdist, -2\xdist)$) {};
		\node[minpoint, label={right:$u$}, label={[nodeweight]300:$10$}]   at (u) {};
		\node[point]  (d) at ($(origin) + (2.0\xdist, -4\xdist)$) {};
		\node[minpoint, label={[nodeweight]right:$4$}]   at (d) {};

		\draw[reducededge] (a) -- (v);
		\draw[edge] (a) -- (u);
		\draw[edge] (b) -- (u);
		\draw[reducededge] (v) -- (u);
		\draw[reducededge] (v) -- (d);
		\draw[edge] (u) -- (d);

		\coordinate (l1') at ($(origin) +(-1.5\xdist, -\xdist)$);
		\draw[->,decorate, decoration={snake,amplitude=0.3mm}] (l2') -- (l1');

		\coordinate (l1) at ($(origin) +(-1\xdist, -\xdist)$);
		\coordinate (l2) at ($(origin) +(+5.5\xdist, -\xdist)$);
		\coordinate (l2') at ($(origin) +(+7\xdist, -\xdist)$);
		\draw[edge, gray] (l1) -- node[pos=0.05, above, color=black] {$V_i^g$} node[pos=0.05, below, color=black] {$V_i$}  (l2);
		\node[align=center] at ($(d) + (0.75\xdist,-1.5\xdist)$) {$G_i' = G_i - v$};

	\end{scope}

\end{tikzpicture}
	\caption{Example of \nameref{dred:BasicSingleEdge} from the perspective of PE $i$. The reduction excludes the interface vertex $v\in V_i$ in $G_i$ because for a neighbor $u\in N_i(v)$ holds ${\color{red}10}=\w_i(u)\geq\w_i(N_i(u)\setminus N_i(v))={\color{red}9}$. This results in the \hbox{reduced graph $G_i'=G_i-v$.}}
	\label{fig:exampleBasicSingleEdge}
\end{figure}
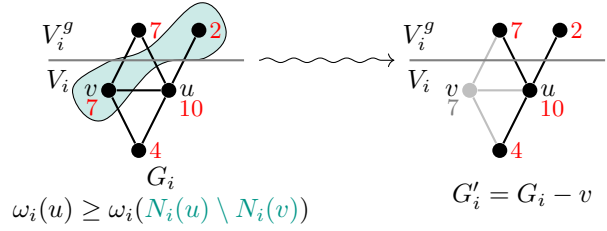

Next, we introduce \nameref{dred:ExtendedSingleEdge} to our \nameref{def:distRedModel}.
Again, we consider an edge in $G_i$ which is incident to two local vertices.
This time, we aim to reduce their \hbox{common local neighbors.}

\begin{distributedreduction}[Dist. Extended Single-Edge]\label{dred:ExtendedSingleEdge}
	Let $u,v\in V_i$ be adjacent vertices with $\w_i(\N_i(v)) - \w_i(u) \leq \w_i(v)$, then exclude $X=\N_i(v)\cap\N_i(u)\setminus \ghosts{i}$.

	\begin{tabular}{ll}
		\textit{Reduced Graph}  & $G_i'=G_i-X$           \\
		\textit{Offset}         & $\solw{G} = \solw{G'}$ \\
		\textit{Reconstruction} & $\I_i = \I_i'$
	\end{tabular}
\end{distributedreduction}

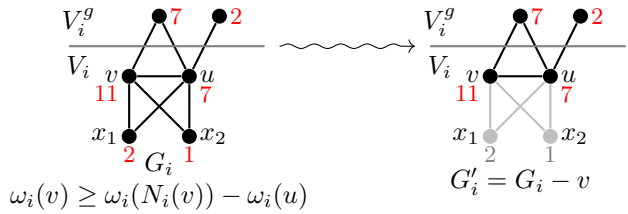
\begin{figure}
	\begin{tikzpicture}[
		minpoint/.style={inner sep=0pt, outer sep=0pt, draw=none},
		point/.style={circle, fill=black, inner sep=2pt},
		reduced/.style={circle, fill=gray!50, inner sep=2pt},
		included/.style={circle, draw=green!50!black, fill=gray!30, inner sep=2pt, line width=1.5pt},
		edge/.style={thick},
		reducededge/.style={edge, draw=gray!50},
		nodeweight/.style={font=\small, text=red, label distance=0.0mm},
		reducednodeweight/.style={font=\small, text=gray, label distance=0.0mm},
		pe2/.style={cyan!20,draw=black},
		pe1/.style={yellow!20,draw=black},
	]
	\pgfdeclarelayer{background}
	\pgfsetlayers{background,main}

	\begin{scope}
		\coordinate (origin) at (0,0) {};
		\node[point]  (a) at ($(origin) + (2.0\xdist, 0)$) {};
		\node[minpoint,label={[nodeweight]right:$7$}]   at (a) {};
		\node[point]  (b) at ($(origin) + (4.0\xdist, 0)$) {};
		\node[minpoint, label={[nodeweight]right:$2$}]   at (b) {};
		\node[point]  (v) at ($(origin) + (1.0\xdist, -2\xdist)$) {};
		\node[minpoint, label={left:$v$}, label={[nodeweight]200:$11$}]   at (v) {};
		\node[point]  (u) at ($(origin) + (3.0\xdist, -2\xdist)$) {};
		\node[minpoint, label={right:$u$}, label={[nodeweight]300:$7$}]   at (u) {};
		\node[point]  (x1) at ($(origin) + (1.0\xdist, -4\xdist)$) {};
		\node[minpoint,label={left:$x_1$}, label={[nodeweight]below:$2$}]   at (x1) {};
		\node[point]  (x2) at ($(origin) + (3.0\xdist, -4\xdist)$) {};
		\node[minpoint,label={right:$x_2$}, label={[nodeweight]below:$1$}]   at (x2) {};

		\draw[edge] (a) -- (v);
		\draw[edge] (a) -- (u);
		\draw[edge] (b) -- (u);
		\draw[edge] (v) -- (u);
		\draw[edge] (v) -- (x1);
		\draw[edge] (v) -- (x2);
		\draw[edge] (u) -- (x1);
		\draw[edge] (u) -- (x2);

		\coordinate (l1) at ($(origin) +(-1\xdist, -\xdist)$);
		\coordinate (l2) at ($(origin) +(+5.5\xdist, -\xdist)$);
		\coordinate (l2') at ($(origin) +(+6\xdist, -\xdist)$);
		\draw[edge, gray] (l1) -- node[pos=0.05, above, color=black] {$V_i^g$} node[pos=0.05, below, color=black] {$V_i$}  (l2);
		\node[align=center] at ($(x1) + (\xdist,-1.5\xdist)$) {$G_i$\\ $\omega_i(v) \ge \omega_i(N_i(v)) - \omega_i(u)$};
	\end{scope}
	\begin{scope}[xshift=12\xdist]
		\coordinate (origin) at (0,0) {};
		\node[point]  (a) at ($(origin) + (2.0\xdist, 0)$) {};
		\node[minpoint,label={[nodeweight]right:$7$}]   at (a) {};
		\node[point]  (b) at ($(origin) + (4.0\xdist, 0)$) {};
		\node[minpoint, label={[nodeweight]right:$2$}]   at (b) {};
		\node[point]  (v) at ($(origin) + (1.0\xdist, -2\xdist)$) {};
		\node[minpoint, label={left:$v$}, label={[nodeweight]200:$11$}]   at (v) {};
		\node[point]  (u) at ($(origin) + (3.0\xdist, -2\xdist)$) {};
		\node[minpoint, label={right:$u$}, label={[nodeweight]300:$7$}]   at (u) {};
		\node[reduced]  (x1) at ($(origin) + (1.0\xdist, -4\xdist)$) {};
		\node[minpoint,label={left:$x_1$}, label={[reducednodeweight]below:$2$}]   at (x1) {};
		\node[reduced]  (x2) at ($(origin) + (3.0\xdist, -4\xdist)$) {};
		\node[minpoint,label={right:$x_2$}, label={[reducednodeweight]below:$1$}]   at (x2) {};

		\draw[edge] (a) -- (v);
		\draw[edge] (a) -- (u);
		\draw[edge] (b) -- (u);
		\draw[edge] (v) -- (u);
		\draw[reducededge] (v) -- (x1);
		\draw[reducededge] (v) -- (x2);
		\draw[reducededge] (u) -- (x1);
		\draw[reducededge] (u) -- (x2);

		\coordinate (l1) at ($(origin) +(-1\xdist, -\xdist)$);
		\coordinate (l2) at ($(origin) +(+5.5\xdist, -\xdist)$);
		\coordinate (l1') at ($(origin) +(-1.5\xdist, -\xdist)$);
		\draw[edge, gray] (l1) -- node[pos=0.05, above, color=black] {$V_i^g$} node[pos=0.05, below, color=black] {$V_i$}  (l2);
		\node[align=center] at ($(x1) + (\xdist,-1.5\xdist)$) {$G_i' = G_i - v$};
		\draw[->,decorate, decoration={snake,amplitude=0.3mm}] (l2') -- (l1');

	\end{scope}

\end{tikzpicture}
	\caption{Example for \nameref{dred:ExtendedSingleEdge} from the perspective of PE $i$. It holds ${\color{red}11}=w_i(v)\geq \w_i(N(v)) - \w_i(u) = {\color{red} 10}$. Thus, the reduction excludes the common local neighbors $x_1$ and~$x_2$ of the adjacent local vertices $v$ and $u$. This results in the \hbox{reduced graph $G_i'=G_i-v$.}}
	\label{fig:exampleExtendedSingleEdge}
\end{figure}

\begin{remark}[Exhaustively Reducing Degree One Vertices]
	With the \nameref{dred:MetaRule}, we can apply \emph{Degree-One} by Gu~\etal~\cite{TowardsComputiGuJi2021} to reduce local degree one vertices with local neighbors.
	If an interface vertex $v\in V_i$ with $N_i(v)=\set{u}$ has weight $\w_i(v)\geq \w_i(u)$, \nameref{def:distRedModel} applies and $v$ is proposed to be included.
	However, so far it is not guaranteed that $v$ is ever reduced if its single neighbor $u$ has weight $\w_i(u)>\w_i(v)$.
	To that end, we support so-called \emph{move proposal} where PE $i$ transfers the ownership of $v$ to PE $j=\rank(u)$.
	Of course a move requires communication, but PE $i$ can freely schedule this communication.
	If PE $j$ receives the move, the ghost $v$ becomes a local degree one vertex with a local neighbor $u$ if $v$ was not reduced yet in the meantime at PE $j$.
	A special case occurs if $u$ is moved to PE $j$ as well.
	In this case $\set{u,v}$ form an isolated edge in $\G$.
	Both PEs reduce $u$ and $v$ the same way, so that no further communication is necessary.
	They follow \Cref{lem:includeOperation} to reduce the edge.
\end{remark}

To conclude the section, we note that all proposed distributed reduction rules adhere to the \nameref{def:distRedModel} and that there are no graph modifications outside this model.

\section{Distributed Reduction Algorithms.}
\label{sec:distReductionAlgos}
Now, we present a  distributed reduction algorithm that applies our previously developed reduction rules.
First, we discuss local work in the reduction process.
Then, we describe two approaches for the exchange of reduction progress at the border between adjacent PEs.

\subsection{Local Reduction Phase.}\label{sec:localReducePhase}
Each PE $i$ tests and applies the distributed reduction rules exhaustively to its local subgraph $G_i$.
From the perspective of a PE, this approach is similar to sequential reduction algorithms~\cite{FindingNearOpGrossma2023, TowardsComputiGuJi2021, ExactlySolvingLamm2019}.
The distributed reduction rules are processed in a \emph{fixed} order $O$.
Each rule $r$ is tested for all vertices.
If we can successfully apply $r$, i.e., reduce the graph, we restart this process with the very first rule in $O$.
Otherwise, the next rule in $O$ is considered.
This processes ensures that once no rule is left, $G_i'$ cannot be reduced any further by PE $i$.

\ifshowFull
Our reduction order follows the intuition to start with reduction rules that can reduce interface vertices as well as other locally.
If interface vertices are reduced, there is a chance that the local reduction subgraph becomes less connected to the remaining graph stored at other PEs.
Ideally, a good reduction impact of the following distributed reductions entails more reduction applications by rules which require a higher degree of locality.
Therefore, we put \emph{Degree One}~\cite{TowardsComputiGuJi2021} (\nameref{dred:MetaRule}) first and \nameref{dred:NeighborhoodRemoval} second.
Likewise, both rules can also be efficiently tested in $\mathcal{O}(1)$ and $\mathcal{O}(\Delta)$, where $\Delta$ is the maximum degree, given a vertex $v\in V_i$.
The first two data reduction rules are followed by \nameref{dred:SimplicialWeightTransfer} and \nameref{dred:SimplicialVertex}.
Afterward, we test for the first two cases of \emph{V-Shape}~\cite{TowardsComputiGuJi2021} (\nameref{dred:MetaRule}) to target vertices of degree two, and the excluding data reduction rules: \nameref{dred:BasicSingleEdge}, and \nameref{dred:ExtendedSingleEdge}.
Then, we try \emph{Neighborhood Folding}~\cite{ExactlySolvingLamm2019} (\nameref{dred:MetaRule}), followed by two rules of higher computational cost,~\nameref{dred:HeavyVertex} and \emph{Generalized Neighborhood Folding}~\cite{ExactlySolvingLamm2019} (\nameref{dred:MetaRule}).
The last two rules need to find an {\MWIS} for at least one subproblem.
The size of the subproblem size can be controlled with an extra parameter to circumvent too long running times for single subproblems.
For our experiments we choose $10$ as the maximum number of vertices in the subproblem.
The {\MWIS} is determined with the state-of-the-art branch-and-reduce solver {\kamis}~\cite{ExactlySolvingLamm2019}.
\else
We start with rules that can reduce interface vertices and then continue with other rules ordered by there computational complexity.
Ideally, a good reduction impact of the following distributed reductions entails more reduction applications by rules which require a higher degree of locality.
See the full version of the paper for more details.
\fi

Furthermore, we apply dependency checking.
This is a common technique to prune reduction tests by skipping vertices whose local context has not changed.
By local context, we refer to the subgraph which is \textit{read} in order to evaluate for a vertex whether a reduction rule applies to it.

\subsection{Communicating Reduction Progress.}
After applying all reduction rules locally, the distributed graph is not necessarily exhaustively reduced from a global perspective.
It is possible that a local subgraph can be further reduced if the reduction progress at the border is synchronized with adjacent PEs as receiving updates for reduced ghost vertices might re-open the search space in the local reduction phase.
Moreover, communication is necessary to break ties between proposals trying to include vertices connected by a cut-edge.
To that end, we introduce two different message types that enable the exchange of reduction progress for interface vertices $v \in V_i$:

\noindent (1) \textit{Weight Decrease.} If the weight of $v\in V_i$ is decreased, it can be updated at the adjacent PEs by sending $v$ together with its new weight $w_i'(v)$
to the PEs $\recv_i(v)$.

\noindent (2) \textit{Vertex Status.} A vertex status message informs the adjacent PEs that $v$ is either excluded, moved, or proposed to be included.

These messages are sent to the adjacent PEs $\recv_i(v)=\set{\rank(v):v\in \N_i(u)\cap \ghosts{i}}$,~\ie the PEs that own ghost neighbors of $v$.

\subsection{Synchronous Reduction Algorithm.}\label{sec:synchronousApproach}

\Cref{algo:KaDisReduS} shows a high-level description of our synchronous distributed reduction algorithm {\KaDisReduS}.
The algorithm repeats its four steps until no further global reduction progress can be achieved.
Once all PEs have completed the local reduce phase, the border is synchronized using irregular all-to-all exchanges.
\paragraph{Local Reduction.} PE $i$ reduces the local subgraph exhaustively as described in \Cref{sec:localReducePhase}.
Since messages are not sent until the local reduce phase has finished, we track weight-modifications, reduced and moved interface vertices.
This allows us to filter for redundant updates.
Therefore, we insert interface vertices of $V_i$ into a set $W_i$ if their weight is decreased.
Note that, if the weight of an interface vertex changes multiple times in the local reduce phase, only the final weight is relevant to other PEs.
Furthermore, we build a sequence of local vertices $M_i\subseteq V_i\times \set{\texttt{included}, \texttt{excluded}, \texttt{moved}}$ which were included, reduced, or moved.
\paragraph{Exchange Weight Updates.} Weight decrease updates for $v\in W_i$ are sent to all adjacent PEs if $v$ remains in $G_i'$ or was moved to its adjacent PE.
\paragraph{Filtering.} We drop vertex status entries from the sequence $M_i$ of the form $(v,\,\texttt{moved})$ if the single neighbor of $v$ was reduced later on.
In such a case, we can circumvent the move and include an isolated vertex $v$ into the solution.
Note that by the end of this step, the pruned sequence contains each vertex at most once \hbox{in some entry.}
\paragraph{Exchange Status Updates.}
We synchronize interface vertex statuses by sending $(v,s)\in M_i$ to each PE of $v$ that was adjacent to $v$ when the status $s$ was set.
Note that this can entail new modified interface vertices.
When a proposal to include a ghost $v$ is received and not reduced yet, the proposal succeeds.
Thus, local neighbors can be excluded and might entail new exclude vertex messages.
Moreover, our include operation \Cref{lem:includeOperation} tie-breaks conflicting proposals in the last step.

\begin{algorithm}
	\caption{{\KaDisReduS}}
	\label{algo:KaDisReduS}
	\begin{algorithmic}
		\Procedure{\KaDisReduS}{$G_i$}\Comment{subgraph $G_i$ of $G$ }
		\State{$(G_i', M_i, o_i)\gets (G_i, <>, 0)$}\Comment{initialisation}
		\While{global reduction progress}
		\State{$(G_i',\,o_i,\,W_i,\,M_i) \gets\,$LocalReduce($G_i',\,M_i,\,o_i$)}
		\State{$G_i'\gets\,$ ExchWeightUpdates($G_i',\,W_i$)}
		\State{$M_i\gets\,$ FilterMoves($G_i',\,M_i$)}
		\State{$(G_i',\,o_i,\,M_i)\gets\,$ ExchStatusUpdates($G_i',\,M_i$)}
		\EndWhile
		\State \Return{$G_i',\,o_i$}\Comment{local reduced graph and offset}
		\EndProcedure{}
	\end{algorithmic}
\end{algorithm}

\subsection{Asynchronous Reduction Algorithm.}\label{sec:asynchronousApproach}
A drawback of {\KaDisReduS} is the synchronization overhead when PEs wait for others to finish their reduce phase.
To that end, we present {\KaDisReduA} -- a distributed reduction algorithm using \emph{asynchronous} communication.

	{\KaDisReduA} mitigates idle times by sending and receiving border updates directly during the local reduce phase by exchanging updates in an asynchronous fashion.
However, exchanging messages involves a start-up overhead which is usually orders of magnitude larger than the time required to exchange a machine word.
Therefore, we employ adaptive message buffering, using the asynchronous message queue BriefKAsten~\cite{EngineeringADSander2023}.
That is, instead of exchanging single update messages, we write them into buffers that accumulate and exchange them once a certain threshold has been reached or the buffer is explicitly flushed.
The buffer size threshold is an additional parameter that allows us to choose a good trade-off between the start-up overhead, the sent message-sizes, and idle times.

\subsection{Reconstruction.}\label{sec:reconstruction}
Given the reduced graph $\G'$ and an MWIS for $\G'$ we want to reconstruct a solution for $\G$.
Compared to a sequential reduce algorithm, our distributed reduce algorithms also require communication for this final step.
The reason for that are folded vertices which are moved to another PE.
This introduces reverse dependencies between PEs in the reconstruction phase which require a message exchange.

\section{Distributed Weight Independent Set Solvers.}
\label{sec:distributedSolvers}

Our distributed reduction algorithms allow us to build exact and heuristic distributed algorithms for the {\MWIS} problem.
As a first step towards a broad range of distributed {\MWIS} solvers, we present two heuristic algorithms --- both in synchronous \hbox{and asynchronous variants.}

\paragraph{Reduce-And-Greedy}.
\label{sec:reduce-and-greedy}
The first algorithm is a reduce-and-greedy algorithm.
It applies the distributed reduce algorithm followed by a distributed variant of Luby's algorithm to find a greedy solution for the reduced graph.
A vertex $v\in V_i$ is included at PE $i$ if it maximizes its weight $\w_i(v)$ among its neighbors $N_i(v)$.
Included interface vertices are communicated with the adjacent PEs so that the border is synchronized.
If an interface vertex $v$ has the same weight as a ghost neighbor $v$, the rank of the PEs is used to break ties.

\paragraph{Reduce-And-Peel}.
\label{sec:reduce-and-peel}
Furthermore, we propose a distributed reduce-and-peel solver.
This scheme has been successfully applied in the sequential solver {\htwis}~\cite{TowardsComputiGuJi2021}.
The main idea is to iteratively apply two steps.
First, apply reductions exhaustively in a reduce phase.
Then, a vertex $v \in V$ that is unlikely to be in any \emph{MWIS} is \emph{peeled}.
Excluding $v$ might enable new reduction applications in the first phase.
Often, this achieves a better solution quality than plain greedy decisions.

We propose a synchronous and asynchronous reduce-and-peel solver.
They use the respective distributed reduce algorithm to exhaustively reduce the distributed graph.
Afterwards, each PE peels a local vertex $v\in V_i$ before proceeding with the reduce phase.
We choose $u\in V_i$ so that it maximizes $N_i(v)-\w_i(v)$ similar to {\htwis}.
Note that both communication approaches reach a communication barrier when they finish the reduce phase.
Otherwise, a PE might peel all the vertices before it has received reduction messages of other PEs.

\section{Experiments.}
\label{sec:experiments}
In our evaluation, we use up to 16 compute nodes (1\,024 cores) of HoreKa\footnote{We also conducted experiments on SuperMUC-NG with similar results.
\ifshowFull
\else
Due to the space constraint we do not report them.
\fi
}
where each node is equipped with two Intel Xeon Platinum 8368
processors with 38 cores each and 256~GB of main memory. Compute nodes are connected by an InfiniBand 4X HDR 200~GBit/s interconnect.

We compare the following algorithms (see \cite{SourceCode, Reproducibility} for accompanying source code).

\noindent \textbf{\KaDisReduS/\KaDisReduA.} Our distributed synchronous and asynchronous reduction algorithms as presented in \Cref{sec:synchronousApproach} and \Cref{sec:asynchronousApproach}.

\noindent \textbf{\sG/\aG.} Our distributed (a)synchronous variant of Luby's (greedy) algorithm (\Cref{sec:distributedSolvers}).

\noindent \textbf{\sRG/\aRG.} Our distributed (a)synchronous reduce-and-greedy solver applying \KaDisReduSA{} (\Cref{sec:reduce-and-greedy}).

\noindent \textbf{\sRnP/\aRnP.} Our distributed (a)synchronous reduce-and-peel solvers using \KaDisReduSA{} (\Cref{sec:reduce-and-peel}).

\noindent \textbf{\htwis.} Sequential state-of-the-art reduce-and-peel algorithm~\cite{TowardsComputiGuJi2021}.

All algorithms are implemented in C++ and compiled using GNU GCC13.
For interprocess communication, we use IntelMPI 2021.11 and the MPI-wrapper~{\kamping}~\cite{DBLP:conf/sc/UhlSHHKSS024}.
For (buffered) asynchronous communication, we use the {\msgqueue} BriefKAsten~\cite{EngineeringADSander2023, uhl2025briefkasten}.
Moreover, we use the branch-and-reduce solver {\kamis}~\cite{ExactlySolvingLamm2019} to solve MWIS subproblems in our generalized data reductions.
Note that we do not compare our solvers against {\kamis}, since, unlike our algorithms, it is an exact solver.

We evaluated our algorithm in strong- and weak-scaling experiments.
\paragraph{Strong Scaling.} We consider 47 real-world graphs with uniform random weights in $[1,200]$.
The graphs have one to 118 million vertices and average degrees between two and 160 
\ifshowFull
(see \Cref{appendix:tab:instances} for an overview).
\else
(see the full paper version for an overview).
\fi
All experiments are run four times with a time limit of two hours.
For our distributed algorithms, the input graph is edge-balanced.

\paragraph{Weak Scaling.} We consider Erdős–Rényi ({\myGNM}), 2D random-geometric ({\myRGG}), and random hyperbolic graphs ({\myRHG}) with a power-law exponent of $\gamma=2.8$ as input generated with the distributed graph generator KaGen~\cite{DBLP:journals/jpdc/FunkeLMPSSSL19}.
Here, we choose a fixed number $N=2^{20}$ vertices and $M=2^{22}$ edges per core so that the graph grows linearly with the number of cores.

\subsection{Evaluating our Reduction Algorithms.}
We first analyze the reduction impact of our distributed reduction algorithms.
\Cref{fig:relative-reduction-impact} shows the reduced graph size (number of vertices) relative to the input graph and the difference for $p$ cores relative to one core when using \KaDisReduA{} for our real-world instances.
\ifshowFull
Results for \KaDisReduS{} are very similiar and are shown in \Cref{appendix:fig:kernelsize-runningtime-reduction} 
alongside the reduction impact on the number of edges.
\else
Results for \KaDisReduS{} are very similiar and can be found in 
the full-paper version %
alongside the reduction impact on the number of edges.
\fi
In general, the reduction impact worsens with an increasing number of cores.
Most likely, limitations of reducing border vertices seem to be the reason for this observation.
Still, this increase in the number of non-reduced vertices remains below $10\%$ relative to the input graph on the median (below $30\%$ for $Q3$ of the instances).

\begin{figure}
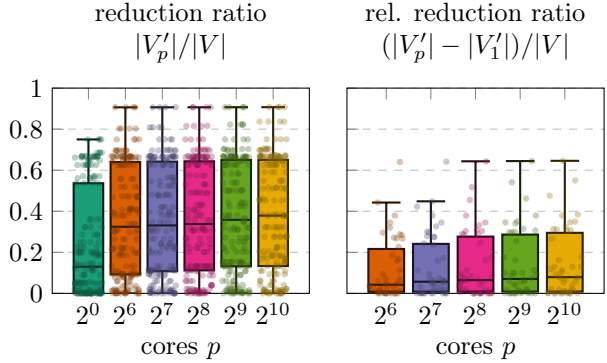

	\centering
	\begin{tikzpicture}
		\begin{groupplot}[group style={
						group size=2 by 1,
						horizontal sep=5mm,
						yticklabels at=edge left,
					},
				width=50mm,
				legend style={draw=none},
				ytick pos=left,
				ymajorgrids,
				ymax={1.0},
				ymin={0},
				ymode=normal,
				xlabel={cores $p$},
				boxplot/draw direction=y,
				grid style=dashed,
				title style={align=center},
			]
			\nextgroupplot[title={reduction ratio \\ $|V_p'|/|V|$},
				xtick={1, 2, 3, 4, 5, 6},
				xticklabels={$2^0$,$2^6$,$2^7$,$2^8$,$2^9$,$2^{10}$},
			]
			\input{plots/boxplot-rel-kernel-vertices-aRG/plain_plot.tex}
			\nextgroupplot[title={rel. reduction ratio\\$(|V_p'|-|V_1'|)/|V|$},
				ymajorgrids,
				xtick={1, 2, 3, 4, 5},
				xticklabels={$2^6$,$2^7$,$2^8$,$2^9$,$2^{10}$},
				boxplot/draw direction=y,
			]
			\input{plots/boxplot-rel-kernel-vertices-change-aRG/plain_plot.tex}
		\end{groupplot}
	\end{tikzpicture}
	\caption{(Relative) reduction impact on the number of vertices when running \KaDisReduA{} in our strong-scaling experiments on up to 1\,024 cores.}
	\label{fig:relative-reduction-impact}
\end{figure}

\begin{figure}
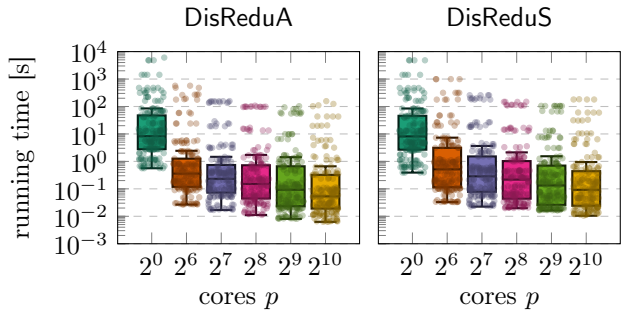

	\centering
	\begin{tikzpicture}
		\begin{groupplot}[group style={
						group size=2 by 1,
						horizontal sep=2.5mm,
						vertical sep=15mm,
						ylabels at=edge left,
						yticklabels at=edge left,
					},
				width=0.65\columnwidth,
				legend style={draw=none},
				xlabel={cores $p$},
				xtick={1,2,3,4,5,6},
				xticklabels={$2^0$,$2^6$,$2^7$,$2^8$,$2^9$,$2^{10}$},
				ytick pos=left,
				ylabel={running time [s]},
				ymajorgrids,
				ymode=log,
				ymax={10000.0},
				ymin={0.001},
				ytick={0.001, 0.01, 0.1, 1, 10, 100, 1000, 10000},
				boxplot/draw direction=y,
				grid style=dashed,
			]
			\nextgroupplot[title=\KaDisReduA{}]
			\input{plots/boxplot-reduction-time-aRG/plain_plot.tex}
			\nextgroupplot[title=\KaDisReduS{}]
			\input{plots/boxplot-reduction-time-RG/plain_plot.tex}

		\end{groupplot}
	\end{tikzpicture}
	\caption{Running time of {\KaDisReduA} (left) and {\KaDisReduS} (right) on up to $1024$ cores in our strong-scaling experiments.}
	\label{fig:reduction-time-strong-scaling}
\end{figure}
\Cref{fig:reduction-time-strong-scaling} shows the running time of \KaDisReduA{} and \KaDisReduS{}.
Both variants scale well with increasing number of cores. On geometric mean, the asynchronous {\KaDisReduA} is $1.24\times$ (at $64$ cores) and $1.5\times$ (at 1\,024 cores) faster than {\KaDisReduS}.

The reduction impact can be enhanced when running a (distributed) graph partitioning algorithm prior to the reduction phase.
For example, when partitioning the input graphs with {\dKaMinPar}~\cite{DBLP:conf/europar/SandersS23} on 1\,024 cores using an edge imbalance of $3\,\%$, we obtain reduced graphs with $|V'|/|V|=0.25$ instead of $0.38$ on median 
\ifshowFull
(see~\Cref{appendix:tab:part} for more details).
\else
(see the full paper version for more details).
\fi
Note that we were not able to compute a partitioning for three graphs because of memory limitations.
However, on geometric mean, the running time increases by a factor of \numprint{9.7}.%

\begin{table*}
	\centering
	\begin{tabular}{rrrrrrrrrrrrrrrr}
		\multicolumn{1}{r}{} &  & \multicolumn{3}{c}{$\w(\I)/\w(\I_{best})$} &  & \multicolumn{3}{c}{$t [s]$} & & \multicolumn{3}{c}{$t_{\htwis}/t$} \\
		\multicolumn{1}{r}{} &  & \multicolumn{1}{c}{1}                      & \multicolumn{1}{c}{64} & \multicolumn{1}{c}{\numprint{1024}}           &  & \multicolumn{1}{c}{1}   & \multicolumn{1}{c}{64} & \multicolumn{1}{c}{\numprint{1024}} &  & \multicolumn{1}{c}{1} & \multicolumn{1}{c}{64} & \multicolumn{1}{c}{\numprint{1024}} \\
		\cmidrule{1-1}  \cmidrule{3-5}  \cmidrule{7-9}  \cmidrule{11-13}
		\sG    &  & \numprint{0.9151}          & \numprint{0.9151}          & \numprint{0.9151}           &  & \textbf{\numprint{2.02}} & \textbf{\numprint{0.18}} & \numprint{0.05}         &  & \textbf{\numprint{2.2}} & \textbf{\numprint{24.7}} & \numprint{96.4}          \\
\sRG   &  & \numprint{0.9481}          & \numprint{0.9391}          & \numprint{0.9371}           &  & \numprint{6.73}          & \numprint{0.43}          & \numprint{0.12}         &  & \numprint{0.7}          & \numprint{10.5}          & \numprint{37.6}          \\
\sRnP  &  & \numprint{0.9976}          & \numprint{0.9839}          & \numprint{0.9767}           &  & \numprint{14.61}         & \numprint{0.99}          & \numprint{0.49}         &  & \numprint{0.3}          & \numprint{4.6}           & \numprint{9.2}           \\
\aG    &  & \numprint{0.9151}          & \numprint{0.9151}          & \numprint{0.9151}           &  & \numprint{2.47}          & \numprint{0.19}          & \textbf{\numprint{0.03}}&  & \numprint{1.8}          & \numprint{23.5}          & \textbf{\numprint{128.9}}\\
\aRG   &  & \numprint{0.9481}          & \numprint{0.9391}          & \numprint{0.9371}           &  & \numprint{7.23}          & \numprint{0.39}          & \numprint{0.09}         &  & \numprint{0.6}          & \numprint{11.7}          & \numprint{50.6}          \\
\aRnP  &  & \numprint{0.9976}          & \textbf{\numprint{0.9882}} & \textbf{\numprint{0.9775}}  &  & \numprint{15.91}         & \numprint{0.87}          & \numprint{0.14}         &  & \numprint{0.3}          & \numprint{5.2}           & \numprint{33.2}          \\
\cmidrule{1-1}  \cmidrule{3-5}  \cmidrule{7-9}  \cmidrule{11-13}                                                                                                                                                                                                        
\htwis &  & \textbf{\numprint{0.9981}} & -                          & -                           &  & \numprint{4.51}          & -                        & -                       &  & \numprint{1.0}          & -                        & -                        \\

	\end{tabular}
  \caption{Summarized results for 1, 64, and \numprint{1024} cores: Solution quality compared to best found solution by any algorithm, speedup over {\htwis}, running time.}\label{tab:soa}
\end{table*}

\subsection{Comparing Heuristic Solvers.}
We now compare our distributed (heuristic) solver with the sequential solver {\htwis}.
Common results were found for all configurations (including \htwis) for 36/47 graphs.
{\htwis} only supports 32 bit integers which leaves 41 instances for a direct comparison.
In the experiments, {\htwis} generated wrong solution weights for two graphs and for another two graphs the solution quality was far off from ours so that we excluded them to prevent that wrong solutions are taken into account.
For a 5-th instance, our reduce-and-peel solvers were unable to output a solution in time for $p=1$.
\Cref{tab:soa} summarizes the results.
In terms of solution quality, the reduce-and-peel solvers perform best while {\htwis} is a factor \numprint{3.2} and \numprint{3.5} faster than {\sRnP} and {\aRnP}, respectively, on one core.
For \numprint{1024}, {\aRnP} still maintains a solution quality of \numprint{0.978} which is only about 2\% worse than the average solution quality of {\htwis}.
At this scale, {\aRnP} is noticeably better than the solution quality achieved by the {\aG} (\numprint{0.915}) and {\aRG} (\numprint{0.937}).
In terms of running time, the greedy approaches are the fastest, followed by the reduce-and-greedy and the reduce-and-peel algorithms.
For sufficient large $p$, we note that the asynchronous approaches are faster than their synchronous counterparts.
For example, on \numprint{1024} cores, {\aRnP} is a factor \numprint{3.5} faster than {\sRnP}.
As a result, we achieve speedups of \numprint{33}, \numprint{51}, and \numprint{129} over {\htwis} with {\aRnP}, {\aG}, and {\aRG}, respectively.

\subsection{Weak-Scaling Experiments.}
\Cref{fig:throughput} shows the throughput for all the weak-scaling instances
and \Cref{tab:weak} the corresponding solution quality (on \numprint{1024} cores).
{\sRG} was not able to process {\myGNM} for $p\geq 64$ due to its higher memory consumption.
Apart from {\sRG} for {\myGNM}, we observe a good scaling behavior of the throughput.
We observe only small differences between the synchronous and asynchronous approaches in terms of solution quality and running time.
Interestingly, the reduce-and-greedy and reduce-and-peel approaches have the same solution quality for {\myRHG}.
This is due to the fact that {\myRHG} can be reduced to less than $0.01\%$ vertices relative to the input graph.
{\myRGG} can be reduced to $34\,\%$ vertices and the reduced graph of {\myGNM} still has $\numprint{98.1}\,\%$ of the initial size.
As a result, the solution quality of {\aRG} and {\aG} for {\myGNM} is the same.
Nonetheless, we note a strong improvement in solution quality with {\aRnP} which indicates that reductions become applicable once sufficient vertices were peeled.

For random hyperbolic graphs, we were able to run an even larger configuration with $M=2^{24}$ edges per core ({\myRHGBig}) without exceeding the available main memory.
For \numprint{1024} cores, this is an input graph with more than one billion vertices and 17 billion of edges.
\ifshowFull
(see \Cref{appendix:tab:weak-scaling-results} for details).
\else
(see the full paper version for details).
\fi

\begin{figure}
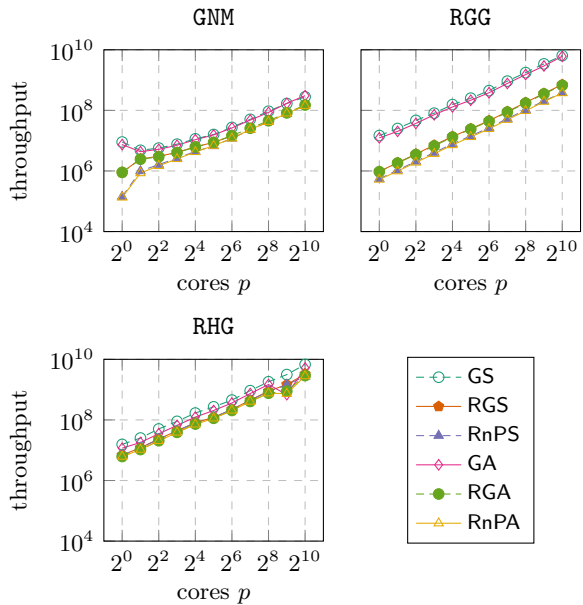

	\centering
	\begin{tikzpicture}
		\begin{groupplot}[group style={
						group size=2 by 2,
						horizontal sep=5mm,
						vertical sep=17mm,
						ylabels at=edge left,
						yticklabels at=edge left,
					},
				width=45mm,
				height=40mm,
				cycle list name=lineplotlist,
				legend style={draw=none},
				xtick={1,4, 16, 64, 256, 1024},
				xticklabels={$2^0$, $2^2$, $2^4$, $2^6$, $2^8$, $2^{10}$},
				xlabel style={font=\small},
				ylabel style={font=\small},
				xticklabel style={font=\small},
				yticklabel style={font=\small},
				ymin=10000,
				ymax=10000000000,
				ymode={log},
				xmode={log},
				log basis x=2,
				ytick pos=left,
				ylabel={throughput},
				xlabel={cores $p$},
				xmajorgrids=true,        %
				ymajorgrids=true,        %
				grid style=dashed,
			]
			\nextgroupplot[title=\myGNM,
				legend to name=group legend,
				legend columns=1,
				legend cell align=left,
				legend style={/tikz/every even column/.append style={column sep=1em}, draw=black, font=\footnotesize}]
			\input{plots/lineplot-throughput-gnm-undirected-N20-M22/plot.tex}

			\nextgroupplot[title=\myRGG]
			\input{plots/lineplot-throughput-rgg2d-N20-M22/plot.tex}

			\nextgroupplot[title=\myRHG]
			\input{plots/lineplot-throughput-rhg-N20-M22-g2.8/plot.tex}

			\nextgroupplot[hide axis]
		\end{groupplot}
    \node[anchor=center, xshift=0cm] at (group c2r2.center) {\pgfplotslegendfromname{group legend}};
	\end{tikzpicture}
  \caption{Throughput (edges per second) of our different distributed algorithms on up to \numprint{1024} cores.}\label{fig:throughput}
\end{figure}

\begin{table}
	\begin{tabular}{rrrrrrrr}
		\multicolumn{1}{r}{} &  & \multicolumn{6}{c}{$\w(\I)/\w(\I_{best})$} \\                          		\multicolumn{1}{r}{} &  & \multicolumn{1}{c}{\sG}                    & \multicolumn{1}{c}{\sRG} & \multicolumn{1}{c}{\sRnP}   & \multicolumn{1}{c}{\aG} & \multicolumn{1}{c}{\aRG} & \multicolumn{1}{c}{\aRnP} 	\\	\cmidrule{1-1}  \cmidrule{3-8}

		\myGNM & &\numprint{0.89} & - & \numprint{0.99} & \numprint{0.89} & \numprint{0.89} & \textbf{\numprint{1.00}} \\
\myRGG & &\numprint{0.91} & \numprint{0.96} & \numprint{1.00} & \numprint{0.91} & \numprint{0.96} & \textbf{\numprint{1.00}} \\ 
\myRHG & &\numprint{0.96} & \numprint{1.00} & \numprint{1.00} & \numprint{0.96} & \numprint{1.00} & \textbf{\numprint{1.00}} \\ 
\myRHGBig & &\numprint{0.85} & \numprint{1.00} & \numprint{1.00} & \numprint{0.85} & \numprint{1.00} & \textbf{\numprint{1.00}} \\ 

	\end{tabular}
	\caption{Weak scaling results: Solution quality for \numprint{1024} cores.}%
	\label{tab:weak}
\end{table}

\section{Conclusion.}
\label{sec:conclusion}
In this work, we present the first reduction rules for the maximum weight independent set problem for distributed memory.
They are based on sequential data reductions rules which can be tested and applied on a subgraph.
We use these novel rules to develop synchronous and asynchronous reduction algorithms which given a distributed input graph return an equivalent instance of smaller size.
Furthermore, we also present two heuristic approaches (reduce-and-greedy and reduce-and-peel) solving the maximum weight independent set problem in distributed memory.
We evaluate all algorithms on up to $1024$ cores, achieving substantial speedups over the state-of-the-art sequential algorithm while maintaining a good reduction ratio even on large processor configurations on graphs with (up to) billions of vertices.

Regarding future work, we plan to further improve our reduction algorithm by incorporating additional data reduction rules.
Moreover, it would be also very interesting to work on more (in)exact distributed-memory solvers for the \MWIS{} problem in general.

\section*{Acknowledgments.}

\begin{wrapfigure}{R}{.33\columnwidth}
      \vspace{-1.25\baselineskip}
      \includegraphics[width=.33\columnwidth]{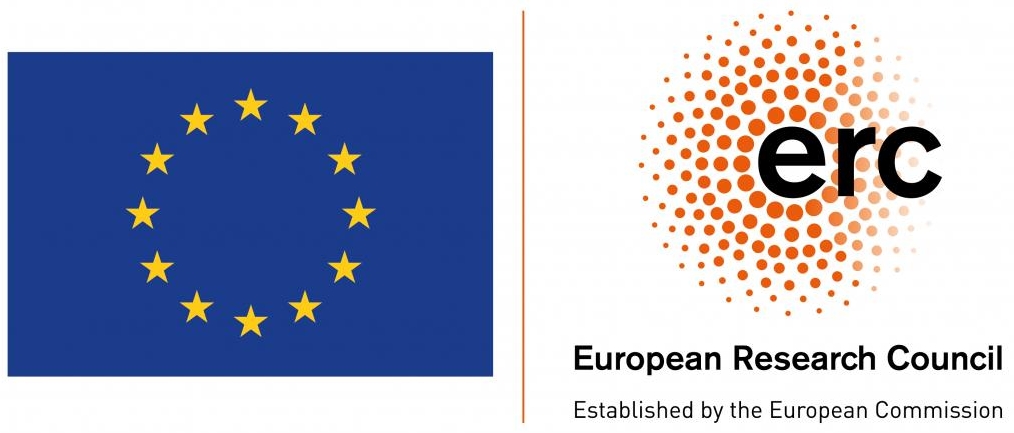}
\end{wrapfigure}	

This project has received funding from the European Research Council (ERC) under the European Union’s Horizon 2020 research and innovation programme (grant agreement No. 882500).
Moreover, the authors acknowledge support by DFG grant SCHU 2567/3-1.
This work was performed on the HoreKa supercomputer funded by the Ministry of Science, Research and the Arts Baden-Württemberg and by the Federal Ministry of Education and Research.
The authors gratefully acknowledge the Gauss Centre for Supercomputing e.V. (\protect\url{www.gauss-centre.eu}) for funding this project by providing computing time on the GCS Supercomputer SuperMUC-NG at Leibniz Supercomputing Centre (\protect\url{www.lrz.de}).

\clearpage
\bibliographystyle{siamplain}
\bibliography{lib}

\clearpage
\ifshowFull
\appendix
\section{Omitted Reduction Model Proofs}\label{appendix:sec:proofs}
This section provides the omitted proofs for the key properties from our distributed reduction model.
\setcounter{theorem}{1}
\begin{lemma}[Upper Bound for Ghost Weights, \Cref{lem:weights}]\label{appendix:lem:weights}
	Let $v\in \ghosts{i}$.
	It follows that $\w_i(v)\geq \w_{\rank(v)}(v)$ or $v$ is globally reduced,~\ie $v\not\in V$.
\end{lemma}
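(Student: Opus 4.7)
The proof will proceed by induction on the sequence of local modifications and received weight-update messages at PE $i$. The base case is initialization: every ghost $v\in\ghosts{i}$ is replicated together with its original weight, so $\w_i(v)=\w_{\rank(v)}(v)=\w(v)$ and the inequality holds with equality. For the induction step, I would distinguish three ways the values $\w_i(v)$ and $\w_{\rank(v)}(v)$ can evolve: (i) PE $\rank(v)$ modifies the weight of $v$ as a local/interface vertex, (ii) PE $i$ receives a weight-decrease message for $v$ from $\rank(v)$, or (iii) $v$ becomes globally reduced (excluded, moved, or proposed-to-be-included and accepted).

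The crucial observation from \Cref{def:distRedModel} is that, as long as $v$ is an interface vertex, its owning PE $\rank(v)$ may only \emph{decrease} its weight; no operation increases it. Thus the value $\w_{\rank(v)}(v)$ is monotonically non-increasing over time. PE $i$ only changes $\w_i(v)$ when it explicitly receives an update, and such an update reflects a weight value that held at $\rank(v)$ strictly after the one $i$ previously knew. Combining these two facts yields
\[
\w_i(v) \;\geq\; \w_{\rank(v)}(v),
\]
because the locally stored value at PE $i$ is always some past value from the monotonically non-increasing sequence maintained by $\rank(v)$.

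The remaining case is when $v$ leaves $V$. This can happen on $\rank(v)$ either through exclusion, a move whose target subsequently removes $v$, or through $v$ being proposed-to-be-included with an accepted proposal; in all of these cases the second clause $v\notin V$ of the disjunction is satisfied, and no claim about $\w_i(v)$ is needed. The main (and essentially only) subtlety is to argue that no operation in the model can bring $\w_{\rank(v)}(v)$ \emph{above} the last value that $i$ was informed about; this is immediate from the restriction in \Cref{def:distRedModel} that the weight of an interface vertex can only be decreased, so there is nothing deeper to do. Altogether, either the invariant is preserved, or $v$ has been globally reduced, which is exactly the disjunction stated in the lemma.
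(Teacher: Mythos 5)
Your proof is correct and takes essentially the same route as the paper's: both rest on the single observation from \Cref{def:distRedModel} that the owning PE $\rank(v)$ may only \emph{decrease} the weight of an interface vertex, so the value cached at PE $i$ is always some past value of a non-increasing sequence and hence an upper bound; your induction over modification and message events merely formalizes what the paper states in one step. One side remark in your argument---that a received update reflects a value held ``strictly after'' the one PE $i$ previously knew---is not obviously guaranteed under asynchronous message delivery, but it is also unnecessary, since your concluding synthesis only uses that the cached value is \emph{some} past value of the monotone sequence, which is all that is needed.
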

\begin{proof}
	Let $v\in \ghosts{i}$ and ssume $v\in V$.
	Consider PE $j=\rank(v)$ which owns the vertex $v$ and its weight.
	According to the \nameref{def:distRedModel}, the weight of $v$ can only be modified by PE~$j$.
	Further, for interface vertices the weight modification is limited to decreasing the weight.
	Thus, PE $j$ always knows an upper bound for the weight \hbox{of $v$,~\ie $w_i(v)\geq w_j(v)$.}
\end{proof}

\begin{lemma}[Neighborhood, \Cref{lem:neighborhood}]\label{appendix:lem:neighborhood}
	Let $v\in V$.
	If $v\in V_i$, it holds $N(v)\subseteq N_i(v)$ where the vertices of $N_i(v)\setminus N(v)$ are already reduced \hbox{by other PEs.}
	If $v\in \ghosts{i}$, it holds $N(v)\cap V_i \subseteq N_i(v)$.
\end{lemma}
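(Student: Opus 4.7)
The plan is to establish both inclusions as invariants maintained by every allowed operation in the distributed reduction model (\Cref{def:distRedModel}), then to argue by induction on the sequence of global modifications, starting from the initial distributed representation of $\G$. At initialization, for every local vertex $v\in V_i$ we have $N_i(v)=N(v)$, and for every ghost $v\in \ghosts{i}$ we have $N_i(v)=N(v)\cap V_i$, because the subgraph stored at PE~$i$ is exactly the edges incident to $V_i$ and the local parts of ghost neighborhoods. The invariants hold at this point with equality, and the rest of the proof tracks how each permissible change can only widen (or preserve) the gap in the claimed direction.

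For the first case, let $v\in V_i\cap V$ and consider any $u\in N(v)$. The only way $u$ could leave $N_i(v)$ is (i)~removal of the edge $\{u,v\}$ or (ii)~removal of $u$ from $G_i$. Removal of $u$ from $G_i$ is only allowed when $u$ is reduced in $\G$ (if $u\in V_i$ this is PE~$i$'s own reduction, and if $u\in\ghosts{i}$ this is a status update received from $\rank(u)$), in both cases implying $u\notin N(v)$, contradiction. Edge removal without removing an endpoint is not performed by any of the reductions in \Cref{sec:distributedReductionRules}: every rule either deletes entire closed neighborhoods, excludes whole vertices, or folds followed by removal. Consequently $N(v)\subseteq N_i(v)$, and any $u\in N_i(v)\setminus N(v)$ must be a ghost that was reduced at another PE whose status update has not yet reached PE~$i$; in particular $u\notin V_i$, since a local reduction of PE~$i$ would remove $u$ from $N_i(v)$ at the same time.

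For the second case, let $v\in \ghosts{i}\cap V$ and let $u\in N(v)\cap V_i$. By construction $N_i(v)\subseteq V_i$, and the edge $\{u,v\}$ was present in $G_i$ from the outset. Only PE~$i$ may modify the incidence between a local vertex and a ghost, and, as in Case~1, any modification that removes $u$ from $N_i(v)$ either reduces $u$ (forcing $u\notin V$, hence $u\notin N(v)$) or removes $v$ from $\ghosts{i}$ (forbidden while $v\in V$, by the last bullet of \Cref{def:distRedModel}, since reducing a ghost requires a local neighbor that was just proposed to be included). Both alternatives contradict $u\in N(v)\cap V_i$, so $u$ is still in $N_i(v)$, which yields the claimed inclusion.

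The main obstacle is handling the fold-type rules (\nameref{dred:SimplicialWeightTransfer}, generalized neighborhood folding, etc.), because these do not simply delete vertices but also reshape neighborhoods. Here I would verify, rule by rule, that whenever a rule alters $N_i(v)$ for some retained vertex $v$, the removed incidences correspond exactly to vertices that were simultaneously removed from $\G$, and that no new edge to a vertex in $V$ is silently introduced on the ghost side. Once this case analysis is in place the inductive step goes through uniformly, and the two claimed inclusions follow.
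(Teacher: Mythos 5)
Your overall strategy---treating the two inclusions as invariants preserved by every operation of \Cref{def:distRedModel} and inducting over the modification sequence---is legitimate and in fact more explicit than the paper's own proof, which argues statically: an edge with both endpoints in $V_i$ can only be represented in $E_i'$, cut-edges are replicated at both incident PEs, and the model forbids inserting new cut-edges, so the inclusions follow directly from the definition of $\G$ as the union of the local graphs. However, your Case~1 contains a concrete error: the claim that any $u\in N_i(v)\setminus N(v)$ ``must be a ghost,'' justified by ``in particular $u\notin V_i$, since a local reduction of PE~$i$ would remove $u$ from $N_i(v)$ at the same time.'' This overlooks that an \emph{interface vertex} $u\in V_i$ can be reduced remotely: when some PE $j$ proposes to include a vertex $x\in V_j$ with $u\in N_j(x)$, the ghost copy of $u$ is reduced at PE $j$, and by the definition of the modified global graph ($V'=\bigcup_i V_i'\setminus(\bigcup_{j\neq i}\ghosts{j}\setminus \rghosts{j})$) this removes $u$ from $V$ even though $u$ still sits untouched in $G_i$. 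This situation is not exotic---it is precisely the scenario that \Cref{lem:IncInterfaceVertices} and \Cref{rem:includeVertex} exist to handle. So $N_i(v)\setminus N(v)$ can very well contain vertices of $V_i$. The false strengthening is also unnecessary: for $u\in N_i(v)\setminus N(v)$, if $u$ were still in $V'$ then $\set{u,v}\in E_i'$ together with $u,v\in V'$ would put the edge in $E'=\bigcup_k E_k'\cap\binom{V'}{2}$, forcing $u\in N(v)$; hence $u\notin V'$, and since $u$ survives in $G_i'$ it must have been reduced by a PE other than $i$, which is exactly the clause the lemma asserts.

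The second issue is that your final paragraph is a promissory note rather than a proof: the verification that fold-type rules never remove an incidence of a retained vertex, nor silently add a ghost-side edge, is named as ``the main obstacle'' but not carried out, so the induction step is incomplete as written. It can be discharged in one stroke, though: neighborhood folding, generalized folding, and the V-shape rules enter the distributed setting only through the \nameref{dred:MetaRule}, which requires that neither ghosts nor interface vertices were read; such an application is a correct sequential reduction applied in $\G$ whose representation changes $G_i$ alone, so $N(v)$ and $N_i(v)$ change in lockstep and the invariant is trivially preserved, while \nameref{dred:SimplicialWeightTransfer} (the only border-adjacent rule that reshapes weights) deletes whole vertices and decreases weights but never deletes a bare edge. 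Finally, a minor imprecision in Case~2: reducing the ghost $v$ at PE~$i$ is not ``forbidden while $v\in V$''---the model permits it when a local neighbor is proposed for inclusion, and performing it is what removes $v$ from $V$; the correct phrasing is that it would contradict the standing hypothesis $v\in V$, which leaves your conclusion intact.
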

\begin{proof}
	First, consider $v\in V_i\cap V$ and let $u\in N(v)$.
	If $u\in V_i$, both vertices are local and therefore, $u\in N_i(v)$.
	New cut-edges cannot be inserted in our \nameref{def:distRedModel}.
	Cut-edges from the distributed input graph $G$ are replicated at the respective PEs.
	Thus, we obtain $u\in N_i(v)$ if $\set{u,v}$ is a cut-edge in $\G$.

	Now consider, $v\in \ghosts{i}\cap V$.
	At PE $i$ we stored only the neighbors of $v$ that were assigned to PE $i$.
	Since new cut-edges cannot be inserted, we \hbox{obtain $N(v)\cap V_i \subseteq N_i(v)$.}
\end{proof}

\begin{lemma}[Include Alternatives, \Cref{lem:IncInterfaceVertices}]\label{appendix:lem:IncInterfaceVertices}
	Let $\G=(V,E,\w)$ be the global reduced graph.
	Let $G_i=(\cV_i, E_i, \w_i)$ the local graph at PE $i$ and $G_j=(\cV_j, E_j, \w_j)$ the local graph at PE $j\neq i$.
	Assume, it exists $v\in V_i$ and $u\in V_j$ which are now proposed to be included at PE $i$ and PE $j$, respectively.
	Further, assume $v\in N_j(u)$ at PE $j$ so that $v$ is reduced when $u$ is proposed to be included.
	Then, it holds $u\in N_i(v)$, $\w_i(v)=\w_j(u)$, $N_i(v)\cap V \subseteq \set{u}$, and $\N_j(u)\cap V \subseteq \set{v}$.
	If $u$ exists for the given $v$, then there is no local graph $G_k=(\cV_k, E_k, \w_k)$ at some PE $k\not\in\set{i,j}$ with a vertex $u'\in V_k\setminus \set{u}$ that reduces $v$ as well.
\end{lemma}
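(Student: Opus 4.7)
My plan is to verify the four structural conclusions in sequence, and then obtain the uniqueness statement as an immediate corollary. First I would establish $u \in N_i(v)$: the edge $\{u,v\}$ is witnessed at PE $j$ through $v \in N_j(u)$, and since reductions never insert new edges, $\{u,v\} \in E$ originally. Both endpoints lie in $V$ (they are just about to be proposed, hence not yet reduced), so $\{u,v\}$ survives in the current $\G$. Applying \Cref{lem:neighborhood} at PE $i$ to the local vertex $v \in V$ gives $N(v) \subseteq N_i(v)$, whence $u \in N_i(v)$. For the weight equality, every proposal rule enforces the Heavy-Vertex condition, so at PE $i$ we obtain $\w_i(v) \geq \solw{G_i[N_i(v)]} \geq \w_i(u)$ (using the singleton independent set $\{u\}$), while \Cref{lem:weights} contributes $\w_i(u) \geq \w_j(u)$. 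The mirror chain at PE $j$ yields $\w_j(u) \geq \solw{G_j[N_j(u)]} \geq \w_j(v) \geq \w_i(v)$. Chaining forces all four weights to coincide, and in particular $\{u\}$ is an MWIS of $G_i[N_i(v)]$ while $\{v\}$ is an MWIS of $G_j[N_j(u)]$.

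The real work lies in showing $N_i(v) \cap V \subseteq \{u\}$; I would argue by contradiction and suppose some $x \in N_i(v) \cap V$ with $x \neq u$. In the easy subcase where $x$ is non-adjacent to $u$ in $G_i$, the pair $\{u,x\}$ forms an independent set in $G_i[N_i(v)]$ of weight $\w_i(u)+\w_i(x)$, violating the optimality of $\{u\}$ established above. In the harder subcase where $x$ is adjacent to $u$ in $G_i$, the edge $\{u,x\}$ is a cut-edge, so $x \in V_i$ must be replicated as a ghost at PE $j$; \Cref{lem:neighborhood} applied to $u \in V$ at PE $j$ places $x \in N_j(u)$. Because $G_j$ stores no edges between two ghosts, $x$ and $v$ are non-adjacent in $G_j[N_j(u)]$, so $\{v,x\}$ is an independent set there of weight $\w_j(v)+\w_j(x)$, contradicting optimality of $\{v\}$. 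The symmetric statement $N_j(u) \cap V \subseteq \{v\}$ follows by swapping the roles of $i$ and $j$. The uniqueness claim is then immediate: any hypothetical third-PE candidate $u' \in V_k \setminus \{u\}$ reducing $v$ would, by the first two steps applied to the pair $(v, u')$, satisfy $u' \in N_i(v) \cap V$, contradicting what we just derived.

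The principal obstacle is the second subcase above, which cannot be closed at PE $i$ in isolation: one has to transport the information across to PE $j$ and exploit the asymmetric ghost representation (no edges between two ghosts in $G_j$) to reconstruct the forbidden independent set. A minor subtlety is that both subcases rely on $\w_i(x)$ (respectively $\w_j(x)$) being strictly positive; strictly zero-weight neighbors would have to be either disallowed by convention or absorbed by a slight tightening of the Heavy-Vertex threshold.
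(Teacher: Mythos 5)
Your argument has a genuine gap at its very first step, and it is exactly the step the paper's proof is built around. You assert that $u,v\in V$ because they ``are just about to be proposed, hence not yet reduced,'' and from this you derive $u\in N(v)$ and then $u\in N_i(v)$ via \Cref{lem:neighborhood}. But in this model a proposal does \emph{not} certify global presence: \Cref{rem:includeVertex} states explicitly that interface vertices are proposed \emph{without} testing $v\in V$, and the \nameref{def:distRedModel} permits PE $i$ to delete a ghost whenever one of its local neighbors is proposed for inclusion. Hence, before $v$ is proposed, an earlier proposal $v'$ at PE $i$ with $u\in N_i(v')$ may already have removed the ghost $u$ from $G_i$ --- and, by the definition of the global reduced graph, from $\G$ as well --- while PE $j$ concurrently and obliviously proposes $u$. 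In that scenario $u\notin V$ and $u\notin N_i(v)$, and your appeal to \Cref{lem:neighborhood} cannot even begin. The paper closes precisely this hole by an induction on the number of include proposals made at PE $i$: in the induction step it assumes $u$ was reduced as a ghost at PE $i$ by an earlier proposal $v'$, distinguishes whether $v'$ is still in $N_j(u)$ or was itself already reduced at PE $j$ by some earlier proposal $u'$, and in each case uses the weight equalities supplied by the induction hypothesis to exhibit a two-ghost independent set (e.g.\ $\set{v,v'}$ in $G_j$, or $\set{u,u'}$ in an earlier local graph at PE $i$) whose weight exceeds the proposing vertex's weight, contradicting the \nameref{dred:HeavyVertex}-style proposal condition. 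Your proof contains no counterpart to this concurrency analysis; it is valid only in the base case where $v$ is the first proposal at PE $i$, and everything downstream of $u\in N_i(v)$ inherits the gap. (Your weight chain via \Cref{lem:weights} and your two-case argument for $N_i(v)\cap V\subseteq\set{u}$, exploiting the absence of ghost--ghost edges in $G_j$, do match the paper's base case.)

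There is also a secondary flaw in your uniqueness shortcut. You conclude that a third-party candidate $u'\in V_k\setminus\set{u}$ reducing $v$ would satisfy $u'\in N_i(v)\cap V$ and thus contradict $N_i(v)\cap V\subseteq\set{u}$ --- but this requires $u'\in V$, which is not given: $u'$ is merely \emph{proposed} at PE $k$ and may itself already be globally reduced, for the same reason as above. The paper avoids this by arguing directly in $G_i$: both $u$ and $u'$ lie in $N_i(v)$ and are ghosts at PE $i$, hence non-adjacent there, and both carry weight $\w_i(v)$, so $\set{u,u'}$ is an independent set in $G_i[N_i(v)]$ of weight $2\,\w_i(v)>\w_i(v)$, contradicting that $v$ was proposed --- no membership of $u'$ in $V$ is needed. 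Finally, your closing caveat about zero-weight neighbors is fair (the paper's own strict-inequality step uses $\w_j(x)>0$ even though $\w$ maps to $\mathbb{N}_{\geq 0}$), but it is shared by the paper's proof and is not the decisive defect; the missing induction over concurrent proposals is.
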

\begin{proof}
	We give a proof by induction on the number of interface vertices $l$ proposed to be included at PE~$i$.
	Without lost of generality we assume that each of these interface vertices is reduced by another PE.
	Otherwise there exists no gost neighbor that is proposed to be included as well.
	In the induction beginning and step, we assume $\G$, $G_i$, $G_j$, $G_k$, $v\in V_i$ and $u\in V_j$ are given as in \Cref{lem:IncInterfaceVertices} where $v$ is the $l$-th proposed to be included vertex at PE $i$.

	\textit{Induction beginning}.
	First, it holds $\w_j(u)\geq \solw{G_j[N_j(u)]} \geq w_j(v)$ because $u$ is proposed to be included with $v\in N_j(u)$.
	Further, we obtain $\w_j(u)\geq \w_j(v)\geq \w_i(v)$ with \Cref{lem:weights}.

	We start by proving that $u\in N_i(v)$.
	For a proof by contradiction, assume $u\not\in N_i(v)$.
	Then $u$ was reduced as ghost at PE $i$.
	In our \nameref{def:distRedModel} this is only possible if a vertex was proposed to be included before $v$ at PE $i$.
	Since, we assumed that $v$ is the first vertex which is proposed to be included at PE $i$, we obtain a contradition.

	With $u\in N_i(v)$, we now analogously obtain $\w_i(v)\geq \w_j(u)$ since $v$ was proposed to be included as well.
	Thus, $u$ and $v$ have equal weights.

	Next, we show $N_i(v)\cap V \subseteq \set{u}$,~\ie $u$ is the only neighbor of $v$ in $G_i$ that might not yet be globally reduced.
	To that end, assume there exists a vertex $x\in (N_i(v)\cap V)\setminus \set{u}$.
	Note, for $x$ holds $x\in V_j$ since $x\in V$.
	Then there are two cases to consider.
	First, assume $x$ is also a neighbor of $u$ in $G_i$.
	Then, $X=\set{x, u}$ is independent in $G_j$ because both vertices are ghosts in $G_j$ and therefore, they are not connected by and edge.
	The independent set $X$ has larger weight than $u$ since $w_j(x)>0$ and $w_j(v)=w_j(u)$.
	This contradicts that $u$ is proposed to be included because $\w_j(u)<\w_j(X)\leq \solw{G_j[N_j(u)]}$.
	Thus, $x$ cannot exist and we obtain $N_i(v)\cap V\subseteq \set{u}$.
	The proof for $N_j(u)\cap V\subseteq \set{v}$ is analogous.

	It remains to show that there is no PE $k\not\in \set{i,j}$ with another $u'\in V_k\setminus {u}$ which reduces $v$,~\ie $v\in N_k(u')$.
	Analogous to $u$, we can show that $u'\in N_i(v)$ and further $\w_i(v)=\w_i(u')$.
	Since both $u$ and $u'$ are ghosts in $G_i$, they are not adjacent in $G_i$.
	Again, we obtain an independent set $\set{u,u'}$ in $G_i$ of larger weight than $\w_i(v)$ which contradicts that $v$ is proposed to be included.
	Thus, it must hold $u'=u$.

	\textit{Induction hypothesis}.
	Now assume, these properties hold until the $l$-th proposed to be included vertex at PE $i$.

	\textit{Induction step}.
	We show that this still holds for the ${l+1}$ proposed to be included vertex at PE $j$.
	Similar as in the beginning, we only need to prove that $u\in N_i(v)$ to obtain $w_i(v)=w_j(u)$.
	Therefore, assume $u\not\in N_i(v)$.
	Then $u$ must have been reduced as ghost at PE $i$ by a previous proposal to include a vertex.
	Let $G_i^*=(\cV_i^*, E_i^*,\w_i^*)$ denote the local graph at PE $i$ when a vertex $v'\in V_i^*$ was proposed to be included with $u\in N_i^*(v')$.
	We distinguish two cases.

	First, assume $v'$ is still a neighbor of $u$ in $G_j$,~\ie $v'\in N_j(u)$.
	Then, our induction hypothesis applies for $v'$ and $u$ and we obtain $\w_i^*(v)=\w_j(u)$.
	Moreover, $v'$ and $v$ are two non-adjacent ghosts in $G_j$.
	Together they form an independent set in $G_j$ with a weight of $\w_j(\set{v,v'})=\w_j(u)+w_j(v) > \w_j(u)$.
	This contradicts that $u$ is proposed to be included.

	Second, assume $v'$ is already reduced at PE $j$,~\ie $v'\not\in N_j(u)$.
	Let $G_j^*(\cV_j^*,E_j^*,\w_j*)$ be the local graph at PE $j$ when $v'$ was reduced at PE $j$.
	Since $v'$ is a ghost at $G_j^*$, a vertex $u'\in V_j^*$ was proposed to be included with $v'\in N_j^*(u')$.
	We can apply our induction hypothesis to $v'$ and $u'$ and obtain $w_i^*(v')=w_j^*(u')$ with $u'\in N_j^*(v')$.
	Remember that $u\in N_i^*(v')$.
	Thus, $u$ and $u'$ are both neighbors of $v'$ in $G_i^*$.
	Further, they form an independent set because they are ghosts in $G_i^*$.
	This contradicts $v'$ being proposed to be included because $\w_i^*(u,u')=\w_i^*(u)+\w_i^*(v)$.

	Overall, this contradicts that there is a vertex $v'$ which reduced $u$ at PE $i$,~\ie $u\in V_i$.
	We obtain $u\in N_i(v)$ and $w_i(v)=w_j(u)$.
	The remainder of the proof for the induction step is analogous to the one in the induction beginning.
\end{proof}

\begin{lemma}[Include Operation, \Cref{lem:includeOperation}]\label{appendix:lem:includeOperation}%
	Let ${\G=(V,E,\w)}$ be the global reduced graph and let ${G_i=(V_i,E_i,\w_i)}$ be the local graph at PE $i$.
	Furthermore, let $v\in V_i$ be a vertex that was not yet reduced by any PE, \ie $v \in V$.
	If $v$ is an interface vertex, assume $\w_i(v)\geq \solw{G_i[\N_i(v)]}$ in addition.
	Assume, $v$ can be included into an MWIS of $\G$ so that $\G$ is reduced to $\G'=\G-N(v)$ with ${\solw{\G} = \solw{\G'}+\w_i(v)}$.

	Then, only PE $i$ needs to modify its local graph so that $\G'$ is obtained globally.
	The local reduced graph is given as ${G_i'=G_i-N_i[v]}$.
	Furthermore, add $v$ to the set ${\addedI_i'=\addedI_i\cup \set{v}}$ at PE $i$ if $v$ is an interface vertex.
	An MWIS of $\G$ can be reconstructed after exchanging the solution proposals of $N_i[v]$ with the adjacent PEs of $v$.
	Reconstruct the solution as follows:
	\begin{itemize}
		\item If $u$ exists with $j < i$, set ${\I_i = \I_i'}$ at PE $i$ and~${\I_j = \I_j' \cup \set{u}}$ at PE~$j$.
		\item Else if $u$ exists with $j > i$, set ${\I_i = \I_i'\cup \set{v}}$ at PE~$i$ and ${\I_j = \I_j'}$ at PE~$j$.
		\item Else, add $v$ to the local solution,~\ie $\I_i = \I_i'\cup \set{v}$.
	\end{itemize}
\end{lemma}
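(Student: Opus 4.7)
My plan is to split into two subcases according to whether $v$ is an interface vertex of PE~$i$, and to decouple the local reduction step from the conflict-resolution step of the reconstruction.

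First I would treat the non-interface case. Here $N_i(v)\subseteq V_i$, so Lemma~\ref{lem:neighborhood} gives $N(v)\subseteq N_i(v)$ with any extra vertices in $N_i(v)\setminus N(v)$ already globally reduced. Therefore setting $G_i'=G_i-N_i[v]$ removes exactly $N[v]$ from $\G$, and because no other PE owns $v$ or a relevant neighbor, the local modification at PE~$i$ suffices to realize $\G'=\G-N(v)$ globally. Since $v$ has no ghost neighbor, no competing $u$ on another PE can exist, so the third case of the statement applies and $\I_i=\I_i'\cup\{v\}$ is an MWIS of $\G$ by the assumed offset $\solw{\G}=\solw{\G'}+\w(v)$.

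In the interface case, PE~$i$ sets $G_i'=G_i-N_i[v]$ and $\addedI_i'=\addedI_i\cup\{v\}$. The hypothesis $\w_i(v)\geq\solw{G_i[N_i(v)]}$ is precisely what the \nameref{def:distRedModel} requires in order to propose $v$ for inclusion. For each ghost neighbor $u\in N_i(v)\cap\ghosts{i}$, once the owner PE $\rank(u)$ receives the include-proposal for $v$, the \nameref{def:distRedModel} permits it to reduce $u$, since $u$ now has a neighbor proposed for inclusion; together with PE~$i$'s removal of $N_i[v]\cap V_i$, the union of local graphs evaluates to $\G-N(v)=\G'$. The clause ``only PE~$i$ needs to modify its local graph'' refers to the owner of the data and of the proposal, while any updates on adjacent PEs are forced consequences of the proposal message that PE~$i$ sends.

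The main obstacle is the reconstruction, because a ghost neighbor $u\in N_i(v)\cap\ghosts{i}$ owned by some PE $j$ may simultaneously be proposed for inclusion there, so that both $\I_i'\cup\{v\}$ and $\I_j'\cup\{u\}$ claim an endpoint of the same edge $\{u,v\}$. Here I would invoke Lemma~\ref{lem:IncInterfaceVertices}: such a $u$ is globally unique, $\w_i(v)=\w_j(u)$, and $N(v)\cap V\subseteq\{u\}$, $N(u)\cap V\subseteq\{v\}$, so $\{u,v\}$ is an isolated cut-edge in $\G$. Consequently $\I_i'$ on $G_i-N_i[v]$ and $\I_j'$ on $G_j-N_j[u]$ together already form an MWIS of $\G-\{u,v\}$, and adding either $v$ or $u$ (but not both, as they are adjacent) yields an MWIS of $\G$ of weight $\solw{\G-\{u,v\}}+\w(v)=\solw{\G'}+\w(v)=\solw{\G}$. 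Breaking the tie by the smaller of $i$ and $j$ is a decision both PEs can take locally from the exchanged proposals without further communication, which reproduces the first two cases of the statement; when no competing $u$ is reported, the default is to add $v$ on PE~$i$, giving the third case.
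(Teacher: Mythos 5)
Your proposal is correct and follows essentially the same route as the paper's proof: \Cref{lem:neighborhood} justifies ${G_i'=G_i-N_i[v]}$, the case split on interface versus non-interface vertices matches, and the conflict case is resolved exactly as in the paper via \Cref{lem:IncInterfaceVertices} (at most one competing $u$, equal weights, each other as only remaining global neighbor) with tie-breaking by PE rank. The only slight deviation is your justification of the clause that only PE~$i$ modifies its graph: in the paper this is immediate from the definition of the global reduced graph in the \nameref{def:distRedModel} (a ghost removed at PE~$i$ is subtracted from the global vertex union via $\bigcup_{j\neq i} \ghosts{j}\setminus \rghosts{j}$, so $\G'$ is realized by PE~$i$'s modification alone, before any message is sent), whereas your account via the owner PEs reducing their interface vertices after receiving the proposal describes the subsequent border synchronization rather than what makes $\G'$ correct immediately.
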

\begin{proof}
	Assume $v$ is given as stated above.

	First, we note that the global graph $\G$ is reduced to ${\G'=\G-N(v)}$ when $v$ is included into an MWIS of $\G$ with a reduction offset of $\solw{\G}=\solw{\G'}+\w(v)=\solw{\G'}+\w_i(v)$.
	With \Cref{lem:neighborhood}, it holds $N(v)\subseteq N_i(v)$ where $(N_i(v)\setminus N(v))\cap V=\emptyset$.
	Therefore, the local reduced graph is given by ${G_i'=G_i-N_i[v]}$.

	If $v$ is not an interface vertex, the reconstruction is straightforward because $v$ has no ghost neighbors (anymore).
	In this case, there exists no neighbor that can be proposed to be included at another PE.
	Therefore, the solution can be reconstructed by adding $v$ to the local share $\I_i$ of an MWIS for $\G$.

	In the next part we assume $v$ to be an interface vertex.
	If there is no neighbor of $v$ that is proposed to be included as well, the reconstruction is possible without any conflict.
	According to \Cref{lem:IncInterfaceVertices}, there can at most be one neighbor that is proposed to be included as well.
	Let this neighbor be $u\in N(v)$ and $G_j^*$ denote the reduced graph at PE $j$ when $u$ is proposed to be included.
	Without lost of generality, we assume $\G$ is the global reduced graph where PE $j$ already considers the local reduced graph $G_j^*$.
	Then both vertices, $u$ and $v$, have equal weights and remaining neighbors $u$ and $v$, respectively.
	Therefore, $u$ and $v$ can be both in an MWIS of $\G$.%
	For the reconstruction of the solution, we can now choose between $v$ and $u$ while both are eligible for an MWIS of $\G$.
	The tie-breaking by the ranks of both vertices ensures that exactly one vertex is included.
	Moreover, it ensures that either all or none of the conflicting proposals between PE $i$ and $j$ are accepted at PE $i$.
\end{proof}

\setcounter{theorem}{6}
\begin{lemma}[Remaining Vertex, \Cref{lem:remainingVertex}]\label{appendix:lem:remainingVertex}
	Let $u,v\in V_i$ be adjacent with $\max\set{\w_i(x): x\in (N_i(u)\cap \ghosts{i})\setminus N_i(v)} + \w_i(v) \leq \w_i(u)$.
	Then it holds that $u\in V$ if $v\in V$.
\end{lemma}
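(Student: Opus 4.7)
The plan is to argue by contrapositive: assuming $u \notin V$, I would derive $v \notin V$. Since $u \in V_i$ is a local vertex at PE $i$, only PE $i$'s own modifications can drop $u$ from the global vertex set, and by \Cref{def:distRedModel} the admissible removals of the interface vertex $u$ at PE $i$ are either to propose $u$ for inclusion or to exclude $u$.

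First I would handle the inclusion branch. If PE $i$ proposes $u$ for inclusion, \Cref{lem:includeOperation} prescribes reducing $N_i[u]$ from $G_i$. Since $u$ and $v$ are adjacent and both local, $v \in N_i(u)$ is removed as well, so $v \notin V$, which is the desired contradiction with $v \in V$.

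For the exclusion branch, the relevant trigger is that some ghost $x \in N_i(u) \cap \ghosts{i}$ was proposed for inclusion at PE $j = \rank(x)$, whose inclusion message then drove the exclusion of $u$ at PE $i$. If $x \in N_i(v)$, then by \Cref{lem:neighborhood} $v \in N_j(x)$, so the same propagation excludes $v$, contradicting $v \in V$. Otherwise $x$ lies in the set $(N_i(u) \cap \ghosts{i}) \setminus N_i(v)$ that appears in the hypothesis, giving $\w_i(x) + \w_i(v) \leq \w_i(u)$. Here I would invoke \Cref{lem:weights} on both sides of the cut: since $u$ is a ghost at PE $j$, $\w_j(u) \geq \w_i(u)$, and since $x$ is a ghost at PE $i$, $\w_i(x) \geq \w_j(x)$. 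Combining with the admissibility condition $\w_j(x) \geq \solw{G_j[N_j(x)]} \geq \w_j(u)$ (using the singleton $\{u\} \subseteq N_j(x)$ as an independent set) yields $\w_i(x) \geq \w_i(u)$, which contradicts the hypothesis $\w_i(x) \leq \w_i(u) - \w_i(v)$ whenever $\w_i(v) > 0$.

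The hard part will be coordinating the time at which each weight is evaluated. Weights may only decrease during the execution, so $\w_j(x)$, $\w_j(u)$, $\w_i(x)$, $\w_i(u)$ refer to different snapshots (when $x$ was committed at PE $j$ versus the present state at PE $i$). I would exploit monotonicity to push the inequalities from past to present so that the chain $\w_i(x) \geq \w_i(u)$ survives against the present-tense hypothesis. A secondary concern is making sure that no other removal pathway---for instance a folding rule that indirectly excises $u$---escapes the case split above; here I would rely on the restriction in \Cref{def:distRedModel} that interface vertices are only ever touched by exclusion, inclusion-proposal, or weight-decrease, which keeps the case analysis exhaustive.
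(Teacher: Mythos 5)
Your argument rests on a premise that is false in this model: that only PE $i$'s own modifications can remove $u\in V_i$ from the global vertex set. By the definition of the modified global graph (the term $\bigcup_{j\neq i}\ghosts{j}\setminus\rghosts{j}$ subtracted from $\bigcup_i V_i'$) together with \Cref{def:distRedModel}, an interface vertex $u$ leaves $V$ the moment some PE $j$ with $u\in\ghosts{j}$ proposes a local neighbor $x\in V_j$ for inclusion and thereby reduces the ghost copy of $u$ --- PE $i$ performs no modification at all and may not yet have received any message. This remote mechanism is the entire point of the lemma: in its intended use (e.g., inside the proof of \Cref{dred:BasicSingleEdge}) both $u$ and $v$ are still present in $G_i$, so your inclusion branch and a genuinely local exclusion of $u$ never arise, and your case split is exhaustive only by accident, not by the stated justification. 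The paper instead starts directly from the model's restriction that ghosts are reduced only when a neighbor is proposed for inclusion, concluding immediately that $u$ was reduced because some interface vertex $x$ was proposed to be included at another PE $j$; your ``exclusion branch'' happens to analyze the right trigger, but routed through the wrong mechanism (a message-driven exclusion at PE $i$), which misses the window in which $u\notin V$ while $G_i$ is untouched.

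The substantive gap is inside that remote case: you take for granted that the triggering vertex satisfies $x\in N_i(u)\cap\ghosts{i}$, which is what licenses applying the hypothesis to bound $\w_i(x)\leq\w_i(u)-\w_i(v)$. But PE $i$ may itself have reduced the ghost $x$ earlier, via a local inclusion proposal $y\in V_i$ with $x\in N_i(y)$; then $x$ no longer appears in $N_i(u)$, and the hypothesis, whose maximum ranges over the \emph{current} set $(N_i(u)\cap\ghosts{i})\setminus N_i(v)$, says nothing about $\w_i(x)$. The paper's proof devotes a full step to closing exactly this hole using \Cref{lem:IncInterfaceVertices}: if such a $y$ existed, then $\w_i^*(y)=\w_j^*(x)$ with $y\in N_j^*(x)$, so $\set{y,u}$ would be an independent set in $G_j^*$ (both are ghosts there, hence non-adjacent) of weight exceeding $\w_j^*(x)$, contradicting that $x$ was proposed with $\w_j^*(x)\geq\solw{G_j^*[N_j^*(x)]}$; hence $x\in N_i(u)$ after all. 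Once that is secured, your chain $\w_i(x)\geq\w_j^*(x)\geq\w_j^*(u)\geq\w_i(u)$ clashing with the hypothesis is exactly the paper's contradiction (note both versions silently need $\w_i(v)>0$, so that caveat is shared rather than a defect of yours), and your monotonicity discussion correctly mirrors the paper's use of \Cref{lem:weights}. So the skeleton of your final weight contradiction is right, but without the \Cref{lem:IncInterfaceVertices} step --- whose own proof is inductive and nontrivial --- the argument does not go through.
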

\begin{proof}
	Consider neighbors $u,v\in V_i$ with $\max\set{\w_i(x): x\in (N_i(u)\cap \ghosts{i})\setminus N_i(v)} \leq \w_i(u)$.
	For a proof by contradiction, assume $u\not\in V$ while $v\in V$.
	Then $u$ was reduced due to proposing to include an interface vertex $x$ at another PE $j$.
	Let $G_j^*$ refer to PE $j$ when $x$ was proposed to be included.
	Furthermore, $x$ cannot be adjacent to $v$; otherwise $v\not\in V$.

	Moreover, it holds that $x\in \N_i(u)$ at PE $i$.
	Otherwise $x$ had another neighbor at PE $\pe{i}$ which was proposed to be included and further reduced $x$.
	Let $G_u^*(\cV_i^*,E_i^*,\w_i^*)$ be the local graph at PE $i$ when a vertex $y\in V_i^*$ was proposed to be included with $x\in N_i^*(y)$.
	Let $\G^*(V^*, E^*, \w^*)$ be the global reduced graph where PE $i$ considered $G_i^*$ and PE $j$ considered $G_j^*$.
	With \Cref{lem:IncInterfaceVertices} we obtain for $x$ and $y$ that $y\in N_j^*(x)$ and $w_i^*(y)=w_j^*(x)$.
	Thus, $\set{y, u}$ for an independent set of larger weight in $G_j^*$.
	This contradicts that $x$ has been proposed to be included at PE $j$.
	Consequently, $x$ was not yet reduced by some $y$ at PE $i$ and therfore, it holds $x\in \N_i(u)$.

	We show now that $\w_i(u)=\w_j^*(x)$.
	First, $x$ is a ghost neighbor of $u$ with $x\not\in N_i(v)$ at PE $i$ and therefore, $\w_i(u)\geq \max\set{\w_i(x): x\in (N_i(u)\cap \ghosts{i})\setminus N_i(v)}\geq \w_i(x)$.
	With \Cref{lem:weights}, we obtain $\w_i(u)\geq w_i(x) \geq w_j^*(x)$.
	Second, it holds $\w_j^*(x)\geq w_i(u)$ because $x$ was proposed to be included with $u\in N_j^*(x)$.
	Thus, we obtain $\w_i(u)=\w_j^*(x)$.
	However, this contradicts $\w_i(u)\geq \max\set{\w_i(x): (x\in N_i(u)\cap \ghosts{i})} + \w_i(v) \geq \w_i(x)+\w_i(v) > \w_i(x)$.
	We conclude that $x$ was not proposed to be included yet and further, $x$ is not reduced yet by any PE.
\end{proof}

\section{Omitted Reduction Proofs}\label{appendix:sec:reductionProofs}
This section provides the omitted proofs for our distributed reduction rules.

\makeatletter
\setcounter{@distributedreduction}{1}
\makeatother
\begin{distributedreduction}[Distributed Heavy Vertex, \Cref{dred:HeavyVertex}]\label{appendix:dred:HeavyVertex}
	Let $v\in V_i$ with $\w_i(v)\geq \solw{G_i[N_i(v)]}$.
	Then (propose to) include $v$ following Remark~\ref{rem:includeVertex}.
\end{distributedreduction}
\begin{proof}
	Let $\G=(V,E,\w)$ be the global reduced graph and $G_i=(V_i,E_i,\w_i)$ be the local graph at PE $i$.
	Further, let $v\in V_i$ with $\w_i(v)\geq \solw{G_i[N_i(v)]}$.
	First, assume $v$ is not yet reduced by any PE,~\ie $v\in V$.
	We show that $\w(v)\geq \solw{\G[N(v)]}$.
  Then, it holds with \emph{Heavy Vertex}~\cite{ExactlySolvingLamm2019} that $v$ is in an MWIS of $\G$.

	First, we obtain $N(x)\subseteq N_i(x)$ for every $x\in N[v]$ with \Cref{lem:neighborhood}.
	Furthermore, for $u\in N(v)$ holds that $\w_i(u)\geq \w(u)$ with \cref{lem:weights}.
	Thus, it holds $\w(v)=\w_i(v)\geq \solw{G_i[N_i(v)]} \geq \solw{\G[N(v)]}$.
	With \Cref{lem:includeOperation}, we can (propose to) include $v$ into an MWIS of $\G$.

	If $v$ was already reduced by another PE, then $v$ and its neighbors were globally reduced.
	In this case, we can still propose to include $v$ as explained in Remark~\ref{rem:includeVertex}.
\end{proof}

\begin{distributedreduction}[Dist. Neighborhood Removal, \Cref{dred:NeighborhoodRemoval}]\label{appendix:dred:NeighborhoodRemoval}
	Let $v\in V_i$ with $\w_i(v)\geq\w_i(\N_i(v))$.
	Then (propose to) include $v$ following Remark~\ref{rem:includeVertex}.
\end{distributedreduction}
\begin{proof}
	We show that this reduction is a special case of \nameref{dred:HeavyVertex}.
	Let $v\in V_i$ with $\w_i(v)\geq\w_i(\N_i(v))$.
	Then the weight of the neighborhood is an upper bound of an MWIS in the induced neighborhood graph,~\ie $\w_i(\N_i(v))\geq \solw{G_i[\N_i(v)]}$.
	Thus, we can apply the \nameref{dred:HeavyVertex}.
	This yields the described reduced graph, offset and reconstruction.
\end{proof}

\begin{distributedreduction}[Distributed Simplicial Vertex, \Cref{dred:SimplicialVertex}]\label{appendix:dred:SimplicialVertex}
	Let $v\in V_i$ be a simplicial vertex with maximum weight in its neighborhood in $G_i$,~\ie it holds $\w_i(v)\geq \max \set{\w_i(u): u\in \N_i(v)}$.
	Then (propose to) include $v$ following Remark~\ref{rem:includeVertex}.
\end{distributedreduction}
\begin{proof}
	To prove correctness, we show that this reduction is a special case of \nameref{dred:HeavyVertex}.
	Let $v\in V_i$ be a simplicial vertex with $\w_i(v)\geq \max \set{\w_i(u): u\in \N_i(v)}$.
	Then, $\N_i(v)$ forms a clique in $G_i$.
	Thus, an {\MWIS} of the induced neighborhood graph of $G_i[\N_i(v)]$ can consist of at most one vertex.
	It follows ${\w_i(v) \geq \max \set{\w_i(u): u\in \N_i(v)} \geq \solw{G_i[\N_i(v)]}}$.
	Thus, we can apply \nameref{dred:HeavyVertex}.
	This yields the described reduced graph, offset and proposed solution reconstruction.
\end{proof}

\begin{distributedreduction}[Dist. Simp. Weight Transfer, \Cref{dred:SimplicialWeightTransfer}]\label{appendix:dred:SimplicialWeightTransfer}
	Let $v\in V_i\setminus N_i(\ghosts{i})$ be a simplicial vertex, let $S(v)\subseteq N_i(v)$ be the set of all simplicial vertices.
	Further, let $w_i(v)\geq w_i(u)$ for all $u\in S(v)$.
	If $\w_i(v) < \max \set{w_i(u): u \in \N_i(v)}$, fold $v$ into $N_i(v)$.
	\begin{tabular}{ll}
		\reducedGraphRow{$G_i'=G_i-X$ with                                                    \\ $X=\set{u\in \N_i[v]: \w_i(u)\leq \w_i(v)}$ \\ and set $w_i'(u)= w_i(u)-w_i(v)$ \\ for all $x\in N_i(v)\setminus X$.} \\
		\offsetRow{$\solw{\G} = \solw{\G'}+\w(v)$.}                                           \\
		\reconstructionRow{If $\I_i'\cap\N_i(v) = \emptyset$, then $\I_i =\I_i'\cup \set{v}$, \\ else $\I_i=\I_i'$.}
	\end{tabular}
\end{distributedreduction}
\begin{proof}
	Let $v\in V_i\setminus N_i(\ghosts{i})$ be a simplicial vertex with $\w_i(v) < \max \set{w_i(u): u \in \N_i(v)}$.
	Note that $v$ is a local non-interface vertex.
	Further, let $S(v)\subseteq N_i(v)$.
	First, we observe that the proposed modification ensures the \nameref{def:distRedModel}.
	In the neighborhood of $v$ are only local neighbors which can be interface vertices.
	Therefore, ghost vertices remain part of the graph and their weight is not modified.
	Interface vertices are only excluded or their weight is decreased.

	It is possible that some of these interface vertices are already excluded in $\G$,~\ie $U=\N_i(v)\setminus\N_i(v)\neq \emptyset$.
	Independent of whether vertices of $U$ are excluded, $v$ remains simplicial in $G_i$.
	Then $v$ is simplicial in $G$ since $G[N[v]]=G_i[N_i[v]\setminus U]$.
  Thus, the second case of \emph{Simplicial Weight Transfer}~\cite{ExactlySolvingLamm2019} applies for $v$ in $G$.
	This yields the reduced graph $G_i'$, offset, and reconstruction.
\end{proof}

\begin{distributedreduction}[Distributed Basic Single-Edge, \Cref{dred:BasicSingleEdge}]\label{appendix:dred:BasicSingleEdge}
	Let $u,v\in V_i$ be adjacent vertices with $\w_i(\N_i(u)\setminus \N_i(v)) \leq \w_i(u)$, then exclude $v$.

	\begin{tabular}{ll}
		\textit{Reduced Graph}  & $G_i'=G_i-v$             \\
		\textit{Offset}         & $\solw{\G} = \solw{\G'}$ \\
		\textit{Reconstruction} & $\I_i = \I_i'$
	\end{tabular}
\end{distributedreduction}
\begin{proof}
	Let $u,v\in V_i$ be adjacent vertices with $\w_i(\N_i(u)\setminus \N_i(v)) \leq \w_i(u)$.

	Assume $v\in V$; otherwise $v$ is already reduced in $\G$ and we only reduce a redundant vertex $v$ in $G_i$.
	It holds that $\max\{\w_i(x): x\in(N_i(u)\cap \ghosts{i})\setminus N_i(v)\} + w_i(v)\leq \w_i(\N_i(u)\setminus \N_i(v)) \leq \w_i(u)$ and since $v\in V$, we obtain $u\in V$ with \cref{lem:remainingVertex}.
	Now consider $U=N_i(u)\setminus N_i(v)$.
	The set $U$ might contain border vertices which are already reduced or received smaller weights.
	Thus, it holds $\w(U) \geq \w(N(u)\setminus N(v))$, and we obtain $\w(u)\geq \w(N(u)\setminus N(v))$.
	Therefore, we can apply \emph{Basic Single-Edge} by Gu~\etal\cite{TowardsComputiGuJi2021} for $u$ and $v$ to exclude $v$ in $\G$.
	The reduction application yields the reduced graph $G_i'=G_i-v$ together with $\solw{G} = \solw{G'}$ and $\I_i = \I_i'$.
	In conclusion, we can either apply \emph{Basic Single-Edge} by Gu~\etal\cite{TowardsComputiGuJi2021} to $v$ in $\G$ or $v$ is already reduced \hbox{by another PE.}
\end{proof}

\begin{distributedreduction}[Dist. Extended Single-Edge, \Cref{dred:ExtendedSingleEdge}]\label{appendix:dred:ExtendedSingleEdge}
	Let $u,v\in V_i$ be adjacent vertices with $\w_i(\N_i(v)) - \w_i(u) \leq \w_i(v)$, then exclude $X=\N_i(v)\cap\N_i(u)\setminus \ghosts{i}$.

	\begin{tabular}{ll}
		\textit{Reduced Graph}  & $G_i'=G_i-X$           \\
		\textit{Offset}         & $\solw{G} = \solw{G'}$ \\
		\textit{Reconstruction} & $\I_i = \I_i'$
	\end{tabular}
\end{distributedreduction}
\begin{proof}
	Let $u,v\in V$ be adjacent vertices with $\w_i(\N_i(v)) - \w_i(u) \leq \w_i(v)$.
	Furthermore, let $X=\N_i(v)\cap\N_i(u)\setminus \ghosts{i}$ be non-empty.
	We show with a case distinction that we can reduce $X$ from $G_i$ and consequently exclude them from an MWIS for $\G$ independent of whether $v$ or $u$ are already reduced in $\G$.

	First, we consider $v\not\in V$.
	We show that $X\cap V=\emptyset$.
	Then we reduce only redundant vertices in $G_i$ but do not change $\G$.
	For a proof by contradiction, assume it exists $x\in X\cap V$.
	With \cref{lem:remainingVertex} and $x\in V$, we obtain for $v$ and $x$ that $v\in V$ which contradicts $v\not\in V$.
	Note that we can apply \cref{lem:remainingVertex} because $u\not\in \ghosts{i}$ and therefore $\max\set{\w_i(y): y\in (N_i(v)\cap \ghosts{i})\setminus N_i(x)}+\w_i(x) \leq \w_i(\N_i(v)) - \w_i(u) \leq \w_i(v)$.

	Next, we consider $v,u\in V$.
	With \cref{lem:weights} and \cref{lem:neighborhood}, we obtain $\w(v)=\w_i(v)\geq \N_i(v)-\w_i(u) \geq \N(u) - \w(u)$.
	Thus, we can exclude the remaining vertices of $X$ in $G$ with \emph{Extended Single-Edge}~\cite{TowardsComputiGuJi2021}.%

	Last, assume $v\in V$ and $u\not\in V$.
	Then, we obtain in $G$ that $\w(v)\geq \N(v)$ with \cref{lem:weights} and \cref{lem:neighborhood}.
	Thus, the vertices of $X$ can be excluded, because we can replace in every independent set of $G$, the vertices of $X$ by $v$ to obtain a solution which has at least \hbox{the same weight}.
	In every case, the local reduced graph is given as $G_i=G-X$, with offset $\solw{G}=\solw{G'}$, and $\I_i=\I_i'$.
\end{proof}

\section{Evaluation}
\subsection{Data Sets}
\Cref{appendix:tab:instances} lists all graphs used in our strong-scaling experiments in our evaluation.
\begin{table*}
	\centering

    \textsf{\begin{tabular}{lrrrrr}
  & \multicolumn{1}{c}{Graph} & \multicolumn{1}{c}{Weights} & \multicolumn{1}{c}{$|V|$} & \multicolumn{1}{c}{$|E|$} & \multicolumn{1}{c}{$d$} \\ \toprule
  \multirow{5}{*}{\textsc{10-th DIMACS}~\cite{DBLP:conf/dimacs/2012, DBLP:conf/ipps/HoltgreweSS10, openstreetmap}} & USA-road-d & uf $[1, 200]$ & \numprint{23947347} & \numprint{28854312} & \numprint{2.41} \\
  &asia.osm & uf $[1, 200]$ & \numprint{11950757} & \numprint{12711603} & \numprint{2.13} \\
  &delaunay\textunderscore n24 & uf $[1, 200]$ & \numprint{16777216} & \numprint{50331601} & \numprint{6.00} \\
  &europe.osm & uf $[1, 200]$ & \numprint{50912018} & \numprint{54054660} & \numprint{2.12} \\
  & germany.osm & uf $[1, 200]$ & \numprint{11548845} & \numprint{12369181} & \numprint{2.14} \\
 \midrule
  \multirow{15}{*}{\makecell{\textsc{Laboratory for}\\ \textsc{Web Algorithmics}~\cite{BoVWFI, BRSLLP}}}   & arabic-2005 & uf $[1, 200]$ & \numprint{22744080} & \numprint{553903073} & \numprint{48.71} \\
  & dewiki-2013 & uf $[1, 200]$ & \numprint{1532354} & \numprint{33093029} & \numprint{43.19} \\
  & enwiki-2013 & uf $[1, 200]$ & \numprint{4206785} & \numprint{91939728} & \numprint{43.71} \\
  & enwiki-2018 & uf $[1, 200]$ & \numprint{5616717} & \numprint{117244295} & \numprint{41.75} \\
  & enwiki-2022 & uf $[1, 200]$ & \numprint{6492490} & \numprint{144588656} & \numprint{44.54} \\
  & frwiki-2013 & uf $[1, 200]$ & \numprint{1352053} & \numprint{31037302} & \numprint{45.91} \\
  & hollywood-2011 & uf $[1, 200]$ & \numprint{2180759} & \numprint{114492816} & \numprint{105.00} \\
  & imdb-2021 & uf $[1, 200]$ & \numprint{2996317} & \numprint{5369472} & \numprint{3.58} \\
  & indochina-2004 & uf $[1, 200]$ & \numprint{7414866} & \numprint{150984819} & \numprint{40.72} \\
  & it-2004 & uf $[1, 200]$ & \numprint{41291594} & \numprint{1027474947} & \numprint{49.77} \\
  & itwiki-2013 & uf $[1, 200]$ & \numprint{1016867} & \numprint{23429644} & \numprint{46.08} \\
  & ljournal-2008 & uf $[1, 200]$ & \numprint{5363260} & \numprint{49514271} & \numprint{18.46} \\
  & sk-2005 & uf $[1, 200]$ & \numprint{50636154} & \numprint{1810063330} & \numprint{71.49} \\
  & twitter-2010 & uf $[1, 200]$ & \numprint{41652230} & \numprint{1202513046} & \numprint{57.74} \\
  & uk-2005 & uf $[1, 200]$ & \numprint{39459925} & \numprint{783027125} & \numprint{39.69} \\ 
  \midrule
  \multirow{1}{*}{\textsc{Mesh}~\cite{DBLP:journals/tog/SanderNCH08}} & buddha & uf & \numprint{1087716} & \numprint{1631574} & \numprint{3.00} \\ \midrule
  \multirow{9}{*}{\textsc{Network Repository}~\cite{nr}} & Flan\textunderscore 1565 & uf $[1, 200]$ & \numprint{1564794} & \numprint{57920625} & \numprint{74.03} \\
  & Geo\textunderscore 1438 & uf $[1, 200]$ & \numprint{1437960} & \numprint{30859365} & \numprint{42.92} \\
  & HV15R & uf $[1, 200]$ & \numprint{2017169} & \numprint{162357569} & \numprint{160.98} \\
  & Hook\textunderscore 1498 & uf $[1, 200]$ & \numprint{1498023} & \numprint{29709711} & \numprint{39.67} \\
  & circuit5M & uf $[1, 200]$ & \numprint{5558326} & \numprint{26983926} & \numprint{9.71} \\
  & dielFilterV3real & uf $[1, 200]$ & \numprint{1102824} & \numprint{44101598} & \numprint{79.98} \\
  & soc-flickr-und & uf $[1, 200]$ & \numprint{1715255} & \numprint{15555041} & \numprint{18.14} \\
  & soc-sinaweibo & uf $[1, 200]$ & \numprint{58655849} & \numprint{261321033} & \numprint{8.91} \\
  & webbase-2001 & uf $[1, 200]$ & \numprint{118142155} & \numprint{854809761} & \numprint{14.47} \\
  \midrule
  \multirow{17}{*}{\textsc{SNAP}~\cite{snapnets}} & com-Friendster & uf $[1, 200]$ & \numprint{65608366} & \numprint{1806067135} & \numprint{55.06} \\
  & as-skitter & uf & \numprint{1696415} & \numprint{11095298} & \numprint{13.08} \\
  & com-youtube & ? & \numprint{1134890} & \numprint{2987624} & \numprint{5.27} \\
  & roadNet-CA & uf & \numprint{1965206} & \numprint{2766607} & \numprint{2.82} \\
  & roadNet-PA & uf & \numprint{1088092} & \numprint{1541898} & \numprint{2.83} \\
  & roadNet-PA & ? & \numprint{1088092} & \numprint{1541898} & \numprint{2.83} \\
  & roadNet-TX & uf & \numprint{1379917} & \numprint{1921660} & \numprint{2.79} \\
  & soc-LiveJournal1 & uf & \numprint{4847571} & \numprint{42851237} & \numprint{17.68} \\
  & soc-pokec-relationships & uf & \numprint{1632803} & \numprint{22301964} & \numprint{27.32} \\
  & wiki-Talk & uf & \numprint{2394385} & \numprint{4659565} & \numprint{3.89} \\
  & wiki-topcats & uf $[1, 200]$ & \numprint{1791489} & \numprint{25444207} & \numprint{28.41} \\
  & Bump\textunderscore 2911 & uf $[1, 200]$ & \numprint{2911419} & \numprint{62409240} & \numprint{42.87} \\
  & Cube\textunderscore Coup\textunderscore dt6 & uf $[1, 200]$ & \numprint{2164760} & \numprint{62520692} & \numprint{57.76} \\
  & Long\textunderscore Coup\textunderscore dt6 & uf $[1, 200]$ & \numprint{1470152} & \numprint{42809420} & \numprint{58.24} \\
  & ML\textunderscore Geer & uf $[1, 200]$ & \numprint{1504002} & \numprint{54687985} & \numprint{72.72} \\
  & Queen\textunderscore 4147 & uf $[1, 200]$ & \numprint{4147110} & \numprint{162676087} & \numprint{78.45} \\
  & Serena & uf $[1, 200]$ & \numprint{1391349} & \numprint{31570176} & \numprint{45.38}

\end{tabular}
}

  \caption{This table shows the meta information for all the graphs in our strong-scaling experiments. Note that `uf' (together with the range $[0, 200]$) means that the weights were chosen uniform at random (from this range), `?' indicates that we do not know the distribution. Most of the graphs are road networks, web-crawling graphs, social networks, and graphs that model structural problems.}
	\label{appendix:tab:instances}
\end{table*}

\subsection{Further Evaluation Data}
In \Cref{appendix:fig:kernelsize-runningtime-reduction}, we give the kernels' relative vertex size, the relative difference between the kernel sizes on one and more cores, and the running time of our reduction algorithms.
The top row shows the results for our asynchronous variant \KaDisReduA{}, the bottom row the results for \KaDisReduS{}.
While the reduction impact is almost the same, we see that the (median) scaling behavior of \KaDisReduA{} is better than that of \KaDisReduS{}.

For graph-specific results of {\KaDisReduA} and {\KaDisReduS} for 1, 64, and \numprint{1024} cores, see \Cref{tab:reduce-detailed-DisReduA} and \Cref{tab:reduce-detailed-DisReduS}, respectively.
The running times of both reduction algorithms for each instance are compared in \Cref{tab:reduce-detailed-red-t}.

In \Cref{tab:detailed-aRnP} and \Cref{tab:detailed-sRnP}, we compare our distributed reduce-and-peel solvers against {\htwis}.
For one core, {\sRnP} and {\aRnP} had for two instances a timeout.
Six instances were too large to solve them with {\htwis} because the implementation supports only a 32 bit integer representation.
Furthermore, for another two graphs, {\htwis} found wrong solution weights.
For two graphs, \textsf{hollywood-2011} and \textsf{imdb-2021}, the considered algorithms found solutions but we excluded them from the overall comparison because the solution by {\htwis} were far off from ours.

\begin{figure*}[t]
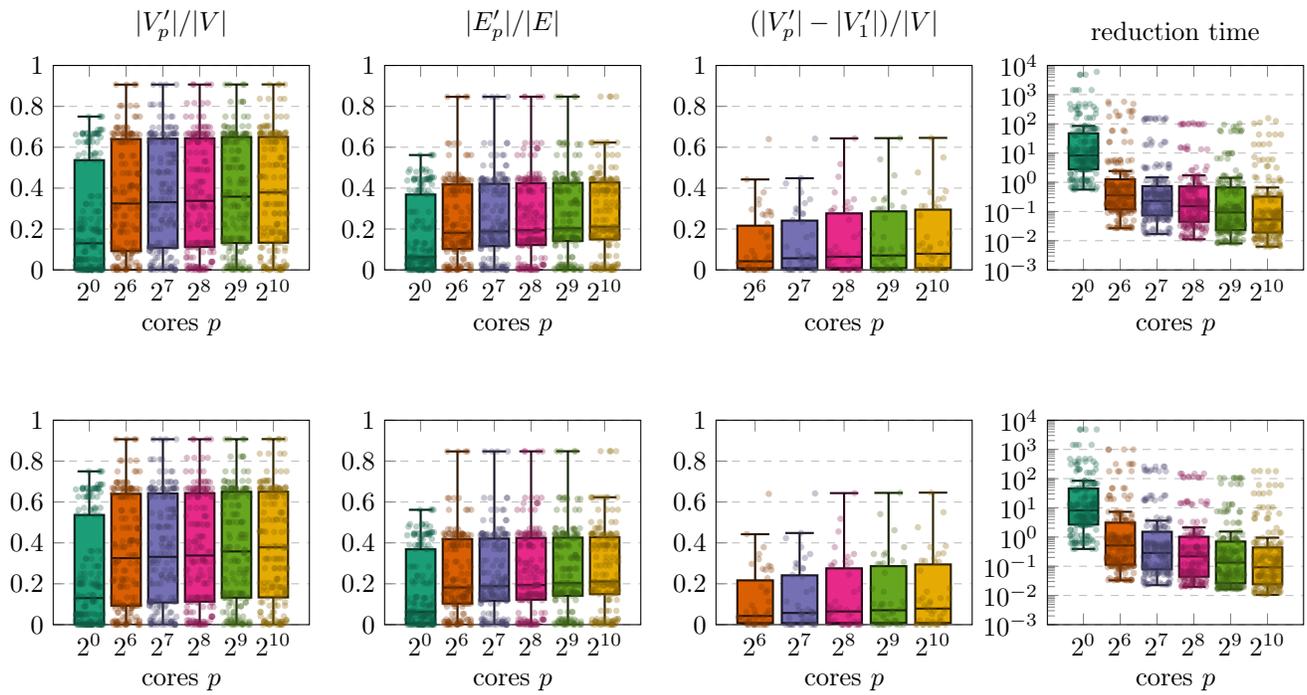

	\centering
	\begin{tikzpicture}
		\begin{groupplot}[group style={
						group size=4 by 2,
						horizontal sep=1cm,
						vertical sep=2cm,
						ylabels at=edge left,
					},
				width=0.33\linewidth,
				legend style={draw=none},
				xtick={1, 2, 3, 4, 5, 6},
				xticklabels={$2^0$,$2^6$,$2^7$,$2^8$,$2^9$,$2^{10}$},
				ytick pos=left,
				xlabel={cores $p$},
				grid style=dashed,
			]
			\nextgroupplot[title=$|V_p'|/|V|$,
				ymajorgrids,
				ymode=normal,
				ymax={1.0},
				ymin={1.5343738354209146e-07},
				boxplot/draw direction=y,
				xlabel={cores $p$},
				ylabel={},
			]
			\input{plots/boxplot-rel-kernel-vertices-aRG/plain_plot.tex}
			\nextgroupplot[title=$|E_p'|/|E|$,
				ymajorgrids,
				ymode=normal,
				ymax={1.0},
				ymin={1.5343738354209146e-07},
				boxplot/draw direction=y,
				xlabel={cores $p$},
				ylabel={},
			]
			\input{plots/boxplot-rel-kernel-edges-aRG/plain_plot.tex}
			\nextgroupplot[title=$(|V_p'|-|V_1'|)/|V|$,
				ymajorgrids,
				ymode=normal,
				ymax={1.0},
				ymin={0},
				xtick={1, 2, 3, 4, 5},
				xticklabels={$2^6$,$2^7$,$2^8$,$2^9$,$2^{10}$},
				boxplot/draw direction=y,
				xlabel={cores $p$},
				ylabel={},
			]
			\input{plots/boxplot-rel-kernel-vertices-change-aRG/plain_plot.tex}
			\nextgroupplot[title=reduction time,
				ymajorgrids,
				ymode=log,
				ymax={10000.0},
				ymin={0.001},
				ytick={0.001, 0.01, 0.1, 1, 10, 100, 1000, 10000},
				boxplot/draw direction=y,
				xlabel={cores $p$},
				ylabel={},
			]
			\input{plots/boxplot-reduction-time-aRG/plain_plot.tex}
			\nextgroupplot[title={},
				ymajorgrids,
				ymode=normal,
				ymax={1.0},
				ymin={1.5343738354209146e-07},
				boxplot/draw direction=y,
				xlabel={cores $p$},
				ylabel={},
			]
			\input{plots/boxplot-rel-kernel-vertices-RG/plain_plot.tex}
                        \nextgroupplot[title={},
				ymajorgrids,
				ymode=normal,
				ymax={1.0},
				ymin={1.5343738354209146e-07},
				boxplot/draw direction=y,
				xlabel={cores $p$},
				ylabel={},
			]
			\input{plots/boxplot-rel-kernel-edges-RG/plain_plot.tex}
			\nextgroupplot[title={},
				ymajorgrids,
				ymode=normal,
				ymax={1.0},
				ymin={0},
				xtick={1, 2, 3, 4, 5},
				xticklabels={$2^6$,$2^7$,$2^8$,$2^9$,$2^{10}$},
				boxplot/draw direction=y,
				xlabel={cores $p$},
				ylabel={},
			]
			\input{plots/boxplot-rel-kernel-vertices-change-RG/plain_plot.tex}
			
			\nextgroupplot[title={},
				ymajorgrids,
				ymode=log,
				ymax={10000.0},
				ymin={0.001},
				ytick={0.001, 0.01, 0.1, 1, 10, 100, 1000, 10000},
				boxplot/draw direction=y,
				xlabel={cores $p$},
				ylabel={},
			]
			\input{plots/boxplot-reduction-time-RG/plain_plot.tex}

		\end{groupplot}
	\end{tikzpicture}

	\caption{Reduction impact and running time (in seconds) for our asynchronous reduction algorithm {\KaDisReduA} (top row) and our synchronous reduction algorithm {\KaDisReduS} (bottom row) in the strong-scaling experiment.}
	\label{appendix:fig:kernelsize-runningtime-reduction}
\end{figure*}

\begin{table*}
	\centering
	\begin{tabular}{rrrrrrrrrrrrrrrrrrr}
		\multicolumn{1}{r}{} &  & \multicolumn{4}{c}{$|E'|/|E|$} &                        & \multicolumn{4}{c}{$|V'|/|V|$} &                          & \multicolumn{4}{c}{$t_{\text{reduce}}$~[s]}                                                                                                                                                                                                                \\
		\multicolumn{1}{r}{} &  & \multicolumn{1}{c}{1}          & \multicolumn{1}{c}{64} & \multicolumn{1}{c}{256}        & \multicolumn{1}{c}{\numprint{1024}} &                                             & \multicolumn{1}{c}{1} & \multicolumn{1}{c}{64} & \multicolumn{1}{c}{256} & \multicolumn{1}{c}{\numprint{1024}} &  & \multicolumn{1}{c}{1} & \multicolumn{1}{c}{64} & \multicolumn{1}{c}{256} & \multicolumn{1}{c}{\numprint{1024}} \\
		\cmidrule{1-1}  \cmidrule{3-6}  \cmidrule{8-11}  \cmidrule{13-16}
		\KaDisReduS{} & &\textbf{\numprint{0.06}} & \textbf{\numprint{0.18}} & \textbf{\numprint{0.19}} & \textbf{\numprint{0.21}} & &\textbf{\numprint{0.13}} & \numprint{0.33} & \textbf{\numprint{0.34}} & \textbf{\numprint{0.38}} & &\textbf{\numprint{11.27}} & \numprint{0.73} & \numprint{0.29} & \numprint{0.15}\\
\KaDisReduA{} & &\textbf{\numprint{0.06}} & \textbf{\numprint{0.18}} & \textbf{\numprint{0.19}} & \textbf{\numprint{0.21}} & &\textbf{\numprint{0.13}} & \textbf{\numprint{0.32}} & \textbf{\numprint{0.34}} & \textbf{\numprint{0.38}} & &\numprint{12.26} & \textbf{\numprint{0.59}} & \textbf{\numprint{0.26}} & \textbf{\numprint{0.10}}\\

	\end{tabular}
	\caption{Summarized results for the reduction impact and time on 1, 64, 256 and \numprint{1024} cores (strong-scaling experiments). The table shows the number of vertices and edges in the reduced graph relative to the input graph on median over all instances. The right column shows the geometric mean running time of the reduce phase.}\label{appendix:tab:redImpact}
\end{table*}

\begin{table*}
	\centering
	\begin{tabular}{rrrrrrrrrrrrrrr}

		\multicolumn{1}{r}{} &  & \multicolumn{2}{c}{1 core}     &                             & \multicolumn{2}{c}{\numprint{1024} cores} &                                & \multicolumn{3}{c}{\numprint{1024} cores (partitioned)}                                                                                                      \\
		\multicolumn{1}{r}{} &  & \multicolumn{1}{c}{$|V'|/|V|$} & \multicolumn{1}{c}{$t$~[s]} &                                           & \multicolumn{1}{c}{$|V'|/|V|$} & \multicolumn{1}{c}{$t$~[s]}                             &  & \multicolumn{1}{c}{$\Delta cut$} & \multicolumn{1}{c}{$|V'|/|V|$} & \multicolumn{1}{c}{$t$~[s]} \\
		\cmidrule{1-1}  \cmidrule{3-4}  \cmidrule{6-7}  \cmidrule{9-11}
		\KaDisReduA{} & &\numprint{0.14} & \numprint{15.59} & &\numprint{0.38} & \numprint{0.20} & &\numprint{-0.29} & \numprint{0.25} &\numprint{3.69}\\

	\end{tabular}
	\caption{The table shows the effect on the reduction impact of additional graph partitioning for {\KaDisReduA} on \numprint{1024} cores for our strong-scaling experiments. It shows the median of the ratios of the |$V'$| in the reduced graph relative to the number of vertices in the input graph $|V|$, the geometric mean running time (including the time of the partitioning phase), and the median cut improvement relative to number of edges in the input graph.}\label{appendix:tab:part}
\end{table*}

\begin{table*}
  \centering
	\begin{tabular}{rrrrr}
		\multicolumn{1}{r}{} &  & \multicolumn{2}{c}{$|V'|/|V|$} 	\\
    \multicolumn{1}{r}{} &  & \multicolumn{1}{c}{\sRG} & \multicolumn{1}{c}{\aRG} \\
    \cmidrule{1-1}  \cmidrule{3-4}

		\myGNM & &- & \textbf{\numprint{0.9815}}\\ 
\myRGG & &\numprint{0.3382} & \textbf{\numprint{0.3382}}\\ 
\myRHG & &\textbf{\numprint{0.0001}} & \numprint{0.0001}\\ 
\myRHGBig & &\textbf{\numprint{0.0022}} & \numprint{0.0022}\\ 

	\end{tabular}
  \caption{Weak scaling results of the reduced graph size in terms of vertices $|V'|$ relative to the number of vertices $|V|$ in the input graph for \numprint{1024} cores.}
  \label{appendix:tab:redImpactWeak}
\end{table*}

\begin{table*}
	\begin{tabular}{rrrrrrrrrrrrrrr}
		\multicolumn{1}{r}{} &  & \multicolumn{6}{c}{$\w(\I)/\w(\I_{best})$} &                          & \multicolumn{6}{c}{$t$~[s]}                                                                                                                                                                                                                                                       \\
		\multicolumn{1}{r}{} &  & \multicolumn{1}{c}{\sG}                    & \multicolumn{1}{c}{\sRG} & \multicolumn{1}{c}{\sRnP}   & \multicolumn{1}{c}{\aG} & \multicolumn{1}{c}{\aRG} & \multicolumn{1}{c}{\aRnP} &  & \multicolumn{1}{c}{\sG} & \multicolumn{1}{c}{\sRG} & \multicolumn{1}{c}{\sRnP} & \multicolumn{1}{c}{\aG} & \multicolumn{1}{c}{\aRG} & \multicolumn{1}{c}{\aRnP} \\
		\cmidrule{1-1}  \cmidrule{3-8}  \cmidrule{10-15}

		\myGNM & &\numprint{0.89} & - & \numprint{0.99} & \numprint{0.89} & \numprint{0.89} & \textbf{\numprint{1.00}} & &\numprint{17.78} & - & \numprint{29.56} & \textbf{\numprint{14.09}} & \numprint{30.79} & \numprint{30.28}\\ 
\myRGG & &\numprint{0.91} & \numprint{0.96} & \numprint{1.00} & \numprint{0.91} & \numprint{0.96} & \textbf{\numprint{1.00}} & &\textbf{\numprint{0.70}} & \numprint{6.04} & \numprint{11.32} & \numprint{0.71} & \numprint{6.41} & \numprint{11.50}\\ 
\myRHG & &\numprint{0.96} & \numprint{1.00} & \numprint{1.00} & \numprint{0.96} & \numprint{1.00} & \textbf{\numprint{1.00}} & &\textbf{\numprint{0.62}} & \numprint{1.29} & \numprint{1.63} & \numprint{0.81} & \numprint{1.45} & \numprint{1.52}\\ 
\myRHGBig & &\numprint{0.85} & \numprint{1.00} & \numprint{1.00} & \numprint{0.85} & \numprint{1.00} & \textbf{\numprint{1.00}} & &\textbf{\numprint{1.81}} & \numprint{6.32} & \numprint{6.51} & \numprint{7.65} & \numprint{6.71} & \numprint{7.94}\\ 

	\end{tabular}
  \caption{Weak scaling results for \numprint{1024} cores: Solution quality (left) and the running time (right). Note that the solution quality and running time were rounded to two decimal places and are highlighted in bold if the result is the best overall before rounding.}
	\label{appendix:tab:weak-scaling-results}
\end{table*}

\begin{table*}
  \caption{Reduction results of {\KaDisReduA} for \numprint{1}, \numprint{64}, and \numprint{1024} (from left to right): reduction time $t_{\mathrm{reduce}}$ in seconds, self speed-up, and the reduction ratios in terms of vertices and edges.
    \textsf{roadNet-PA*} refers to the uniformly weighted graph.
  }\label{tab:reduce-detailed-DisReduA}
  \begin{center}
    \small{\textsf{
    \begin{tabular}{rrrrrrrrrrrrrrrr}
      \mc{1}{r}{} && \mc{3}{c}{$t_{\text{reduce}}$~[s]} && \mc{2}{c}{speed-up} && \mc{3}{c}{$|V'|/|V|$~[\%]} && \mc{3}{c}{$|E'|/|E|$~[\%]} \\ 
      \mc{1}{r}{} && \mc{1}{c}{1} & \mc{1}{c}{64} & \mc{1}{c}{1024} && \mc{1}{c}{64} & \mc{1}{c}{1024} && \mc{1}{c}{1} & \mc{1}{c}{64} & \mc{1}{c}{1024} && \mc{1}{c}{1} & \mc{1}{c}{64} & \mc{1}{c}{1024} \\ 
      \cmidrule{1-1}  \cmidrule{3-5}  \cmidrule{7-8}  \cmidrule{10-12}  \cmidrule{14-16} 
      Bump\textunderscore 2911 & &\numprint{5.59} & \numprint{0.19} & \numprint{0.02} & &\numprint{30.0} & \numprint{281.7} & &\numprint{66.9} & \numprint{66.9} & \numprint{66.9} & &\numprint{44.7} & \numprint{44.7} & \numprint{44.7}\\ 
Cube\textunderscore Coup\textunderscore . & &\numprint{5.26} & \numprint{0.15} & \numprint{0.02} & &\numprint{35.0} & \numprint{250.8} & &\numprint{62.5} & \numprint{62.5} & \numprint{62.6} & &\numprint{39.0} & \numprint{39.0} & \numprint{39.0}\\ 
Flan\textunderscore 1565 & &\numprint{4.61} & \numprint{0.13} & \numprint{0.01} & &\numprint{36.6} & \numprint{330.0} & &\numprint{64.8} & \numprint{64.9} & \numprint{65.0} & &\numprint{41.9} & \numprint{42.0} & \numprint{42.1}\\ 
Geo\textunderscore 1438 & &\numprint{2.76} & \numprint{0.11} & \numprint{0.02} & &\numprint{26.0} & \numprint{182.9} & &\numprint{66.7} & \numprint{66.7} & \numprint{66.7} & &\numprint{44.3} & \numprint{44.4} & \numprint{44.4}\\ 
HV15R & &\numprint{14.57} & \numprint{0.38} & \numprint{0.04} & &\numprint{38.5} & \numprint{342.1} & &\numprint{68.6} & \numprint{69.7} & \numprint{70.5} & &\numprint{45.2} & \numprint{46.7} & \numprint{47.3}\\ 
Hook\textunderscore 1498 & &\numprint{3.46} & \numprint{0.09} & \numprint{0.02} & &\numprint{38.9} & \numprint{189.0} & &\numprint{57.1} & \numprint{66.5} & \numprint{66.6} & &\numprint{36.9} & \numprint{44.0} & \numprint{44.2}\\ 
Long\textunderscore Coup\textunderscore . & &\numprint{3.63} & \numprint{0.11} & \numprint{0.01} & &\numprint{32.5} & \numprint{256.5} & &\numprint{62.6} & \numprint{62.6} & \numprint{62.6} & &\numprint{39.0} & \numprint{39.1} & \numprint{39.1}\\ 
ML\textunderscore Geer & &\numprint{4.75} & \numprint{0.11} & \numprint{0.02} & &\numprint{43.9} & \numprint{289.5} & &\numprint{75.0} & \numprint{75.0} & \numprint{75.0} & &\numprint{56.2} & \numprint{56.2} & \numprint{56.2}\\ 
Queen\textunderscore 4147 & &\numprint{12.99} & \numprint{0.34} & \numprint{0.04} & &\numprint{37.8} & \numprint{309.4} & &\numprint{66.2} & \numprint{66.2} & \numprint{66.3} & &\numprint{43.7} & \numprint{43.8} & \numprint{43.9}\\ 
Serena & &\numprint{2.79} & \numprint{0.10} & \numprint{0.02} & &\numprint{27.5} & \numprint{165.7} & &\numprint{66.7} & \numprint{66.7} & \numprint{66.7} & &\numprint{44.3} & \numprint{44.4} & \numprint{44.4}\\ 
USA-road-d & &\numprint{15.66} & \numprint{0.82} & \numprint{0.06} & &\numprint{19.2} & \numprint{256.1} & &\numprint{2.0} & \numprint{13.3} & \numprint{13.3} & &\numprint{2.8} & \numprint{14.9} & \numprint{14.9}\\ 
arabic-2005 & &\numprint{385.78} & \numprint{24.01} & \numprint{5.92} & &\numprint{16.1} & \numprint{65.2} & &\numprint{25.0} & \numprint{29.9} & \numprint{32.2} & &\numprint{21.3} & \numprint{24.9} & \numprint{28.9}\\ 
as-skitter & &\numprint{1.99} & \numprint{0.19} & \numprint{0.04} & &\numprint{10.7} & \numprint{54.6} & &\numprint{4.6} & \numprint{21.4} & \numprint{22.4} & &\numprint{2.6} & \numprint{11.4} & \numprint{12.1}\\ 
asia.osm & &\numprint{5.57} & \numprint{0.17} & \numprint{0.03} & &\numprint{31.8} & \numprint{185.9} & &\numprint{0.4} & \numprint{2.1} & \numprint{2.2} & &\numprint{0.6} & \numprint{2.6} & \numprint{2.8}\\ 
buddha & &\numprint{1.13} & \numprint{0.03} & \numprint{0.01} & &\numprint{39.8} & \numprint{121.5} & &\numprint{12.3} & \numprint{40.7} & \numprint{67.3} & &\numprint{13.0} & \numprint{38.7} & \numprint{62.3}\\ 
circuit5M & &\numprint{6.47} & \numprint{0.57} & \numprint{0.42} & &\numprint{11.3} & \numprint{15.5} & &\numprint{0.0} & \numprint{38.0} & \numprint{38.0} & &\numprint{0.5} & \numprint{36.8} & \numprint{38.3}\\ 
com-Friend. & &\numprint{1418.00} & - & \numprint{6.80} & &- & \numprint{208.5} & &\numprint{16.9} & - & \numprint{42.4} & &\numprint{6.6} & - & \numprint{22.7}\\ 
com-youtube & &\numprint{0.56} & \numprint{0.08} & \numprint{0.02} & &\numprint{7.3} & \numprint{28.7} & &\numprint{0.0} & \numprint{0.7} & \numprint{0.7} & &\numprint{0.0} & \numprint{0.3} & \numprint{0.4}\\ 
delaunay\textunderscore n. & &\numprint{84.92} & \numprint{0.71} & \numprint{0.06} & &\numprint{119.6} & \numprint{1389.8} & &\numprint{56.1} & \numprint{90.6} & \numprint{90.7} & &\numprint{49.2} & \numprint{84.7} & \numprint{84.8}\\ 
dewiki-2013 & &\numprint{24.42} & \numprint{0.71} & \numprint{0.09} & &\numprint{34.2} & \numprint{274.8} & &\numprint{36.0} & \numprint{80.3} & \numprint{83.7} & &\numprint{10.0} & \numprint{42.2} & \numprint{50.2}\\ 
dielFilter. & &\numprint{2.79} & \numprint{0.19} & \numprint{0.02} & &\numprint{14.7} & \numprint{113.7} & &\numprint{0.0} & \numprint{64.0} & \numprint{64.6} & &\numprint{0.0} & \numprint{41.8} & \numprint{42.8}\\ 
enwiki-2013 & &\numprint{40.44} & \numprint{1.26} & \numprint{0.24} & &\numprint{32.0} & \numprint{169.5} & &\numprint{20.3} & \numprint{46.9} & \numprint{51.5} & &\numprint{5.9} & \numprint{16.3} & \numprint{20.1}\\ 
enwiki-2018 & &\numprint{47.42} & \numprint{1.71} & \numprint{0.33} & &\numprint{27.7} & \numprint{143.3} & &\numprint{14.5} & \numprint{41.2} & \numprint{45.8} & &\numprint{4.3} & \numprint{13.6} & \numprint{16.9}\\ 
enwiki-2022 & &\numprint{68.87} & \numprint{2.25} & \numprint{0.35} & &\numprint{30.6} & \numprint{196.7} & &\numprint{18.5} & \numprint{47.3} & \numprint{51.0} & &\numprint{5.7} & \numprint{18.0} & \numprint{21.1}\\ 
europe.osm & &\numprint{24.92} & \numprint{0.81} & \numprint{0.08} & &\numprint{30.6} & \numprint{309.8} & &\numprint{0.1} & \numprint{1.3} & \numprint{1.5} & &\numprint{0.2} & \numprint{1.5} & \numprint{1.7}\\ 
frwiki-2013 & &\numprint{13.10} & \numprint{0.45} & \numprint{0.11} & &\numprint{29.0} & \numprint{121.5} & &\numprint{28.6} & \numprint{55.8} & \numprint{59.1} & &\numprint{9.1} & \numprint{21.5} & \numprint{24.4}\\ 
germany.osm & &\numprint{5.70} & \numprint{0.20} & \numprint{0.02} & &\numprint{28.8} & \numprint{246.1} & &\numprint{0.1} & \numprint{1.3} & \numprint{1.5} & &\numprint{0.1} & \numprint{1.4} & \numprint{1.6}\\ 
hollywood-. & &\numprint{67.71} & \numprint{2.37} & \numprint{0.41} & &\numprint{28.6} & \numprint{164.1} & &\numprint{0.0} & \numprint{18.5} & \numprint{38.5} & &\numprint{0.0} & \numprint{12.9} & \numprint{39.0}\\ 
imdb-2021 & &\numprint{1.29} & \numprint{0.24} & \numprint{0.03} & &\numprint{5.3} & \numprint{42.1} & &\numprint{0.0} & \numprint{0.1} & \numprint{0.6} & &\numprint{0.0} & \numprint{0.1} & \numprint{1.0}\\ 
indochina-. & &\numprint{161.16} & \numprint{4.68} & \numprint{1.13} & &\numprint{34.4} & \numprint{143.1} & &\numprint{19.6} & \numprint{22.3} & \numprint{27.6} & &\numprint{20.2} & \numprint{22.2} & \numprint{30.9}\\ 
it-2004 & &\numprint{1433.77} & \numprint{183.58} & \numprint{43.29} & &\numprint{7.8} & \numprint{33.1} & &\numprint{22.5} & \numprint{26.5} & \numprint{27.6} & &\numprint{28.4} & \numprint{30.8} & \numprint{33.1}\\ 
itwiki-2013 & &\numprint{9.13} & \numprint{0.42} & \numprint{0.07} & &\numprint{22.0} & \numprint{135.8} & &\numprint{19.9} & \numprint{51.9} & \numprint{54.9} & &\numprint{5.1} & \numprint{18.3} & \numprint{21.1}\\ 
ljournal-2. & &\numprint{10.09} & \numprint{0.66} & \numprint{0.09} & &\numprint{15.2} & \numprint{110.4} & &\numprint{1.5} & \numprint{8.3} & \numprint{9.3} & &\numprint{3.2} & \numprint{5.3} & \numprint{6.8}\\ 
roadNet-CA & &\numprint{1.26} & \numprint{0.05} & \numprint{0.01} & &\numprint{26.0} & \numprint{144.3} & &\numprint{5.0} & \numprint{9.4} & \numprint{16.8} & &\numprint{6.0} & \numprint{10.3} & \numprint{17.5}\\ 
roadNet-PA & &\numprint{0.68} & \numprint{0.03} & \numprint{0.01} & &\numprint{24.8} & \numprint{86.6} & &\numprint{5.3} & \numprint{9.3} & \numprint{18.3} & &\numprint{6.4} & \numprint{10.3} & \numprint{19.1}\\ 
roadNet-PA* & &\numprint{0.68} & \numprint{0.03} & \numprint{0.01} & &\numprint{23.7} & \numprint{91.6} & &\numprint{5.4} & \numprint{9.2} & \numprint{18.3} & &\numprint{6.5} & \numprint{10.2} & \numprint{19.1}\\ 
roadNet-TX & &\numprint{0.83} & \numprint{0.03} & \numprint{0.01} & &\numprint{25.4} & \numprint{86.5} & &\numprint{4.8} & \numprint{7.8} & \numprint{15.2} & &\numprint{5.9} & \numprint{8.8} & \numprint{16.1}\\ 
sk-2005 & &\numprint{5130.96} & \numprint{526.40} & \numprint{123.59} & &\numprint{9.7} & \numprint{41.5} & &\numprint{32.3} & \numprint{39.4} & \numprint{45.3} & &\numprint{23.9} & \numprint{28.3} & \numprint{36.9}\\ 
soc-LiveJo. & &\numprint{10.18} & \numprint{0.83} & \numprint{0.10} & &\numprint{12.3} & \numprint{97.6} & &\numprint{1.4} & \numprint{31.0} & \numprint{32.2} & &\numprint{2.2} & \numprint{18.4} & \numprint{19.6}\\ 
soc-flickr. & &\numprint{2.71} & \numprint{0.29} & \numprint{0.05} & &\numprint{9.2} & \numprint{50.7} & &\numprint{0.2} & \numprint{3.7} & \numprint{4.1} & &\numprint{0.1} & \numprint{2.3} & \numprint{2.5}\\ 
soc-pokec-. & &\numprint{12.61} & \numprint{0.43} & \numprint{0.05} & &\numprint{29.1} & \numprint{236.7} & &\numprint{53.7} & \numprint{69.5} & \numprint{69.8} & &\numprint{47.9} & \numprint{61.9} & \numprint{62.2}\\ 
soc-sinawe. & &\numprint{57.39} & \numprint{5.79} & \numprint{1.28} & &\numprint{9.9} & \numprint{45.0} & &\numprint{0.0} & \numprint{0.0} & \numprint{0.0} & &\numprint{0.0} & \numprint{0.0} & \numprint{0.0}\\ 
twitter-20. & &\numprint{479.26} & \numprint{58.08} & \numprint{4.61} & &\numprint{8.3} & \numprint{104.0} & &\numprint{0.4} & \numprint{5.9} & \numprint{6.2} & &\numprint{0.1} & \numprint{0.6} & \numprint{0.6}\\ 
uk-2005 & &\numprint{313.67} & \numprint{270.41} & \numprint{20.02} & &\numprint{1.2} & \numprint{15.7} & &\numprint{13.7} & \numprint{19.8} & \numprint{20.5} & &\numprint{10.8} & \numprint{17.2} & \numprint{20.8}\\ 
webbase-20. & &\numprint{222.49} & \numprint{5.42} & \numprint{1.46} & &\numprint{41.1} & \numprint{152.1} & &\numprint{8.9} & \numprint{10.8} & \numprint{10.9} & &\numprint{13.5} & \numprint{15.0} & \numprint{15.6}\\ 
wiki-Talk & &\numprint{1.10} & \numprint{0.12} & \numprint{0.04} & &\numprint{9.5} & \numprint{30.4} & &\numprint{0.0} & \numprint{0.0} & \numprint{0.0} & &\numprint{0.0} & \numprint{0.0} & \numprint{0.0}\\ 
wiki-topca. & &\numprint{7.43} & \numprint{0.31} & \numprint{0.09} & &\numprint{24.2} & \numprint{84.0} & &\numprint{11.2} & \numprint{33.9} & \numprint{37.8} & &\numprint{4.2} & \numprint{13.4} & \numprint{15.5}\\ 

    \end{tabular}
  }}
  \end{center}
\end{table*}

\begin{table*}
  \caption{Reduction results of {\KaDisReduS} for \numprint{1}, \numprint{64}, and \numprint{1024} (from left to right): reduction time $t_{\mathrm{reduce}}$ in seconds, self speed-up, and the reduction ratios in terms of vertices and edges.
    \textsf{roadNet-PA*} refers to the uniformly weighted graph.
  }\label{tab:reduce-detailed-DisReduS}
  \begin{center}
    \small{\textsf{
    \begin{tabular}{rrrrrrrrrrrrrrrr}
      \mc{1}{r}{} && \mc{3}{c}{$t_{\text{reduce}}$~[s]} && \mc{2}{c}{speed-up} && \mc{3}{c}{$|V'|/|V|$~[\%]} && \mc{3}{c}{$|E'|/|E|$~[\%]} \\ 
      \mc{1}{r}{} && \mc{1}{c}{1} & \mc{1}{c}{64} & \mc{1}{c}{1024} && \mc{1}{c}{64} & \mc{1}{c}{1024} && \mc{1}{c}{1} & \mc{1}{c}{64} & \mc{1}{c}{1024} && \mc{1}{c}{1} & \mc{1}{c}{64} & \mc{1}{c}{1024} \\ 
      \cmidrule{1-1}  \cmidrule{3-5}  \cmidrule{7-8}  \cmidrule{10-12}  \cmidrule{14-16} 
      Bump\textunderscore 2911 & &\numprint{5.33} & \numprint{0.17} & \numprint{0.02} & &\numprint{31.8} & \numprint{253.5} & &\numprint{66.9} & \numprint{66.9} & \numprint{66.9} & &\numprint{44.7} & \numprint{44.7} & \numprint{44.7}\\ 
Cube\textunderscore Coup\textunderscore . & &\numprint{5.10} & \numprint{0.16} & \numprint{0.02} & &\numprint{31.9} & \numprint{272.2} & &\numprint{62.5} & \numprint{62.5} & \numprint{62.6} & &\numprint{39.0} & \numprint{39.0} & \numprint{39.0}\\ 
Flan\textunderscore 1565 & &\numprint{4.42} & \numprint{0.13} & \numprint{0.02} & &\numprint{33.5} & \numprint{255.5} & &\numprint{64.8} & \numprint{64.9} & \numprint{65.0} & &\numprint{41.9} & \numprint{42.0} & \numprint{42.1}\\ 
Geo\textunderscore 1438 & &\numprint{2.64} & \numprint{0.09} & \numprint{0.01} & &\numprint{28.7} & \numprint{211.6} & &\numprint{66.7} & \numprint{66.7} & \numprint{66.7} & &\numprint{44.3} & \numprint{44.4} & \numprint{44.4}\\ 
HV15R & &\numprint{14.40} & \numprint{0.44} & \numprint{0.05} & &\numprint{32.5} & \numprint{298.6} & &\numprint{68.6} & \numprint{69.7} & \numprint{70.5} & &\numprint{45.2} & \numprint{46.7} & \numprint{47.3}\\ 
Hook\textunderscore 1498 & &\numprint{3.32} & \numprint{0.09} & \numprint{0.01} & &\numprint{36.8} & \numprint{256.3} & &\numprint{57.1} & \numprint{66.5} & \numprint{66.6} & &\numprint{36.9} & \numprint{44.0} & \numprint{44.2}\\ 
Long\textunderscore Coup\textunderscore . & &\numprint{3.46} & \numprint{0.11} & \numprint{0.01} & &\numprint{30.6} & \numprint{274.9} & &\numprint{62.6} & \numprint{62.6} & \numprint{62.6} & &\numprint{39.0} & \numprint{39.1} & \numprint{39.1}\\ 
ML\textunderscore Geer & &\numprint{4.71} & \numprint{0.10} & \numprint{0.01} & &\numprint{45.3} & \numprint{340.8} & &\numprint{75.0} & \numprint{75.0} & \numprint{75.0} & &\numprint{56.2} & \numprint{56.2} & \numprint{56.2}\\ 
Queen\textunderscore 4147 & &\numprint{12.57} & \numprint{0.38} & \numprint{0.03} & &\numprint{32.7} & \numprint{428.7} & &\numprint{66.2} & \numprint{66.2} & \numprint{66.3} & &\numprint{43.7} & \numprint{43.8} & \numprint{43.9}\\ 
Serena & &\numprint{2.67} & \numprint{0.10} & \numprint{0.01} & &\numprint{27.6} & \numprint{196.0} & &\numprint{66.7} & \numprint{66.7} & \numprint{66.7} & &\numprint{44.3} & \numprint{44.4} & \numprint{44.4}\\ 
USA-road-d & &\numprint{13.54} & \numprint{0.98} & \numprint{0.09} & &\numprint{13.8} & \numprint{149.0} & &\numprint{2.0} & \numprint{13.3} & \numprint{13.3} & &\numprint{2.8} & \numprint{14.9} & \numprint{15.0}\\ 
arabic-2005 & &\numprint{383.80} & \numprint{26.47} & \numprint{10.43} & &\numprint{14.5} & \numprint{36.8} & &\numprint{25.0} & \numprint{30.0} & \numprint{32.2} & &\numprint{21.3} & \numprint{24.9} & \numprint{28.9}\\ 
as-skitter & &\numprint{1.77} & \numprint{0.22} & \numprint{0.10} & &\numprint{8.2} & \numprint{18.3} & &\numprint{4.6} & \numprint{21.5} & \numprint{22.4} & &\numprint{2.6} & \numprint{11.5} & \numprint{12.1}\\ 
asia.osm & &\numprint{4.55} & \numprint{0.21} & \numprint{0.05} & &\numprint{22.2} & \numprint{84.2} & &\numprint{0.4} & \numprint{2.1} & \numprint{2.2} & &\numprint{0.6} & \numprint{2.6} & \numprint{2.8}\\ 
buddha & &\numprint{1.07} & \numprint{0.03} & \numprint{0.02} & &\numprint{30.8} & \numprint{49.0} & &\numprint{12.3} & \numprint{40.7} & \numprint{67.3} & &\numprint{13.0} & \numprint{38.7} & \numprint{62.3}\\ 
circuit5M & &\numprint{5.45} & \numprint{0.67} & \numprint{0.37} & &\numprint{8.2} & \numprint{14.8} & &\numprint{0.0} & \numprint{38.0} & \numprint{38.0} & &\numprint{0.5} & \numprint{36.8} & \numprint{38.3}\\ 
com-Friend. & &\numprint{1414.62} & - & \numprint{7.26} & &- & \numprint{194.9} & &\numprint{16.9} & - & \numprint{42.4} & &\numprint{6.6} & - & \numprint{22.7}\\ 
com-youtube & &\numprint{0.39} & \numprint{0.08} & \numprint{0.04} & &\numprint{4.9} & \numprint{8.9} & &\numprint{0.0} & \numprint{0.7} & \numprint{0.7} & &\numprint{0.0} & \numprint{0.3} & \numprint{0.4}\\ 
delaunay\textunderscore n. & &\numprint{84.48} & \numprint{0.81} & \numprint{0.08} & &\numprint{104.3} & \numprint{1106.8} & &\numprint{56.1} & \numprint{90.7} & \numprint{90.8} & &\numprint{49.2} & \numprint{84.7} & \numprint{84.9}\\ 
dewiki-2013 & &\numprint{24.26} & \numprint{1.10} & \numprint{0.22} & &\numprint{22.0} & \numprint{110.7} & &\numprint{36.0} & \numprint{80.3} & \numprint{83.7} & &\numprint{10.0} & \numprint{42.2} & \numprint{50.2}\\ 
dielFilter. & &\numprint{2.79} & \numprint{0.19} & \numprint{0.03} & &\numprint{14.5} & \numprint{111.0} & &\numprint{0.0} & \numprint{64.0} & \numprint{64.6} & &\numprint{0.0} & \numprint{41.8} & \numprint{42.8}\\ 
enwiki-2013 & &\numprint{39.87} & \numprint{3.13} & \numprint{0.53} & &\numprint{12.7} & \numprint{75.4} & &\numprint{20.3} & \numprint{47.0} & \numprint{51.4} & &\numprint{5.9} & \numprint{16.4} & \numprint{20.1}\\ 
enwiki-2018 & &\numprint{46.97} & \numprint{3.75} & \numprint{0.66} & &\numprint{12.5} & \numprint{71.3} & &\numprint{14.5} & \numprint{41.3} & \numprint{45.8} & &\numprint{4.3} & \numprint{13.6} & \numprint{16.9}\\ 
enwiki-2022 & &\numprint{68.37} & \numprint{5.00} & \numprint{0.56} & &\numprint{13.7} & \numprint{122.3} & &\numprint{18.5} & \numprint{47.4} & \numprint{51.0} & &\numprint{5.7} & \numprint{18.0} & \numprint{21.1}\\ 
europe.osm & &\numprint{20.36} & \numprint{0.88} & \numprint{0.11} & &\numprint{23.1} & \numprint{189.0} & &\numprint{0.1} & \numprint{1.3} & \numprint{1.5} & &\numprint{0.2} & \numprint{1.5} & \numprint{1.7}\\ 
frwiki-2013 & &\numprint{13.04} & \numprint{0.93} & \numprint{0.24} & &\numprint{14.0} & \numprint{53.2} & &\numprint{28.6} & \numprint{55.9} & \numprint{59.1} & &\numprint{9.1} & \numprint{21.5} & \numprint{24.5}\\ 
germany.osm & &\numprint{4.64} & \numprint{0.24} & \numprint{0.05} & &\numprint{19.6} & \numprint{99.0} & &\numprint{0.1} & \numprint{1.3} & \numprint{1.5} & &\numprint{0.1} & \numprint{1.4} & \numprint{1.6}\\ 
hollywood-. & &\numprint{67.62} & \numprint{3.62} & \numprint{0.43} & &\numprint{18.7} & \numprint{156.3} & &\numprint{0.0} & \numprint{18.5} & \numprint{38.5} & &\numprint{0.0} & \numprint{12.9} & \numprint{39.0}\\ 
imdb-2021 & &\numprint{0.83} & \numprint{0.12} & \numprint{0.06} & &\numprint{6.7} & \numprint{13.6} & &\numprint{0.0} & \numprint{0.1} & \numprint{0.6} & &\numprint{0.0} & \numprint{0.1} & \numprint{1.0}\\ 
indochina-. & &\numprint{165.55} & \numprint{7.30} & \numprint{1.42} & &\numprint{22.7} & \numprint{116.4} & &\numprint{19.6} & \numprint{22.3} & \numprint{27.5} & &\numprint{20.2} & \numprint{22.2} & \numprint{30.8}\\ 
it-2004 & &\numprint{1428.94} & \numprint{179.36} & \numprint{60.13} & &\numprint{8.0} & \numprint{23.8} & &\numprint{22.5} & \numprint{26.5} & \numprint{27.6} & &\numprint{28.4} & \numprint{30.8} & \numprint{33.1}\\ 
itwiki-2013 & &\numprint{9.08} & \numprint{0.62} & \numprint{0.15} & &\numprint{14.6} & \numprint{61.7} & &\numprint{19.9} & \numprint{51.9} & \numprint{54.8} & &\numprint{5.1} & \numprint{18.3} & \numprint{21.1}\\ 
ljournal-2. & &\numprint{9.53} & \numprint{0.91} & \numprint{0.19} & &\numprint{10.4} & \numprint{49.1} & &\numprint{1.5} & \numprint{8.4} & \numprint{9.3} & &\numprint{3.2} & \numprint{5.3} & \numprint{6.8}\\ 
roadNet-CA & &\numprint{1.16} & \numprint{0.06} & \numprint{0.02} & &\numprint{20.0} & \numprint{50.3} & &\numprint{5.0} & \numprint{9.5} & \numprint{16.8} & &\numprint{6.0} & \numprint{10.3} & \numprint{17.5}\\ 
roadNet-PA & &\numprint{0.61} & \numprint{0.03} & \numprint{0.03} & &\numprint{18.1} & \numprint{20.9} & &\numprint{5.3} & \numprint{9.3} & \numprint{18.3} & &\numprint{6.4} & \numprint{10.3} & \numprint{19.0}\\ 
roadNet-PA* & &\numprint{0.61} & \numprint{0.03} & \numprint{0.02} & &\numprint{18.3} & \numprint{26.3} & &\numprint{5.4} & \numprint{9.3} & \numprint{18.3} & &\numprint{6.5} & \numprint{10.3} & \numprint{19.1}\\ 
roadNet-TX & &\numprint{0.74} & \numprint{0.04} & \numprint{0.02} & &\numprint{20.4} & \numprint{36.0} & &\numprint{4.8} & \numprint{7.8} & \numprint{15.2} & &\numprint{5.9} & \numprint{8.9} & \numprint{16.1}\\ 
sk-2005 & &\numprint{4822.13} & \numprint{991.81} & \numprint{180.91} & &\numprint{4.9} & \numprint{26.7} & &\numprint{32.3} & \numprint{39.4} & \numprint{45.3} & &\numprint{23.9} & \numprint{28.3} & \numprint{36.9}\\ 
soc-LiveJo. & &\numprint{9.71} & \numprint{1.21} & \numprint{0.19} & &\numprint{8.0} & \numprint{52.5} & &\numprint{1.4} & \numprint{31.1} & \numprint{32.2} & &\numprint{2.2} & \numprint{18.5} & \numprint{19.6}\\ 
soc-flickr. & &\numprint{2.45} & \numprint{0.33} & \numprint{0.11} & &\numprint{7.4} & \numprint{22.5} & &\numprint{0.2} & \numprint{3.7} & \numprint{4.1} & &\numprint{0.1} & \numprint{2.4} & \numprint{2.5}\\ 
soc-pokec-. & &\numprint{12.54} & \numprint{0.59} & \numprint{0.13} & &\numprint{21.2} & \numprint{94.6} & &\numprint{53.7} & \numprint{69.5} & \numprint{69.8} & &\numprint{47.9} & \numprint{61.9} & \numprint{62.2}\\ 
soc-sinawe. & &\numprint{45.75} & \numprint{6.07} & \numprint{1.17} & &\numprint{7.5} & \numprint{39.2} & &\numprint{0.0} & \numprint{0.0} & \numprint{0.0} & &\numprint{0.0} & \numprint{0.0} & \numprint{0.0}\\ 
twitter-20. & &\numprint{473.28} & \numprint{47.06} & \numprint{4.32} & &\numprint{10.1} & \numprint{109.6} & &\numprint{0.4} & \numprint{5.9} & \numprint{6.2} & &\numprint{0.1} & \numprint{0.6} & \numprint{0.6}\\ 
uk-2005 & &\numprint{309.44} & \numprint{315.96} & \numprint{31.52} & &\numprint{1.0} & \numprint{9.8} & &\numprint{13.7} & \numprint{19.8} & \numprint{20.5} & &\numprint{10.8} & \numprint{17.2} & \numprint{20.8}\\ 
webbase-20. & &\numprint{208.00} & \numprint{20.10} & \numprint{6.38} & &\numprint{10.3} & \numprint{32.6} & &\numprint{8.9} & \numprint{10.8} & \numprint{10.9} & &\numprint{13.5} & \numprint{15.0} & \numprint{15.6}\\ 
wiki-Talk & &\numprint{0.64} & \numprint{0.10} & \numprint{0.04} & &\numprint{6.2} & \numprint{14.3} & &\numprint{0.0} & \numprint{0.0} & \numprint{0.0} & &\numprint{0.0} & \numprint{0.0} & \numprint{0.0}\\ 
wiki-topca. & &\numprint{7.24} & \numprint{0.59} & \numprint{0.17} & &\numprint{12.2} & \numprint{43.7} & &\numprint{11.2} & \numprint{33.9} & \numprint{37.7} & &\numprint{4.2} & \numprint{13.4} & \numprint{15.5}\\ 

    \end{tabular}
  }}
  \end{center}
\end{table*}
 
\begin{table*}
  \caption{Reduction running times of {\KaDisReduA} and {\KaDisReduS} for \numprint{1}, \numprint{64}, \numprint{256} and \numprint{1024} cores (from left to right) in seconds. 
    \textsf{roadNet-PA*} refers to the uniformly weighted graph.
  }\label{tab:reduce-detailed-red-t}
  \begin{center}
    \small{\textsf{
    \begin{tabular}{rrrrrrrrrrr}
        \mc{1}{r}{} && \mc{9}{c}{$t_{\text{reduce}}$~[s]} \\ 
        \mc{1}{r}{} && \mc{4}{c}{\KaDisReduS} && \mc{4}{c}{\KaDisReduA} \\ 
        \mc{1}{r}{} && \mc{1}{c}{1} & \mc{1}{c}{64} & \mc{1}{c}{256} & \mc{1}{c}{1024} && \mc{1}{c}{1} & \mc{1}{c}{64} & \mc{1}{c}{256} & \mc{1}{c}{1024} \\ 
        \cmidrule{1-1}  \cmidrule{3-6} \cmidrule{8-11} 
      Bump\textunderscore 2911 & &\textbf{\numprint{5.333}} & \textbf{\numprint{0.168}} & \textbf{\numprint{0.050}} & \numprint{0.021} & &\numprint{5.595} & \numprint{0.187} & \numprint{0.054} & \textbf{\numprint{0.020}}\\ 
Cube\textunderscore Coup\textunderscore . & &\textbf{\numprint{5.098}} & \numprint{0.160} & \numprint{0.048} & \textbf{\numprint{0.019}} & &\numprint{5.263} & \textbf{\numprint{0.150}} & \textbf{\numprint{0.046}} & \numprint{0.021}\\ 
Flan\textunderscore 1565 & &\textbf{\numprint{4.419}} & \numprint{0.132} & \textbf{\numprint{0.039}} & \numprint{0.017} & &\numprint{4.612} & \textbf{\numprint{0.126}} & \textbf{\numprint{0.039}} & \textbf{\numprint{0.014}}\\ 
Geo\textunderscore 1438 & &\textbf{\numprint{2.643}} & \textbf{\numprint{0.092}} & \numprint{0.036} & \textbf{\numprint{0.012}} & &\numprint{2.759} & \numprint{0.106} & \textbf{\numprint{0.030}} & \numprint{0.015}\\ 
HV15R & &\textbf{\numprint{14.403}} & \numprint{0.443} & \numprint{0.135} & \numprint{0.048} & &\numprint{14.573} & \textbf{\numprint{0.378}} & \textbf{\numprint{0.118}} & \textbf{\numprint{0.043}}\\ 
Hook\textunderscore 1498 & &\textbf{\numprint{3.318}} & \numprint{0.090} & \textbf{\numprint{0.027}} & \textbf{\numprint{0.013}} & &\numprint{3.457} & \textbf{\numprint{0.089}} & \textbf{\numprint{0.027}} & \numprint{0.018}\\ 
Long\textunderscore Coup\textunderscore . & &\textbf{\numprint{3.462}} & \numprint{0.113} & \textbf{\numprint{0.035}} & \textbf{\numprint{0.013}} & &\numprint{3.634} & \textbf{\numprint{0.112}} & \numprint{0.038} & \numprint{0.014}\\ 
ML\textunderscore Geer & &\textbf{\numprint{4.712}} & \textbf{\numprint{0.104}} & \textbf{\numprint{0.033}} & \textbf{\numprint{0.014}} & &\numprint{4.749} & \numprint{0.108} & \numprint{0.037} & \numprint{0.016}\\ 
Queen\textunderscore 4147 & &\textbf{\numprint{12.574}} & \numprint{0.385} & \numprint{0.111} & \textbf{\numprint{0.029}} & &\numprint{12.988} & \textbf{\numprint{0.344}} & \textbf{\numprint{0.106}} & \numprint{0.042}\\ 
Serena & &\textbf{\numprint{2.670}} & \textbf{\numprint{0.097}} & \textbf{\numprint{0.030}} & \textbf{\numprint{0.014}} & &\numprint{2.794} & \numprint{0.101} & \numprint{0.034} & \numprint{0.017}\\ 
USA-road-d & &\textbf{\numprint{13.536}} & \numprint{0.983} & \numprint{0.269} & \numprint{0.091} & &\numprint{15.660} & \textbf{\numprint{0.817}} & \textbf{\numprint{0.215}} & \textbf{\numprint{0.061}}\\ 
arabic-2005 & &\textbf{\numprint{383.803}} & \numprint{26.474} & \numprint{16.784} & \numprint{10.435} & &\numprint{385.780} & \textbf{\numprint{24.014}} & \textbf{\numprint{13.459}} & \textbf{\numprint{5.918}}\\ 
as-skitter & &\textbf{\numprint{1.773}} & \numprint{0.215} & \textbf{\numprint{0.121}} & \numprint{0.097} & &\numprint{1.995} & \textbf{\numprint{0.186}} & \numprint{0.136} & \textbf{\numprint{0.037}}\\ 
asia.osm & &\textbf{\numprint{4.548}} & \numprint{0.205} & \numprint{0.069} & \numprint{0.054} & &\numprint{5.572} & \textbf{\numprint{0.175}} & \textbf{\numprint{0.055}} & \textbf{\numprint{0.030}}\\ 
buddha & &\textbf{\numprint{1.075}} & \numprint{0.035} & \numprint{0.022} & \numprint{0.022} & &\numprint{1.129} & \textbf{\numprint{0.028}} & \textbf{\numprint{0.017}} & \textbf{\numprint{0.009}}\\ 
circuit5M & &\textbf{\numprint{5.451}} & \numprint{0.668} & \numprint{0.450} & \textbf{\numprint{0.368}} & &\numprint{6.468} & \textbf{\numprint{0.571}} & \textbf{\numprint{0.393}} & \numprint{0.417}\\ 
com-youtube & &\textbf{\numprint{0.393}} & \numprint{0.080} & \numprint{0.043} & \numprint{0.044} & &\numprint{0.564} & \textbf{\numprint{0.078}} & \textbf{\numprint{0.034}} & \textbf{\numprint{0.020}}\\ 
delaunay\textunderscore n. & &\textbf{\numprint{84.479}} & \numprint{0.810} & \numprint{0.240} & \numprint{0.076} & &\numprint{84.925} & \textbf{\numprint{0.710}} & \textbf{\numprint{0.199}} & \textbf{\numprint{0.061}}\\ 
dewiki-2013 & &\textbf{\numprint{24.258}} & \numprint{1.103} & \textbf{\numprint{0.429}} & \numprint{0.219} & &\numprint{24.417} & \textbf{\numprint{0.715}} & \numprint{0.479} & \textbf{\numprint{0.089}}\\ 
dielFilter. & &\numprint{2.794} & \numprint{0.193} & \textbf{\numprint{0.061}} & \textbf{\numprint{0.025}} & &\textbf{\numprint{2.791}} & \textbf{\numprint{0.190}} & \numprint{0.071} & \textbf{\numprint{0.025}}\\ 
enwiki-2013 & &\textbf{\numprint{39.871}} & \numprint{3.135} & \numprint{1.130} & \numprint{0.529} & &\numprint{40.443} & \textbf{\numprint{1.263}} & \textbf{\numprint{0.552}} & \textbf{\numprint{0.239}}\\ 
enwiki-2018 & &\textbf{\numprint{46.973}} & \numprint{3.752} & \numprint{1.242} & \numprint{0.659} & &\numprint{47.417} & \textbf{\numprint{1.712}} & \textbf{\numprint{0.859}} & \textbf{\numprint{0.331}}\\ 
enwiki-2022 & &\textbf{\numprint{68.372}} & \numprint{5.004} & \numprint{1.447} & \numprint{0.559} & &\numprint{68.872} & \textbf{\numprint{2.250}} & \textbf{\numprint{0.981}} & \textbf{\numprint{0.350}}\\ 
europe.osm & &\textbf{\numprint{20.363}} & \numprint{0.883} & \textbf{\numprint{0.284}} & \numprint{0.108} & &\numprint{24.923} & \textbf{\numprint{0.814}} & \numprint{0.304} & \textbf{\numprint{0.080}}\\ 
frwiki-2013 & &\textbf{\numprint{13.037}} & \numprint{0.928} & \numprint{0.562} & \numprint{0.245} & &\numprint{13.095} & \textbf{\numprint{0.452}} & \textbf{\numprint{0.358}} & \textbf{\numprint{0.108}}\\ 
germany.osm & &\textbf{\numprint{4.635}} & \numprint{0.236} & \numprint{0.078} & \numprint{0.047} & &\numprint{5.702} & \textbf{\numprint{0.198}} & \textbf{\numprint{0.063}} & \textbf{\numprint{0.023}}\\ 
hollywood-. & &\textbf{\numprint{67.620}} & \numprint{3.624} & \numprint{1.027} & \numprint{0.433} & &\numprint{67.708} & \textbf{\numprint{2.369}} & \textbf{\numprint{0.907}} & \textbf{\numprint{0.413}}\\ 
imdb-2021 & &\textbf{\numprint{0.828}} & \textbf{\numprint{0.123}} & \textbf{\numprint{0.066}} & \numprint{0.061} & &\numprint{1.287} & \numprint{0.241} & \numprint{0.081} & \textbf{\numprint{0.031}}\\ 
indochina-. & &\numprint{165.546} & \numprint{7.301} & \numprint{3.375} & \numprint{1.423} & &\textbf{\numprint{161.156}} & \textbf{\numprint{4.678}} & \textbf{\numprint{1.385}} & \textbf{\numprint{1.126}}\\ 
it-2004 & &\textbf{\numprint{1428.942}} & \textbf{\numprint{179.364}} & \numprint{148.049} & \numprint{60.134} & &\numprint{1433.771} & \numprint{183.580} & \textbf{\numprint{97.570}} & \textbf{\numprint{43.286}}\\ 
itwiki-2013 & &\textbf{\numprint{9.081}} & \numprint{0.624} & \textbf{\numprint{0.251}} & \numprint{0.147} & &\numprint{9.134} & \textbf{\numprint{0.416}} & \numprint{0.345} & \textbf{\numprint{0.067}}\\ 
ljournal-2. & &\textbf{\numprint{9.530}} & \numprint{0.915} & \numprint{0.455} & \numprint{0.194} & &\numprint{10.093} & \textbf{\numprint{0.662}} & \textbf{\numprint{0.283}} & \textbf{\numprint{0.091}}\\ 
roadNet-CA & &\textbf{\numprint{1.160}} & \numprint{0.058} & \numprint{0.025} & \numprint{0.023} & &\numprint{1.263} & \textbf{\numprint{0.049}} & \textbf{\numprint{0.018}} & \textbf{\numprint{0.009}}\\ 
roadNet-PA & &\textbf{\numprint{0.609}} & \numprint{0.034} & \numprint{0.022} & \numprint{0.029} & &\numprint{0.682} & \textbf{\numprint{0.027}} & \textbf{\numprint{0.016}} & \textbf{\numprint{0.008}}\\ 
roadNet-PA* & &\textbf{\numprint{0.611}} & \numprint{0.033} & \numprint{0.020} & \numprint{0.023} & &\numprint{0.678} & \textbf{\numprint{0.029}} & \textbf{\numprint{0.015}} & \textbf{\numprint{0.007}}\\ 
roadNet-TX & &\textbf{\numprint{0.743}} & \numprint{0.036} & \numprint{0.022} & \numprint{0.021} & &\numprint{0.830} & \textbf{\numprint{0.033}} & \textbf{\numprint{0.017}} & \textbf{\numprint{0.010}}\\ 
sk-2005 & &\textbf{\numprint{4822.129}} & \numprint{991.808} & \numprint{114.915} & \numprint{180.908} & &\numprint{5130.964} & \textbf{\numprint{526.402}} & \textbf{\numprint{101.910}} & \textbf{\numprint{123.591}}\\ 
soc-LiveJo. & &\textbf{\numprint{9.710}} & \numprint{1.209} & \numprint{0.469} & \numprint{0.185} & &\numprint{10.181} & \textbf{\numprint{0.828}} & \textbf{\numprint{0.459}} & \textbf{\numprint{0.104}}\\ 
soc-flickr. & &\textbf{\numprint{2.454}} & \numprint{0.332} & \textbf{\numprint{0.184}} & \numprint{0.109} & &\numprint{2.708} & \textbf{\numprint{0.293}} & \numprint{0.265} & \textbf{\numprint{0.053}}\\ 
soc-pokec-. & &\textbf{\numprint{12.544}} & \numprint{0.592} & \textbf{\numprint{0.262}} & \numprint{0.133} & &\numprint{12.614} & \textbf{\numprint{0.434}} & \numprint{0.394} & \textbf{\numprint{0.053}}\\ 
soc-sinawe. & &\textbf{\numprint{45.752}} & \numprint{6.069} & \textbf{\numprint{1.980}} & \textbf{\numprint{1.166}} & &\numprint{57.388} & \textbf{\numprint{5.794}} & \numprint{2.076} & \numprint{1.276}\\ 
twitter-20. & &\textbf{\numprint{473.280}} & \textbf{\numprint{47.060}} & \textbf{\numprint{14.665}} & \textbf{\numprint{4.318}} & &\numprint{479.263} & \numprint{58.081} & \numprint{15.711} & \numprint{4.606}\\ 
uk-2005 & &\textbf{\numprint{309.438}} & \numprint{315.960} & \numprint{109.286} & \numprint{31.520} & &\numprint{313.671} & \textbf{\numprint{270.408}} & \textbf{\numprint{96.742}} & \textbf{\numprint{20.018}}\\ 
webbase-20. & &\textbf{\numprint{208.001}} & \numprint{20.100} & \numprint{13.307} & \numprint{6.376} & &\numprint{222.492} & \textbf{\numprint{5.416}} & \textbf{\numprint{2.819}} & \textbf{\numprint{1.463}}\\ 
wiki-Talk & &\textbf{\numprint{0.638}} & \textbf{\numprint{0.103}} & \textbf{\numprint{0.067}} & \numprint{0.045} & &\numprint{1.101} & \numprint{0.116} & \numprint{0.181} & \textbf{\numprint{0.036}}\\ 
wiki-topca. & &\textbf{\numprint{7.241}} & \numprint{0.592} & \textbf{\numprint{0.251}} & \numprint{0.166} & &\numprint{7.429} & \textbf{\numprint{0.307}} & \numprint{0.301} & \textbf{\numprint{0.088}}\\ 
\cmidrule{1-1}  \cmidrule{3-6} \cmidrule{8-11} 
\textbf{geo. mean} & &\textbf{\numprint{11.271}} & \numprint{0.729} & \numprint{0.291} & \numprint{0.155} & &\numprint{12.259} & \textbf{\numprint{0.588}} & \textbf{\numprint{0.258}} & \textbf{\numprint{0.100}}\\ 
\cmidrule{1-1}  \cmidrule{3-6} \cmidrule{8-11} 
com-Friend. & &\textbf{\numprint{1414.621}} & - & \numprint{25.473} & \numprint{7.258} & &\numprint{1417.997} & - & \textbf{\numprint{16.514}} & \textbf{\numprint{6.802}}\\ 

    \end{tabular}
  }}
  \end{center}
\end{table*}

\begin{table*}
  \caption{Comparing {\aRnP} with {\htwis} for 1 and \numprint{1024} cores in terms of arithmetic mean solution weight and running time. The speed up \textsf{su} is relative to {\htwis}. Best solution and best running time are \textbf{bold}.
    \textsf{roadNet-PA*} refers to the uniformly weighted graph.
  }\label{tab:detailed-aRnP}
  \begin{center}
    \small{\textsf{
    \begin{tabular}{rrrrrrrrrrr}
        \mc{1}{r}{} && \mc{5}{c}{\aRnP} && \mc{2}{c}{\htwis} \\ 
        \mc{1}{r}{} && \mc{2}{c}{1} && \mc{3}{c}{1024} && \mc{2}{c}{1} \\ 
        \mc{1}{r}{} && \mc{1}{c}{$\w(\I)$} & \mc{1}{c}{$t$~[s]} && \mc{1}{c}{$\w(\I)$} & \mc{1}{c}{$t$~[s]} & \mc{1}{c}{su} && \mc{1}{c}{$\w(\I)$} & \mc{1}{c}{$t$~[s]} \\ 
        \cmidrule{1-1}  \cmidrule{3-4}  \cmidrule{6-8}  \cmidrule{10-11} 
      Bump\textunderscore 2911 & &\numprint{27450888.0} & \numprint{64.09} & &\numprint{26249878.2} & \textbf{\numprint{0.10}} & \numprint{100.6} & &\textbf{\numprint{27709157.0}} & \numprint{9.82}\\ 
Cube\textunderscore Coup\textunderscore . & &\numprint{16427017.0} & \numprint{47.52} & &\numprint{15585244.5} & \textbf{\numprint{0.08}} & \numprint{89.5} & &\textbf{\numprint{16625019.0}} & \numprint{7.13}\\ 
Flan\textunderscore 1565 & &\numprint{10823181.0} & \numprint{44.66} & &\numprint{10085800.0} & \textbf{\numprint{0.08}} & \numprint{124.4} & &\textbf{\numprint{10999528.0}} & \numprint{9.60}\\ 
Geo\textunderscore 1438 & &\numprint{13902670.0} & \numprint{31.20} & &\numprint{13112299.2} & \textbf{\numprint{0.06}} & \numprint{88.0} & &\textbf{\numprint{14024450.0}} & \numprint{5.10}\\ 
HV15R & &\numprint{10391309.0} & \numprint{189.79} & &\numprint{9672003.0} & \textbf{\numprint{0.31}} & \numprint{217.4} & &\textbf{\numprint{10495774.0}} & \numprint{68.47}\\ 
Hook\textunderscore 1498 & &\textbf{\numprint{24390472.0}} & \numprint{19.84} & &\numprint{23770132.5} & \textbf{\numprint{0.07}} & \numprint{60.9} & &\numprint{24387753.0} & \numprint{4.24}\\ 
Long\textunderscore Coup\textunderscore . & &\numprint{11025545.0} & \numprint{32.22} & &\numprint{10406824.8} & \textbf{\numprint{0.07}} & \numprint{75.1} & &\textbf{\numprint{11136959.0}} & \numprint{5.18}\\ 
ML\textunderscore Geer & &\textbf{\numprint{8188887.0}} & \numprint{52.94} & &\numprint{7592268.2} & \textbf{\numprint{0.10}} & \numprint{155.1} & &\numprint{7994470.0} & \numprint{15.56}\\ 
Queen\textunderscore 4147 & &\numprint{26077683.0} & \numprint{134.57} & &\numprint{24650320.5} & \textbf{\numprint{0.23}} & \numprint{127.2} & &\textbf{\numprint{26387860.0}} & \numprint{29.73}\\ 
Serena & &\numprint{13619417.0} & \numprint{31.59} & &\numprint{12986078.2} & \textbf{\numprint{0.06}} & \numprint{92.5} & &\textbf{\numprint{13717579.0}} & \numprint{5.51}\\ 
USA-road-d & &\textbf{\numprint{1427296277.0}} & \numprint{20.50} & &\numprint{1425784493.5} & \textbf{\numprint{0.15}} & \numprint{41.4} & &\numprint{1427192290.0} & \numprint{6.07}\\ 
arabic-2005 & &\textbf{\numprint{1477794453.0}} & \numprint{1124.84} & &\numprint{1477026539.5} & \textbf{\numprint{29.24}} & \numprint{34.6} & &\numprint{1477671476.0} & \numprint{1013.04}\\ 
as-skitter & &\textbf{\numprint{124145588.0}} & \numprint{2.70} & &\numprint{123679325.8} & \textbf{\numprint{0.07}} & \numprint{10.6} & &\numprint{124141373.0} & \numprint{0.79}\\ 
asia.osm & &\textbf{\numprint{703469054.0}} & \numprint{6.80} & &\numprint{703328651.2} & \textbf{\numprint{0.05}} & \numprint{35.1} & &\numprint{703455351.0} & \numprint{1.90}\\ 
buddha & &\textbf{\numprint{57544080.0}} & \numprint{1.74} & &\numprint{57260793.2} & \textbf{\numprint{0.03}} & \numprint{12.7} & &\numprint{57508556.0} & \numprint{0.36}\\ 
circuit5M & &\textbf{\numprint{522198078.0}} & \numprint{7.04} & &\numprint{522187318.0} & \textbf{\numprint{0.69}} & \numprint{3.4} & &\numprint{522196403.0} & \numprint{2.32}\\ 
com-youtube & &\textbf{\numprint{90295285.0}} & \numprint{0.66} & &\numprint{90278682.8} & \textbf{\numprint{0.04}} & \numprint{5.5} & &\textbf{\numprint{90295285.0}} & \numprint{0.22}\\ 
delaunay\textunderscore n. & &\textbf{\numprint{650103112.0}} & \numprint{169.81} & &\numprint{644309225.2} & \textbf{\numprint{0.26}} & \numprint{65.3} & &\numprint{648575756.0} & \numprint{16.71}\\ 
dewiki-2013 & &\textbf{\numprint{76968052.0}} & \numprint{42.50} & &\numprint{76523833.0} & \textbf{\numprint{0.26}} & \numprint{45.8} & &\numprint{74720376.0} & \numprint{11.96}\\ 
dielFilter. & &\textbf{\numprint{9386363.0}} & \numprint{3.11} & &\numprint{8419141.8} & \textbf{\numprint{0.09}} & \numprint{69.1} & &\textbf{\numprint{9386363.0}} & \numprint{5.90}\\ 
enwiki-2013 & &\textbf{\numprint{235386036.0}} & \numprint{69.38} & &\numprint{234462905.5} & \textbf{\numprint{0.66}} & \numprint{32.5} & &\numprint{235008063.0} & \numprint{21.42}\\ 
enwiki-2018 & &\textbf{\numprint{322698398.0}} & \numprint{73.62} & &\numprint{321808260.0} & \textbf{\numprint{0.83}} & \numprint{26.0} & &\numprint{321843964.0} & \numprint{21.45}\\ 
enwiki-2022 & &\textbf{\numprint{366989081.0}} & \numprint{113.95} & &\numprint{365866656.5} & \textbf{\numprint{0.98}} & \numprint{36.6} & &\numprint{366202711.0} & \numprint{35.88}\\ 
frwiki-2013 & &\textbf{\numprint{72589474.0}} & \numprint{26.26} & &\numprint{72248835.2} & \textbf{\numprint{0.24}} & \numprint{35.9} & &\numprint{72468246.0} & \numprint{8.63}\\ 
germany.osm & &\textbf{\numprint{683336494.0}} & \numprint{6.86} & &\numprint{683246365.8} & \textbf{\numprint{0.06}} & \numprint{38.9} & &\numprint{683326451.0} & \numprint{2.17}\\ 
itwiki-2013 & &\textbf{\numprint{58309171.0}} & \numprint{14.53} & &\numprint{58113445.5} & \textbf{\numprint{0.16}} & \numprint{29.6} & &\numprint{58284373.0} & \numprint{4.81}\\ 
ljournal-2. & &\numprint{320046313.0} & \numprint{15.38} & &\numprint{319887314.0} & \textbf{\numprint{0.34}} & \numprint{17.9} & &\textbf{\numprint{320046546.0}} & \numprint{6.06}\\ 
roadNet-CA & &\textbf{\numprint{111345705.0}} & \numprint{1.84} & &\numprint{111225024.8} & \textbf{\numprint{0.03}} & \numprint{18.5} & &\numprint{111325524.0} & \numprint{0.52}\\ 
roadNet-PA & &\textbf{\numprint{61698498.0}} & \numprint{0.99} & &\numprint{61615092.5} & \textbf{\numprint{0.03}} & \numprint{9.8} & &\numprint{61688549.0} & \numprint{0.26}\\ 
roadNet-PA* & &\textbf{\numprint{61723052.0}} & \numprint{0.99} & &\numprint{61641848.2} & \textbf{\numprint{0.02}} & \numprint{10.6} & &\numprint{61710606.0} & \numprint{0.25}\\ 
roadNet-TX & &\textbf{\numprint{78587576.0}} & \numprint{1.21} & &\numprint{78506802.0} & \textbf{\numprint{0.03}} & \numprint{11.2} & &\numprint{78575460.0} & \numprint{0.33}\\ 
soc-LiveJo. & &\textbf{\numprint{284022934.0}} & \numprint{13.88} & &\numprint{283074615.0} & \textbf{\numprint{0.31}} & \numprint{20.6} & &\numprint{283922214.0} & \numprint{6.38}\\ 
soc-flickr. & &\textbf{\numprint{129805768.0}} & \numprint{3.00} & &\numprint{129738317.5} & \textbf{\numprint{0.09}} & \numprint{7.7} & &\numprint{129805546.0} & \numprint{0.73}\\ 
soc-pokec-. & &\textbf{\numprint{83963760.0}} & \numprint{60.50} & &\numprint{83453179.0} & \textbf{\numprint{0.18}} & \numprint{125.3} & &\numprint{83920370.0} & \numprint{22.04}\\ 
wiki-Talk & &\textbf{\numprint{235837346.0}} & \numprint{1.24} & &\numprint{235834009.0} & \textbf{\numprint{0.05}} & \numprint{7.3} & &\textbf{\numprint{235837346.0}} & \numprint{0.34}\\ 
wiki-topca. & &\textbf{\numprint{106674375.0}} & \numprint{12.71} & &\numprint{106295980.8} & \textbf{\numprint{0.23}} & \numprint{16.5} & &\numprint{106654405.0} & \numprint{3.73}\\ 
\cmidrule{1-1}  \cmidrule{3-4}  \cmidrule{6-8}  \cmidrule{10-11} 
\textbf{geo. mean} & &\numprint{87643763.3} & \numprint{15.91} & &\numprint{85879255.7} & \textbf{\numprint{0.14}} & - & &\textbf{\numprint{87687741.9}} & \numprint{4.51}\\ 
\cmidrule{1-1}  \cmidrule{3-4}  \cmidrule{6-8}  \cmidrule{10-11} 
com-Friend. & &\textbf{\numprint{3832609030.0}} & \numprint{2793.41} & &\numprint{3821937482.5} & \textbf{\numprint{17.08}} & - & &- & -\\ 
europe.osm & &\textbf{\numprint{3003671463.0}} & \numprint{30.21} & &\numprint{3003318435.2} & \textbf{\numprint{0.16}} & - & &- & -\\ 
hollywood-. & &\textbf{\numprint{64876723.0}} & \numprint{68.68} & &\numprint{64821869.0} & \textbf{\numprint{1.03}} & \numprint{15.6} & &\numprint{45215004.0} & \numprint{16.02}\\ 
imdb-2021 & &\textbf{\numprint{242337085.0}} & \numprint{1.45} & &\numprint{242321347.8} & \textbf{\numprint{0.05}} & \numprint{6.6} & &\numprint{64162924.0} & \numprint{0.32}\\ 
indochina-. & &- & - & &\numprint{498050975.0} & \textbf{\numprint{2.70}} & \numprint{38.0} & &\textbf{\numprint{498132749.0}} & \numprint{102.51}\\ 
it-2004 & &\textbf{\numprint{2714543237.0}} & \numprint{3079.12} & &\numprint{2713294694.8} & \textbf{\numprint{89.08}} & - & &- & -\\ 
sk-2005 & &- & - & &\textbf{\numprint{3226888130.5}} & \textbf{\numprint{134.77}} & - & &- & -\\ 
soc-sinawe. & &\textbf{\numprint{5872160022.0}} & \numprint{61.86} & &\numprint{5872158595.2} & \textbf{\numprint{1.64}} & - & &- & -\\ 
twitter-20. & &\textbf{\numprint{3025416086.0}} & \numprint{502.11} & &\numprint{3024726153.8} & \textbf{\numprint{6.73}} & - & &- & -\\ 
uk-2005 & &\textbf{\numprint{2509963974.0}} & \numprint{550.23} & &\numprint{2509442802.0} & \textbf{\numprint{26.80}} & - & &- & -\\ 
webbase-20. & &\textbf{\numprint{8432593760.0}} & \numprint{673.09} & &\numprint{8431732396.0} & \textbf{\numprint{18.36}} & - & &- & -\\ 

    \end{tabular}
  }}
  \end{center}
\end{table*}

\begin{table*}
  \caption{Comparing {\sRnP} with {\htwis} for 1 and \numprint{1024} cores in terms of arithmetic mean solution weight and running time. The speed up \textsf{su} is relative to {\htwis}. Best solution and best running time are \textbf{bold}.
      \textsf{roadNet-PA*} refers to the uniformly weighted graph.
  }\label{tab:detailed-sRnP}
  \begin{center}
    \small{\textsf{
    \begin{tabular}{rrrrrrrrrrr}
        \mc{1}{r}{} && \mc{5}{c}{\aRnP} && \mc{2}{c}{\htwis} \\ 
        \mc{1}{r}{} && \mc{2}{c}{1} && \mc{3}{c}{1024} && \mc{2}{c}{1} \\ 
        \mc{1}{r}{} && \mc{1}{c}{$\w(\I)$} & \mc{1}{c}{$t$~[s]} && \mc{1}{c}{$\w(\I)$} & \mc{1}{c}{$t$~[s]} & \mc{1}{c}{su} && \mc{1}{c}{$\w(\I)$} & \mc{1}{c}{$t$~[s]} \\ 
        \cmidrule{1-1}  \cmidrule{3-4}  \cmidrule{6-8}  \cmidrule{10-11} 
      Bump\textunderscore 2911 & &\numprint{27450888.0} & \numprint{58.94} & &\numprint{26005784.0} & \textbf{\numprint{0.15}} & \numprint{65.1} & &\textbf{\numprint{27709157.0}} & \numprint{9.82}\\ 
Cube\textunderscore Coup\textunderscore . & &\numprint{16427017.0} & \numprint{44.07} & &\numprint{15522683.0} & \textbf{\numprint{0.13}} & \numprint{53.9} & &\textbf{\numprint{16625019.0}} & \numprint{7.13}\\ 
Flan\textunderscore 1565 & &\numprint{10823181.0} & \numprint{42.04} & &\numprint{10020887.0} & \textbf{\numprint{0.12}} & \numprint{77.7} & &\textbf{\numprint{10999528.0}} & \numprint{9.60}\\ 
Geo\textunderscore 1438 & &\numprint{13902670.0} & \numprint{28.66} & &\numprint{13095179.0} & \textbf{\numprint{0.12}} & \numprint{43.6} & &\textbf{\numprint{14024450.0}} & \numprint{5.10}\\ 
HV15R & &\numprint{10391309.0} & \numprint{185.29} & &\numprint{9676668.0} & \textbf{\numprint{0.45}} & \numprint{153.2} & &\textbf{\numprint{10495774.0}} & \numprint{68.47}\\ 
Hook\textunderscore 1498 & &\textbf{\numprint{24390472.0}} & \numprint{18.67} & &\numprint{23802947.0} & \textbf{\numprint{0.35}} & \numprint{12.2} & &\numprint{24387753.0} & \numprint{4.24}\\ 
Long\textunderscore Coup\textunderscore . & &\numprint{11025545.0} & \numprint{29.75} & &\numprint{10416909.0} & \textbf{\numprint{0.12}} & \numprint{44.9} & &\textbf{\numprint{11136959.0}} & \numprint{5.18}\\ 
ML\textunderscore Geer & &\textbf{\numprint{8188887.0}} & \numprint{49.65} & &\numprint{7538983.0} & \textbf{\numprint{0.16}} & \numprint{98.9} & &\numprint{7994470.0} & \numprint{15.56}\\ 
Queen\textunderscore 4147 & &\numprint{26077683.0} & \numprint{126.84} & &\numprint{24331850.0} & \textbf{\numprint{0.24}} & \numprint{126.4} & &\textbf{\numprint{26387860.0}} & \numprint{29.73}\\ 
Serena & &\numprint{13619417.0} & \numprint{29.34} & &\numprint{12970594.0} & \textbf{\numprint{0.13}} & \numprint{42.9} & &\textbf{\numprint{13717579.0}} & \numprint{5.51}\\ 
USA-road-d & &\textbf{\numprint{1427296277.0}} & \numprint{17.93} & &\numprint{1425683461.0} & \textbf{\numprint{1.19}} & \numprint{5.1} & &\numprint{1427192290.0} & \numprint{6.07}\\ 
arabic-2005 & &\textbf{\numprint{1477794453.0}} & \numprint{1120.01} & &\numprint{1477016564.0} & \textbf{\numprint{36.01}} & \numprint{28.1} & &\numprint{1477671476.0} & \numprint{1013.04}\\ 
as-skitter & &\textbf{\numprint{124145588.0}} & \numprint{2.43} & &\numprint{123677864.0} & \textbf{\numprint{0.54}} & \numprint{1.5} & &\numprint{124141373.0} & \numprint{0.79}\\ 
asia.osm & &\textbf{\numprint{703469054.0}} & \numprint{5.72} & &\numprint{703325680.0} & \textbf{\numprint{0.42}} & \numprint{4.5} & &\numprint{703455351.0} & \numprint{1.90}\\ 
buddha & &\textbf{\numprint{57544080.0}} & \numprint{1.61} & &\numprint{57254592.0} & \numprint{0.48} & \numprint{0.7} & &\numprint{57508556.0} & \textbf{\numprint{0.36}}\\ 
circuit5M & &\textbf{\numprint{522198078.0}} & \numprint{6.01} & &\numprint{522186264.0} & \textbf{\numprint{0.54}} & \numprint{4.3} & &\numprint{522196403.0} & \numprint{2.32}\\ 
com-youtube & &\textbf{\numprint{90295285.0}} & \numprint{0.49} & &\numprint{90271270.0} & \textbf{\numprint{0.11}} & \numprint{2.0} & &\textbf{\numprint{90295285.0}} & \numprint{0.22}\\ 
delaunay\textunderscore n. & &\textbf{\numprint{650103112.0}} & \numprint{161.52} & &\numprint{642176344.0} & \textbf{\numprint{0.82}} & \numprint{20.4} & &\numprint{648575756.0} & \numprint{16.71}\\ 
dewiki-2013 & &\textbf{\numprint{76968052.0}} & \numprint{42.06} & &\numprint{76517735.0} & \textbf{\numprint{1.52}} & \numprint{7.9} & &\numprint{74720376.0} & \numprint{11.96}\\ 
dielFilter. & &\textbf{\numprint{9386363.0}} & \numprint{3.11} & &\numprint{8517051.0} & \textbf{\numprint{0.14}} & \numprint{43.0} & &\textbf{\numprint{9386363.0}} & \numprint{5.90}\\ 
enwiki-2013 & &\textbf{\numprint{235386036.0}} & \numprint{68.29} & &\numprint{234445023.0} & \textbf{\numprint{2.58}} & \numprint{8.3} & &\numprint{235008063.0} & \numprint{21.42}\\ 
enwiki-2018 & &\textbf{\numprint{322698398.0}} & \numprint{72.67} & &\numprint{321782538.0} & \textbf{\numprint{3.19}} & \numprint{6.7} & &\numprint{321843964.0} & \numprint{21.45}\\ 
enwiki-2022 & &\textbf{\numprint{366989081.0}} & \numprint{112.53} & &\numprint{365827362.0} & \textbf{\numprint{3.91}} & \numprint{9.2} & &\numprint{366202711.0} & \numprint{35.88}\\ 
frwiki-2013 & &\textbf{\numprint{72589474.0}} & \numprint{25.79} & &\numprint{72238887.0} & \textbf{\numprint{1.44}} & \numprint{6.0} & &\numprint{72468246.0} & \numprint{8.63}\\ 
germany.osm & &\textbf{\numprint{683336494.0}} & \numprint{5.73} & &\numprint{683241943.0} & \textbf{\numprint{0.24}} & \numprint{9.2} & &\numprint{683326451.0} & \numprint{2.17}\\ 
itwiki-2013 & &\textbf{\numprint{58309171.0}} & \numprint{14.28} & &\numprint{58111245.0} & \textbf{\numprint{1.20}} & \numprint{4.0} & &\numprint{58284373.0} & \numprint{4.81}\\ 
ljournal-2. & &\numprint{320046313.0} & \numprint{14.70} & &\numprint{319882745.0} & \textbf{\numprint{1.24}} & \numprint{4.9} & &\textbf{\numprint{320046546.0}} & \numprint{6.06}\\ 
roadNet-CA & &\textbf{\numprint{111345705.0}} & \numprint{1.65} & &\numprint{111221611.0} & \textbf{\numprint{0.29}} & \numprint{1.8} & &\numprint{111325524.0} & \numprint{0.52}\\ 
roadNet-PA & &\textbf{\numprint{61698498.0}} & \numprint{0.88} & &\numprint{61612393.0} & \textbf{\numprint{0.24}} & \numprint{1.1} & &\numprint{61688549.0} & \numprint{0.26}\\ 
roadNet-PA* & &\textbf{\numprint{61723052.0}} & \numprint{0.89} & &\numprint{61640135.0} & \textbf{\numprint{0.24}} & \numprint{1.0} & &\numprint{61710606.0} & \numprint{0.25}\\ 
roadNet-TX & &\textbf{\numprint{78587576.0}} & \numprint{1.08} & &\numprint{78504664.0} & \textbf{\numprint{0.24}} & \numprint{1.4} & &\numprint{78575460.0} & \numprint{0.33}\\ 
soc-LiveJo. & &\textbf{\numprint{284022934.0}} & \numprint{13.34} & &\numprint{283065322.0} & \textbf{\numprint{1.50}} & \numprint{4.3} & &\numprint{283922214.0} & \numprint{6.38}\\ 
soc-flickr. & &\textbf{\numprint{129805768.0}} & \numprint{2.75} & &\numprint{129736468.0} & \textbf{\numprint{0.48}} & \numprint{1.5} & &\numprint{129805546.0} & \numprint{0.73}\\ 
soc-pokec-. & &\textbf{\numprint{83963760.0}} & \numprint{59.57} & &\numprint{83459341.0} & \textbf{\numprint{1.49}} & \numprint{14.8} & &\numprint{83920370.0} & \numprint{22.04}\\ 
wiki-Talk & &\textbf{\numprint{235837346.0}} & \numprint{0.76} & &\numprint{235829379.0} & \textbf{\numprint{0.07}} & \numprint{4.9} & &\textbf{\numprint{235837346.0}} & \numprint{0.34}\\ 
wiki-topca. & &\textbf{\numprint{106674375.0}} & \numprint{12.39} & &\numprint{106292095.0} & \textbf{\numprint{1.36}} & \numprint{2.7} & &\numprint{106654405.0} & \numprint{3.73}\\ 
\cmidrule{1-1}  \cmidrule{3-4}  \cmidrule{6-8}  \cmidrule{10-11} 
\textbf{geo. mean} & &\numprint{87643763.3} & \numprint{14.61} & &\numprint{85802295.0} & \textbf{\numprint{0.49}} & - & &\textbf{\numprint{87687741.9}} & \numprint{4.51}\\ 
\cmidrule{1-1}  \cmidrule{3-4}  \cmidrule{6-8}  \cmidrule{10-11} 
com-Friend. & &\textbf{\numprint{3832609030.0}} & \numprint{2777.90} & &\numprint{3821723789.0} & \textbf{\numprint{21.05}} & - & &- & -\\ 
europe.osm & &\textbf{\numprint{3003671463.0}} & \numprint{25.36} & &\numprint{3003307414.0} & \textbf{\numprint{0.53}} & - & &- & -\\ 
hollywood-. & &\textbf{\numprint{64876723.0}} & \numprint{68.55} & &\numprint{64822962.0} & \textbf{\numprint{2.83}} & \numprint{5.7} & &\numprint{45215004.0} & \numprint{16.02}\\ 
imdb-2021 & &\textbf{\numprint{242337085.0}} & \numprint{1.00} & &\numprint{242320746.0} & \textbf{\numprint{0.32}} & \numprint{1.0} & &\numprint{64162924.0} & \textbf{\numprint{0.32}}\\ 
indochina-. & &- & - & &\numprint{498050460.0} & \textbf{\numprint{4.25}} & \numprint{24.1} & &\textbf{\numprint{498132749.0}} & \numprint{102.51}\\ 
it-2004 & &\textbf{\numprint{2714543237.0}} & \numprint{3071.69} & &\numprint{2713308072.0} & \textbf{\numprint{109.52}} & - & &- & -\\ 
sk-2005 & &- & - & &\textbf{\numprint{3226591750.0}} & \textbf{\numprint{304.24}} & - & &- & -\\ 
soc-sinawe. & &\textbf{\numprint{5872160022.0}} & \numprint{50.07} & &\numprint{5872157313.0} & \textbf{\numprint{1.35}} & - & &- & -\\ 
twitter-20. & &\textbf{\numprint{3025416086.0}} & \numprint{495.77} & &\numprint{3024686549.0} & \textbf{\numprint{7.60}} & - & &- & -\\ 
uk-2005 & &\textbf{\numprint{2509963974.0}} & \numprint{544.34} & &\numprint{2509446931.0} & \textbf{\numprint{37.85}} & - & &- & -\\ 
webbase-20. & &\textbf{\numprint{8432593760.0}} & \numprint{653.93} & &\numprint{8431738031.0} & \textbf{\numprint{25.20}} & - & &- & -\\ 

    \end{tabular}
  }}
  \end{center}
\end{table*}

\fi

\newpage

\end{document}